\def\codesize{\fontsize{9}{10}}
\definecolor{mycolor}{rgb}{0.94,0.94,0.94}
\lstdefinelanguage{simple}{keywords=
{co,oc,spawn,send,goto,receive,call,atomic,delta,new,data,type,def,case,of,cog,class,interface,extends,implements,if,then,else,for,await,get,Fut,return,skip,while,module,input,
import, export, from, this, destiny, suspend,,adds,modifies,removes,original,productline,features,core,corefeatures,optionalfeatures,after,when,product,hasAttribute,hasMethod,root,extension,component,cost,group,allof,oneof,require,exclude,original,ifin,ifout,opt,now,some,fi,var,and,or,assert,InvocEv,InvocrEv,FutureEv,FetchEv,NewEv,Pre,Post, Require,Ensure, absSource,invariants,functions,predicates,rules}, 
  sensitive=true,
  morecomment=[l]{//},
  morestring=[b]"}
\newcommand{\Simple}[1]{\mbox{\lstinline[language=simple,columns=fullflexible,mathescape=true,inputencoding=latin1,extendedchars,keywordstyle=\bf\ttfamily,basicstyle=\codesize\sffamily]|#1|}}
\lstdefinelanguage{abs}{keywords=
{spawn,send,receive,call,atomic,delta,new,data,type,def,case,of,cog,class,interface,extends,implements,if,then,else,for,await,get,Fut,return,skip,while,module,
import, export, from, this, destiny, suspend, delta,adds,modifies,removes,original,productline,features,core,corefeatures,optionalfeatures,after,when,product,hasAttribute,hasMethod,root,extension,component,cost,group,allof,oneof,require,exclude,original,ifin,ifout,opt,now,some,fi,var,and,or,assert,InvocEv,InvocrEv,FutureEv,FetchEv,NewEv,Pre,Post, Require,Ensure, absSource,invariants,functions,predicates,rules}, 
 sensitive=true,
 morecomment=[l]{//},
 morestring=[b]"}
\DeclareMathOperator{\isfut}{isFut}
\def\promela{\textsc{ProMeLa}}
\newcommand{\TODO}[1]{\textcolor{red}{[TODO] #1}}
\newcommand{\para}{\ensuremath~||~}
\newcommand{\fd}{\ensuremath \mathit{fd}}
\newcommand{\methods}[1]{\method(#1)}
\newcommand{\Gi}{\ensuremath \mathcal{G}}
\newcommand{\stmt}{\ensuremath \mathit{s}}
\newcommand{\guard}{\ensuremath \mathit{guard}}
\newcommand{\scope}{\ensuremath \mathit{sc}}
\newcommand{\last}{\ensuremath \mathrm{last}}
\newcommand{\first}{\ensuremath \mathrm{first}}
\newcommand{\cont}[1]{\ensuremath \mathrm{K}(#1)}
\newcommand{\suspendedcontF}[3]{\ensuremath \mathrm{K}^{#1}(\Simple{await}\,#2;#3)}
\newcommand{\contF}[2]{\ensuremath \mathrm{K}^{#1}(#2)}
\newcommand{\zero}{\ensuremath \circ}
\renewcommand{\zero}{\mbox{\bottle}}
\newcommand{\pc}{pc} % path condition
\newcommand{\pop}{\triangleright} % path condition operator
\newcommand{\multiunion}{{+}}
\newcommand{\subst}[2]{[#1\leftarrow#2]}
\newcommand{\mapComp}[2]{#1 \bullet #2}
\newcommand{\sh}{\ensuremath \textit{sh}}
\newcommand{\traces}[1]{\ensuremath\mathbf{Tr}(#1)}
\newcommand{\ctraces}{\ensuremath\mathbf{CTr}}
\newcommand{\trueSem}{\ensuremath \mathrm{t\!t}}
\newcommand{\falseSem}{\ensuremath \mathrm{f\!f}}
\newcommand{\valB}[2]{\ensuremath \mathrm{val}_{#1}(#2)}
\newcommand{\valD}[1]{\ensuremath \mathrm{val}_{\sigma}(#1)}
\newcommand{\valP}[2]{\ensuremath \mathrm{val}_{\sigma}^{#1}(#2)}
\newcommand{\valBP}[3]{\ensuremath \mathrm{val}_{#1}^{#2}(#3)}
\newcommand{\valDP}[1]{\ensuremath \mathrm{val}_{\sigma'}(#1)}
\newcommand{\valDnoargs}{\ensuremath \mathrm{val}_{\sigma}}
\newcommand{\chopSem}{\mathbin{\ast\ast}}
\newcommand{\update}[3]{#1[#2 \mapsto #3]}
\newcommand{\chopTrSem}[2]{\ensuremath #1 \chopSem #2}
\newcommand{\concatTr}[2]{\ensuremath #1\cdot #2}
\newcommand{\consTr}[2]{\ensuremath #2 \curvearrowright #1}
\DeclareMathOperator{\dom}{dom}
\DeclareMathOperator{\Class}{class}
\DeclareMathOperator{\method}{methods}
\DeclareMathOperator{\field}{fields}
\DeclareMathOperator{\lp}{lookup}
\DeclareMathOperator{\vars}{vars}
\DeclareMathOperator{\symb}{symb}
\newcommand{\symbvars}[1]{\symb(#1)}
\newcommand{\concreteop}[2]{#1\:\underline{op}\:#2}
\newcommand{\fields}[1]{\field(#1)}
\newcommand{\lookup}[1]{\lp(#1)} 
\newcommand{\clookup}[1]{\Class(#1)}
\newcommand{\evTrioP}[2]{\textit{ev}_{#1}({#2})} % general trio event with parameters
\newcommand{\evTrioPV}[3]{\textit{ev}_{#1}^{#3}({#2})} % general trio event with parameters + new var
\newcommand{\inpEvpV}[3]{\textit{inpEv}_{#1}^{#3}(#2)} % + new var
\newcommand{\inpEv}[1]{\textit{inpEv}(#1)}
\newcommand{\invEv}[2]{\textit{invEv}_{#1}(#2)}
\newcommand{\invEvV}[3]{\textit{invEv}_{#1}^{#3}(#2)} % + new var
\newcommand{\invREv}[2]{\textit{invREv}_{#1}(#2)}
\newcommand{\invREvV}[3]{\textit{invREv}_{#1}^{#3}(#2)} % + new var
\newcommand{\sendEv}[4]{\textit{sendEv}_{#1}(#2, #3, #4)}
\newcommand{\receiveEv}[4]{\textit{receiveEv}_{#1}(#2, #3, #4)}
\newcommand{\receiveEvV}[5]{\textit{receiveEv}_{#1}^{#5}(#2, #3, #4)} % + new var
\newcommand{\spawnEvs}[3]{\textit{spawnEv}_{#1}(#2, #3)}
\newcommand{\spawnEvsV}[4]{\textit{spawnEv}_{#1}^{#4}(#2, #3)} % + new var
\newcommand{\newEvs}[2]{\textit{newEv}_{#1}(#2)}
\newcommand{\newEvsV}[3]{\textit{newEv}_{#1}^{#3}(#2)}
\newcommand{\anotherevp}{\textit{ev}'^{P'}(\many{v'})} % general tagged event
\newcommand{\invEvp}[3]{\textit{invEv}_{#1}^{#2}(#3)}  % sigma P m
\newcommand{\invREvp}[3]{\textit{invREv}_{#1}^{#2}(#3)}  % sigma P m
\newcommand{\sendEvp}[5]{\textit{sendEv}_{#1}^{#2}(#3, #4, #5)} % sigma P value target Mid
\newcommand{\receiveEvp}[5]{\textit{receiveEv}_{#1}^{#2}(#3, #4, #5)} % sigma P x source Mid
\newcommand{\spawnEvp}[4]{\textit{spawnEv}_{#1}^{#2}(#3, #4)} % sigma P m NewPid
\newcommand{\newEvp}[3]{\textit{newEv}_{#1}^{#2}(#3)}
\newcommand{\compEv}[2]{\textit{compEv}_{#1}(#2)} %sigma f e
\newcommand{\compREv}[2]{\textit{compREv}_{#1}(#2)} %sigma f e
\newcommand{\compREvV}[3]{\textit{compREv}_{#1}^{#3}(#2)} %sigma V f e
\newcommand{\compEvp}[3]{\textit{compEv}_{#1}^{#2}(#3)} %sigma P f v
\newcommand{\compREvp}[3]{\textit{compREv}_{#1}^{#2}(#3)} %sigma P f v
\newcommand{\ev}[1]{\textit{ev}(#1)} % general event
\newcommand{\evp}[2]{\textit{ev}^{#1}(#2)} % general event
\newcommand{\stateorevent}{t}
\newcommand{\identifier}{\textit{identifier}} 
\newcommand{\wfT}[2]{\ensuremath \mathit{wf}_{\text{#1}}(#2)}
\newcommand{\wf}[1]{\ensuremath \mathit{wf}(#1)}
\newcommand{\num}[2]{\ensuremath \#_{#1}({#2})}
\newcommand{\many}[1]{\overline{#1}}
\newcommand{\MId}{\textit{MId}}
\newcommand{\PId}{\textit{PId}}
\newcommand{\OId}{\textit{OId}}
\newcommand{\FId}{\textit{FId}}
\newcommand{\Omain}{\ensuremath {o_{\mathit{main}}}}
\newcommand{\Fmain}{\ensuremath {f_{\mathit{main}}}}
\newcommand{\evalruleN}[2]{%
  \begin{equation}
    \label{eq:#1}
    \begin{array}{@{}r@{\,}c@{\,}l@{}}
      #2
    \end{array}
  \end{equation}
}
\newcommand{\gcondrule}[3]{ 
  \begin{equation}
   \label{eq:rule#1}
  \begin{array}{c} 
  %  \textsc{ ({#1})} \\[1pt] 
    #2 \\[1pt] 
    \hline\\[-7pt]
    #3 
  \end{array} 
 \end{equation}
    }
\newcommand{\grule}[2]{ 
  \begin{equation}
   \label{eq:rule#1}
  \begin{array}{c} 
  %  \textsc{ ({#1})} \\[1pt] 
    #2 \\[1pt] 
  \end{array} 
 \end{equation}
    }
\title{LAGC Semantics of Concurrent Programming Languages} \author{Crystal Chang Din \inst{1} \and Reiner
  H\"ahnle \inst{2} \and Ludovic Henrio \inst{3} \and Einar Broch Johnsen \inst{4} \and \\ Ka I
  Pun \inst{5} \and S. Lizeth Tapia Tarifa \inst{4} }
\institute{
Department of Informatics, University of Bergen, Norway\\
\email{Crystal.Din@uib.no}
\and
Department of Computer Science, Technische Universit\"at
  Darmstadt, Germany  \\
  \email{haehnle@cs.tu-darmstadt.de} 
\and 
CNRS, Inria, LIP, Univ Lyon, EnsL, UCBL, France\\
\email{ludovic.henrio@cnrs.fr}
 \and
  Department of Informatics, University of Oslo, Norway\\
  \email{\{einarj,sltarifa\}@ifi.uio.no} 
\and
Department of Computing, Western Norway University of Applied Sciences,  Norway   \\
  \email{Violet.Ka.I.Pun@hvl.no}
  }
  \titlerunning{LAGC Semantics}
  \authorrunning{C.~C. Din \emph{et al.}}
\begin{document}

\pagestyle{headings} 

\newcommand{\red}[1]{\textcolor{red}{#1}}
\newcommand{\emp}{\circ}

\newif\ifadvancedActors
\advancedActorsfalse

 \maketitle 

\begin{abstract}
  Formal, mathematically rigorous programming language semantics are
  the essential prerequisite for the design of logics and calculi that
  permit automated reasoning about concurrent programs.  We propose a
  novel modular semantics designed to align smoothly with program
  logics used in deductive verification and formal specification of
  concurrent programs. Our semantics separates local evaluation of
  expressions and statements performed in an abstract, symbolic
  environment from their composition into global computations, at
  which point they are concretised. This makes incremental addition of
  new language concepts possible, without the need to revise the
  framework.  The basis is a generalisation of the notion of a program
  trace as a sequence of evolving states that we enrich with event
  descriptors and trailing continuation markers. This allows to
  postpone scheduling constraints from the level of local evaluation
  to the global composition stage, where well-formedness predicates
  over the event structure declaratively characterise a wide range of
  concurrency models.  We also illustrate how a sound program logic
  and calculus can be defined for this semantics.
\end{abstract}

\iffalse

\newpage

\tableofcontents

\fi

\section{Introduction}
\label{sec:introduction}

We propose a trace semantics that scales flexibly to a range of
concurrent, imperative programming paradigms, as found, for example,
in C, \textsc{Java}, \promela\ \cite{Holzmann03}, Actors \cite{HBS73},
or Active Objects~\cite{ActiveObjects17}.  Specifically, given a
program~$P$, running on a collection of communicating processors or
cores, we want to obtain the set of all global system traces that~$P$
can produce starting from some initial state.
Our overall goal is to provide a semantics that aligns well with
contract-based deductive verification \cite{HaehnleHuismann19}.  To
this end, it is essential that the semantics cleanly separates local
computations (one statement on one core) from global ones; i.e., there
must be a suitable notion of \emph{composition} that can generate the
global traces from the local ones.  It is known
\cite{brookes96ic,Mosses06} that compositionality in denotational
semantics shifts the flavour of the semantics in the direction of
operational semantics~\cite{Plotkin04}, by transitioning from state
transformers \cite{Hoare69,apt09vscp} to traces of execution states
\cite{hennessy79mfcs,Brookes02,Brookes07} or communication events
\cite{Dahl77,hoare85,jeffrey05esop,dinJLA12}. We propose a hybrid
notion of trace that combines states and events, and show that the
ensuing semantic framework is well-suited to compositionally capture
different forms of concurrency for imperative languages.

The envisaged semantics should also be \emph{modular} in the following
sense: it must be possible to support a new language construct without
the need to revise the whole framework. Ideally, there is a single
evaluation rule for each construct of the target language that can be
applied \emph{independently} from all other rules: the recursive call
to evaluate subsequent statements is not inside the semantic
evaluation of each statement.  This is not only a good match with
deductive verification rules of program logics
\cite{HaehnleHuismann19}, but also with formal specification languages
for concurrent programs that have a trace semantics
\cite{SistlaClarke85,BeckertBruns13,DHJPT17,SH19b,Kamburjan19}.

A trace semantics for a given target language, satisfying the
requirements sketched above, can be defined in three phases:
\begin{enumerate}
\item\label{item:phase1} Declare \emph{local} evaluation rules for
  each syntax construct of the target language.
\item\label{item:phase2} Declare \emph{composition} rules that combine
  local evaluation and process creation into global traces. Scheduling
  is expressed declaratively as well-formedness constraints over
  traces.
\item\label{item:phase3} Define the \emph{generation} of all global
  traces with the help of the composition rules from a suitable
  initial configuration.
\end{enumerate}

To achieve the desired degree of modularity, we generalise the
standard notion of a program trace~\cite{HuthRyan04}, i.e.\ a sequence
of evolving system configurations, starting in some initial state. We
make two generalisations. The first is that local states $\sigma$ can
be abstract. This means the value $\sigma(x)$ of a memory location $x$
is permitted to be unspecified. One can think of an abstract value
$\sigma(x)$ as a Skolem constant $x_0$ whose interpretation is
determined by an external context. Alternatively, think of
\emph{symbolic execution} \cite{Burstall74,King76}, where the value of
a memory location $x$ may be a \emph{symbolic} term $x_0$ representing
an unknown value. States containing abstract values are called
\emph{symbolic}, otherwise they are called \emph{concrete}.

Symbolic states allow us to evaluate local code independently of its
call context. For example, when evaluating the semantics of a
statement $s$ that receives a message from another process, then the
value of that message cannot possibly be known independently of the
call context. In this case, the local evaluation of $s$ can be
expressed in a trace with symbolic states.  These symbolic states are
\emph{concretised} when the global context of the local computations
(here, the sender) is resolved, i.e.\ in the composition rules during
phase~\eqref{item:phase2} above. Hence, the resulting global traces
are concrete. Thus, the name of our semantic framework: \emph{locally
  abstract, globally concrete} (LAGC).

The second generalisation of traces concerns \emph{scheduling}:
concurrency models differ in the exact locations at which a local
computation can be interrupted (aka preempted) and how exactly the
computation is continued afterwards, i.e.\ which scheduling decision
is taken next. To achieve maximum modularity, we do not build
scheduling into a fixed set of rules. Instead, we use
\emph{continuation markers} to specify when scheduling and preemption
is possible.

As a result, local traces are not merely abstract, but as well contain
\emph{events} used in the composition rules as interaction
points. With the event mechanism, various concurrency models can be
defined easily in two parts: first, ensure that local evaluation rules
generate suitable synchronisation events. Second, define a
\emph{well-formedness} predicate on concrete traces restricting the
global traces that can be generated to those reflecting the targeted
concurrency model. Now it is sufficient to add the well-formedness
predicate as a premise to each composition rule.  Events turn out to
be a flexible and intuitive mechanism, which can be used to express a
range of dependencies between processes beyond synchronous
communication (for example, object generation).

An LAGC-style trace semantics was pioneered for the active object
language \textsc{ABS} \cite{DHJPT17}. Here we show that, due to its
modularity, LAGC semantics constitutes a general \emph{semantic
  framework} for a wide range of concurrent programming models: it is
easy to add new syntactic constructs and to accommodate different
concurrency paradigms. We will demonstrate this as follows: starting
with a simple \Simple{WHILE}-language, we define LAGC semantics for an
increasingly complex series of languages by successively adding new
features, sequential as well as parallel ones. In the end we cover a
representative set of language features and concurrency models and
instantiate our semantic framework to two quite different concurrent
programming languages.

The paper is organised as follows: In Section~\ref{sec:basics} we set up
the formal framework that the LAGC semantics is based upon. As explained
above, we need symbolic states and traces that may contain abstract
values for memory locations. We also need a concretisation operator
that instantiates a symbolic trace to match a concrete context. In
addition, we equip traces with states and continuation markers.
Section~\ref{sec:while} introduces the LAGC framework along
phases~\eqref{item:phase1}--\eqref{item:phase3} for a
\Simple{WHILE}-language.
We substantiate the claim made above, that an LAGC semantics is a good
match for deductive verification calculi, by defining in
Section~\ref{sec:dynamic-logic-simple} a program logic for
\Simple{WHILE} with a concise soundness proof.
In Section~\ref{sec:interleaving}, we gradually extend the semantics to
local parallelism (with atomicity), local memory, and procedure
calls.
In Section~\ref{sec:multicore}, we take the step to multiple processors
that send and receive messages among each other. We show that a wide
range of communication patterns can be intuitively and declaratively
characterised via well-formedness, including synchronous and
asynchronous communication, bounded and unbounded FIFO, as well as
causality.
Section~\ref{sec:case-studies} instantiates the LAGC framework to the
rather different concurrency models found in the languages \promela\
and ABS, respectively. For the latter, we need to add objects and
futures, as well as to change the interleaving semantics, which turns
out to be easily possible.
Related work is discussed in Section~\ref{sec:related-work}, while
Section~\ref{sec:future-work-concl} considers future directions for work
and concludes.

\section{Basics}
\label{sec:basics}

\subsection{States}
\label{sec:states}

We assume given standard basic types, including integers, Booleans and
process identifiers, with standard value domains and associated
operators. In addition, we allow starred expressions, representing
unknown, \emph{symbolic} values.

\begin{figure}[t]\centering
  $\begin{array}[t]{r@{\hspace{2pt}}r@{\hspace{2pt}}l}
x \in  \mathit{Var} & ::= & \identifier\\
v \in \mathit{Val} &::=& \trueSem \mid \falseSem \mid 0 \mid 1 \mid ...\\       
op \in \mathit{Op} & ::= & ==~\mid~<~\mid~>~\mid~\leq~\mid~\geq~\mid
                            \lit*{+} \mid \lit*{-} \mid
              \lit*{*} \mid \lit*{/} \mid \ldots\\
e \in \mathit{Exp}  &::=&  x \mid v \mid e\, op\,  e \\\
se\in \mathit{Sexp} &::=& e \mid \ast 
     \end{array}$
\caption{\label{fig:basiclang}Syntax of expressions.}
\end{figure}

\begin{definition}[Variables, Values, (Starred) Expressions]
  Let $\mathit{Var}$ be a set of program variables, $\mathit{Val}$ a
  set of values, and $\mathit{Op}$ a set of operators, with typical
  elements $x$, $v$, and $op$, respectively.  The sets $\mathit{Val}$
  and $\mathit{Op}$ include the values and operators of the basic
  types (see Figure~\ref{fig:basiclang}).
  The set $\mathit{Exp}$ contains expressions, with typical element
  $e$, obtained from variables, values, and by applying operators to
  expressions.
  The set $Sexp$ contains starred expressions, with typical element
  $se$, obtained by extending expressions with an additional symbol
  $\ast$.
\end{definition}

We assume that all expressions are well-typed; i.e., in expressions,
operators are only applied to subexpressions which can be reduced to
values of appropriate types.  Let $\concreteop{}{}$ denote an
evaluation function for the operators $\mathit{op}$ defined over the
values of the basic types, such that $\concreteop{v\:}{\:v'}$ is a
value in the basic types. A Boolean expression is an expression that
evaluates to a Boolean value when its arguments are values of the
basic types, and similarly for expressions of other basic types.
Overline notation is used for lists of different syntactic categories;
e.g., $\many{v}$ and $\many{e}$ represent lists of values and
expressions, respectively.  Let $\vars(e)$ denote the set of variables
in an expression $e$, which has a straightforward inductive
definition.

We now define computation states.  Usually, states are mappings from
variables to concrete values. To permit \emph{symbolic} expressions
(i.e., expressions containing variables) occurring as \emph{values} in
states, the starred expression $\ast$ is used to represent a value for
symbolic variables that cannot be further evaluated. The $\ast$ symbol
does not occur in programs, it is part of the semantic domain.

\begin{definition}[Symbolic State, State Update]
  A \emph{symbolic state} $\sigma$ is a partial mapping
  \[\sigma: \mathit{Var} \to \mathit{Sexp} \]
  from variables to starred expressions. The notation
  $\sigma[x\mapsto se]$ expresses the \emph{update} of state $\sigma$
  at $x$ with expression $se$:
  \[
    \sigma[x\mapsto se](y)=\left\{
      \begin{array}{ll}
        se & x=y\\
        \sigma(y)  & x\neq y \ .
      \end{array}
      \right.
  \]  
\end{definition}

In a symbolic state, a \emph{symbolic variable} is defined as a
variable bound to an unknown value, represented by the starred
expression $\ast$. Symbolic variables play a different role than
ordinary variables: they act as parameters, relative to which a local
computation is evaluated. To distinguish them syntactically, we adopt
the convention of using capital letters for symbolic variables as much
as possible. Note that the set of symbolic and non-symbolic variables
are not two distinct categories and a variable that is symbolic can
become non-symbolic after substitution (see
Example~\ref{exmpl:wf-state} below).

\begin{definition}[Symbolic Variable]\label{def:symbolic-variables}
  \emph{Symbolic variables} in a state $\sigma$ are variables mapped
  to $\ast$:
  \[\symbvars{\sigma}=\{X\in \mathit{Var} \mid\, \sigma(X)=\ast\}\ .\]
\end{definition}

We will also need the concept of the extension of a state:

\begin{definition}[State Extension]\label{def:state-extension}
  We say that a state $\sigma'$ \emph{extends} a state $\sigma$ if
  \begin{enumerate*}[label=(\roman*)]
  \item $\dom(\sigma)\subseteq\dom(\sigma')$ and
  \item $\sigma(x)=\sigma'(x)$ for all $x\in\dom(\sigma)$.
  \end{enumerate*}
  We overload the subset symbol and write $\sigma\subseteq\sigma'$.
\end{definition}

\begin{example}
  \label{exmpl:state}
  Consider a state
  $\sigma_0=[x_0\mapsto Y_0+w_0,\,Y_0\mapsto\ast,\,w_0\mapsto
  42,\,x_1\mapsto Y_1]$.
  Observe that (1)~the expressions in the range of $\sigma_0$ can be
  simplified
  and (2) there are dangling references, such as $Y_1$, not in the
  domain of $\sigma_0$.
\end{example}

The example shows that symbolic states are slightly too general for
our purpose, motivating the following definition, which constrains the
variables that may occur in value expressions of states to symbolic
variables:
  
\begin{definition}[Well-Formed State]
  A state $\sigma$ is \emph{well-formed} if it fulfils the following
  condition:
  \[
    \{ x\in \vars(\sigma(y))\mid y\in\dom(\sigma)\} \subseteq\symbvars{\sigma}\enspace.
  \]
\end{definition}

A well-formed state $\sigma$ is \emph{concrete} if
$\symbvars{\sigma}=\{\}$. For a concrete, well-formed state $\sigma$
and all $x\in\dom(\sigma)$, there is a value $v$ such that
$\sigma(x)=v$.

\begin{example}
  \label{exmpl:wf-state}
  The state $\sigma_0$ in Example~\ref{exmpl:state} can be turned into
  a well-formed state $\sigma_1$ by simplifying the expression
  $Y_0+w_0,$ and binding $Y_1$ to a star, obtaining
  $\sigma_1=[x_0\mapsto Y_0+42,\,Y_0\mapsto\ast,\,w_0\mapsto
  42,\,x_1\mapsto Y_1,\,Y_1\,\mapsto\ast]$.  We have that
  $\sigma_1=\sigma_0[x_0\mapsto Y_0+42,\,Y_1\mapsto\ast]$ and
  $\symbvars{\sigma_1}=\{Y_0,\,Y_1\}$. Let
  $\sigma_2=\sigma_1[Y_0\mapsto 3,\,Y_1\mapsto 2]$. After
  simplification, $\sigma_2$ is a concrete, well-formed state that can
  be written as
  $\sigma_2=[x_0\mapsto 45,\,Y_0\mapsto 3,\,w_0\mapsto 42,\,x_1\mapsto
  2,\,Y_1\mapsto 2]$.
\end{example}

Henceforth, all states are assumed to be well-formed.
\emph{We always assume states to be simplified by propagation of
  concrete values}, similar to $\sigma_2$ in
Example~\ref{exmpl:wf-state}, otherwise non-symbolic variables might
occur as values.  Symbolic states, i.e.\ states with symbolic
variables, are close to states in symbolic execution as used in path
exploration \cite{Godefroid12}, but simplified and with an additional
restriction (Definition~\ref{def:wellformed-trace} below).

\subsection{Evaluation}
\label{sec:evaluations}

The evaluation of expressions in the context of a symbolic state
reduces known variables to their values and keeps the symbolic
variables inside the expression, reducing an expression as much as
currently possible. The evaluation function is defined as follows:

\begin{definition}[Evaluation Function]
  Let $\sigma$ be a symbolic state.  The evaluation function
  $\valDnoargs: Exp\rightarrow Exp$ for expressions in the context of
  $\sigma$ is defined inductively:
  \[
    \begin{array}{r@{\,}c@{\,}ll}
      \valD{x} & = & \left\{
                     \begin{array}{ll}
                       x & \text{ if } \sigma(x)=*\\
                       \sigma(x) &\text{ otherwise}
                     \end{array} \right.\\
      \valD{v} & = & v\\
      \valD{e_1 \:op\: e_2} & = & \left\{
                                  \begin{array}{ll}
                                    \concreteop{\valD{e_1}}{\valD{e_2}} & \text{ if } \valD{e_1}\in \mathit{Val} \text{ and } \valD{e_2}\in \mathit{Val}\\[.5ex]
                                    \valD{e_1} \: op\:  \valD{e_2} & \text{ otherwise}
                                  \end{array}\right.
    \end{array}
  \]
\end{definition}

In the following we will ensure that $\vars(e)\subseteq\dom(\sigma)$
holds, such that $\valDnoargs$ is a total function.

\begin{example}
  Using state $\sigma_1$ of Example~\ref{exmpl:wf-state}, we evaluate
  $\valB{\sigma_1}{x_0+Y_0+Y_1}=(Y_0+42)+Y_0+Y_1$.
\end{example}

Let $\valD{\many{e}}$ denote the element-wise application of the
evaluation function to all expressions in the list $\many{e}$, and
likewise for sets.

\subsection{Traces and Events}
\label{sec:traces}

Traces are sequences over states and structured events. The presence
of events makes it easy to ensure global properties of traces via
well-formedness conditions over events.  Since states may be symbolic,
traces will be symbolic as well and it is necessary to constrain
traces by symbolic path conditions.  We start with a general
definition of events. Their specific structure will be added later.

\begin{definition}[Event Marker]
  Let $\ev{\many{e}}$ be an \emph{event marker} over expressions
  $\many{e}$.
\end{definition}

\begin{definition}[Path Condition]
  A \emph{path condition} $\pc$ is a finite set of Boolean
  expressions. If $\pc$ contains no variables, then we can assume it
  to be fully evaluated; i.e.\ it is either $\emptyset$,
  $\{\falseSem,\,\trueSem\}$, $\{\falseSem\}$, or $\{\trueSem\}$. A
  fully evaluated path condition is \emph{consistent} if and only if
  it does not contain $\falseSem$.  
\end{definition}

For any concrete state $\sigma$, $\valD{pc}$ is a path condition with
no variables that can be fully evaluated.
  
\begin{definition}[Conditioned Symbolic Trace]
  A \emph{symbolic trace} $\tau$ is defined co-inductively by the
  following rules ($\varepsilon$ denotes the empty trace):
  \[
    \begin{array}{l@{\;::=\;}l}
      \tau & \varepsilon~|~\consTr{t}{\tau}\\
      \stateorevent & \sigma~|~\ev{\many{e}}
    \end{array}
  \]
  A \emph{conditioned symbolic trace} has the form $\pc\pop\tau$, where
  $\pc$ is a path condition and $\tau$ is a symbolic trace. If $\pc$
  is consistent, we simply write $\tau$ for $\pc\pop\tau$.
\end{definition}

Traces can be finite or infinite. For simplicity, let
$\langle\sigma\rangle$ denote the \emph{singleton trace}
$\consTr{\sigma}{\varepsilon}$.  \emph{Concatenation} of two traces
$\tau_1$, $\tau_2$ is written as $\concatTr{\tau_1}{\tau_2}$ and only
defined when $\tau_1$ is finite.  The final state of a non-empty,
finite trace $\tau$ is selected with $\last(\tau)$, the first state of
a non-empty trace $\tau$ with $\first(\tau)$.

\begin{example}
  \label{exmpl:trace}
  Continuing Example~\ref{exmpl:wf-state}, we define a conditioned
  symbolic trace
  $\pc_0\pop\tau_0=\{Y_0>Y_1\}\pop \consTr{\sigma_1[x_0\mapsto
    17]}{\langle\sigma_1\rangle}$.
\end{example}

\paragraph{Sequential Composition}

It is important that traces can semantically model the sequential
composition of program statements.  Assume that $\tau_1$ is a trace of
a statement $r$ and $\tau_2$ a trace of another statement $s$.  To
obtain the trace corresponding to the sequential composition of $r$
and $s$, 
the traces corresponding to $r$ and $s$ should be concatenated, but the last state of the first trace and the first state of the second trace are generally identical. We generalize such a concatenation in case the states are not exactly the same:
 to concatenate two traces $\tau_1$ and $\tau_2$
 the first state of $\tau_2$ should be an extension of the last state of $\tau_1$, but the resulting trace should only contain the largest of the two states.
This motivates the \emph{semantic chop}
$\chopTrSem{}{}$ on traces (following \cite{NakataUustalu15}, who in
turn were inspired by interval temporal logic \cite{HMM83}):

\begin{definition}[Chop on Traces]
  Let $pc_1, pc_2$ be path conditions and $\tau_1,\,\tau_2$ be symbolic
  traces, and assume that $\tau_1$ is a non-empty, finite trace. The
  semantic chop $\chopTrSem{(\pc_1\pop\tau_1)}{(\pc_2\pop\tau_2)} $ is
  defined as follows:
  \begin{equation*}
    \chopTrSem{(\pc_1\pop\tau_1)}{(\pc_2\pop\tau_2)} =
    (\pc_1\cup\pc_2)\pop\concatTr{\tau}{\tau_2}
    ~\mbox{where}~\tau_1=\consTr{\sigma}{\tau},\
    \tau_2=\concatTr{\langle\sigma'\rangle}{\tau'}~\mbox{and}~\sigma\subseteq\sigma'\ .
  \end{equation*}
\end{definition}

For infinite and empty traces $\tau_1$, the chop operator is
undefined.  Whenever the final state of the first trace cannot be
extended to the first state of the second trace, the operator is
undefined.

The definition can be specialised to traces without path conditions in
the obvious way: $\chopTrSem{\tau_1}{\tau_2}$ is
$\chopTrSem{(\emptyset\pop\tau_1)}{(\emptyset\pop\tau_2)}$.

\paragraph{Traces with Events.}

Events will be uniquely associated with the state in a trace at which
they occurred. The events do not update the values in a state, but
they may extend a state with fresh symbolic variables. To this aim, an
event $\ev{\many{e}}$ is inserted into a trace after a state~$\sigma$
and the state is then augmented by a set $\many V$ of symbolic
variables. The notation we use for this operation is an \emph{event
  trace} $\evTrioPV{\sigma}{\many{e}}{\many V}$ of length three:
\[
  \evTrioPV{\sigma}{\many{e}}{\many V} =
  \consTr{\consTr{\sigma'}{\ev{\valB{\sigma'}{\many{e}}}}}{\langle\sigma\rangle}\ \text{ where } \sigma'=\sigma[\many V\,\mapsto\, \ast].
\]
Given a trace of the form $\consTr{\sigma}{\tau_1}$ and event
$\ev{\many{e}}$, appending the event is achieved by the trace
$\tau_1\cdot\evTrioPV{\sigma}{\many{e}}{\many V}$.  The preceding
definition ensures that events in traces are ``choppable'';
$\chopTrSem{\tau}{\evTrioPV{\sigma}{e}{\many V}}$ is well-defined
whenever $\last(\tau)=\sigma$.  If $\many V$ is empty then the state
is unchanged, in this case we omit the set of symbolic variables:
\[
\evTrioP{\sigma}{\many{e}} = \evTrioPV{\sigma}{\many{e}}{\emptyset}
\]

\begin{example} \label{exmpl:trace-event} Event traces can be inserted
  at the middle of a trace.  To insert an event $\ev{Y_0}$ that does
  not introduce symbolic variables at $\sigma_1$ to trace $\tau_0$ in
  Example~\ref{exmpl:trace}, we use the event trace
  $\evTrioP{\sigma_1}{Y_0} =
  \consTr{\consTr{\sigma_1}{\ev{\valB{\sigma_1}{Y_0}}}}{\langle\sigma_1\rangle}=\consTr{\consTr{\sigma_1}{\ev{Y_0}}}{\langle\sigma_1\rangle}
  $.
  This results in the trace:
  $\tau_2=\{Y_0>Y_1\}\pop\consTr{\sigma_1[x_0\mapsto17]}
  {\consTr{\consTr{\sigma_1}{\ev{Y_0}}} {\langle\sigma_1\rangle}}$.
  To insert an event $\ev{Y_2}$ that introduces the symbolic variable
  $Y_2$, we use the event trace
  $\evTrioPV{\sigma_1}{Y_2}{\{Y_2\}} =
  \consTr{\consTr{\sigma_1[Y_2\,\mapsto\,
      \ast]}{\ev{Y_2}}}{\langle\sigma_1\rangle}$.
  The trace in Example~\ref{exmpl:trace} results in the trace:
  $$\tau_2'=\{Y_0>Y_1\}\pop\consTr{\sigma_1[Y_2\,\mapsto\, \ast,
    x_0\mapsto 17]} {\consTr{\consTr{\sigma_1[Y_2\,\mapsto\,
        \ast]}{\ev{Y_2}}} {\langle\sigma_1\rangle}}.$$
\end{example}

\paragraph{Well-Formed Traces.}

Similar to the values of well-formed states, the expressions in events
and path conditions of a well-formed trace should only contain
symbolic variables. This requires all states in a trace to agree upon
which variables are symbolic. We also impose an additional
well-formedness condition on events: any event occurring in a trace
must be immediately preceded and followed by the same state $\sigma$;
i.e., it can be obtained by inserting an event trace at $\sigma$.
This implies that a trace always starts and ends with a state, never
with an event. Well-Formed traces are formalised as follows:

\begin{definition}[Well-Formed Trace]
  \label{def:wellformed-trace}
  Let $\pc\pop\tau$ be a conditioned symbolic trace and let
  $V=\bigcup_{\sigma\in\tau} \symbvars{\sigma}$. The trace
  $\pc\pop\tau$ is well-formed if the following conditions hold:
  \begin{eqnarray}
    & \forall \sigma \in \tau.\, \sigma\text{ is well-formed} \label{eq:wft0}\\
    & \forall \sigma \in \tau.\, (\dom(\sigma) \setminus \symbvars{\sigma})\cap V= \emptyset \label{eq:wft1}\\
    & \vars(\pc)\subseteq V \label{eq:wft2} \\ 
    & \forall \ev{\many e} \in \tau.\, \vars(\many e) \subseteq V \label{eq:wft3} \\
    & \forall \ev{\many e}, \tau_1, \tau_2.\,\left(\tau=\concatTr{\consTr{\ev{\many e}}{\tau_1}} {\tau_2} \implies \exists \sigma.\,\last(\tau_1) =  \sigma \land \first(\tau_2) = \sigma\right) \label{eq:wft4}
  \end{eqnarray}
\end{definition}

Equation~\eqref{eq:wft1} states that a variable that is symbolic in a
state cannot be non-symbolic in another state of the
trace. Equations~\eqref{eq:wft2}--\eqref{eq:wft3} ensure that any variable
to occur in a path condition or event is a symbolic variable from some
state. Equation~\eqref{eq:wft4} guarantees that any event in a trace is
surrounded by identical states.

\begin{example}\label{exmpl:wf-trace}
  The traces $\tau_2$ and $\tau_2'$ in Example~\ref{exmpl:trace-event}
  have well-formed states. Only symbolic variables occur in path
  conditions and events, and the events were added by inserting event
  traces. Hence, they are well-formed.
\end{example}

\begin{definition}[Concrete Traces]
  A \emph{concrete trace} is a well-formed trace containing only
  concrete states and events, as well as a fully evaluated, consistent
  path condition.  We use $sh$ for concrete traces, where the letters
  stand for ``shining'' trace.
\end{definition}

\begin{example}
  \label{trace-concr}
  Let
  $\sigma_3=[x_0\mapsto 45,\,Y_0\mapsto 3,\,w_0\mapsto 42,\,x_1\mapsto
  2,\,Y_1\mapsto 2]$, then a concrete trace of length two is
  $\tau_3=\consTr{\sigma_3[x_0\mapsto
    17]}{\langle\sigma_3\rangle}$. The path condition $Y_0>Y_1$ is
  evaluated to $3>2=\trueSem$ in $\sigma_3$ and thus consistent.
\end{example}

Observe that $\tau_3$ can be obtained from $\tau_0$ in
Example~\ref{exmpl:trace} by a suitable instantiation of its symbolic
variables. The precise definition of this operation is the purpose of
the following subsection.

\subsection{Concretisation}

A concretisation mapping is defined relative to a state. It associates
a concrete value to each symbolic variable of the state.

\begin{definition}[State Concretisation Mapping]
  A mapping $\rho: \mathit{Var} \to \mathit{Val}$ is a
  \emph{concretisation mapping} for a state $\sigma$ if
  $\dom(\rho)\cap\dom(\sigma)=\symbvars{\sigma}$.
\end{definition}

A concretisation mapping $\rho$ may additionally define the value of variables that
are not in the domain of $\sigma$.

\begin{example}
  \label{exmpl:concr-mapping}
  Consider $\sigma_1$ of Example~\ref{exmpl:wf-state} where
  $\symbvars{\sigma_1}=\{Y_0,\,Y_1\}$. We define the concretisation
  mapping $\rho_1=[Y_0\mapsto 3,\,Y_1\mapsto 2]$ for $\sigma_1$.

  Let $\sigma_4=[X\mapsto\ast,\,z\mapsto 3]$. Then a concretisation
  mapping must give a value to $X$, but \emph{not} to $z$, for
  example, $\rho_2=[X\mapsto 2,\,Y\mapsto 0]$ is a concretisation
  mapping for $\sigma_4$.
\end{example}

A symbolic state can be concretised using the values assigned to its
symbolic variables in a concretisation mapping to evaluate the
symbolic expressions.

\begin{definition}[State Concretisation]
  \label{def:concretisation.state}
  Let $\sigma$ be a state and $\rho$ a concretisation mapping for
  $\sigma$. The concretisation of $\sigma$ with $\rho$ is defined as
  follows:
  \[
    \mapComp{\rho}{\sigma}=\rho \cup \{x\mapsto
    \valB{\rho}{\sigma(x)}\mid\, x\in
    \dom(\sigma)\setminus\dom(\rho)\}\ .
  \]
\end{definition}

\begin{example}
  \label{exmpl:state-concr}
  Continuing Examples~\ref{trace-concr} and~\ref{exmpl:concr-mapping},
  we obtain $\mapComp{\rho_1}{\sigma_1}=\sigma_3$.
\end{example}

To concretise a symbolic trace $\pc\pop\tau$, we must apply a
concretisation mapping $\rho$ to all states and events of the
trace. This means the domain of $\rho$ must comprise all symbolic
variables that occur in the trace.  This is the case for well-formed
traces, whose states must agree on which variables are symbolic
(equation~\eqref{eq:wft1} in Definition~\ref{def:wellformed-trace}).

\begin{definition}[Trace Concretisation Mapping]
  A mapping $\rho$ is a \emph{trace concretisation mapping} for~$\tau$
  if it is a concretisation mapping for all the states in $\tau$.  We
  say that \emph{$\rho$ concretises $\tau$}.
\end{definition}

\begin{definition}[Symbolic Trace Concretisation]
  \label{def:concretisation.symbolic.trace}
  Let $\pc\pop\tau$ be a well-formed trace and $\rho$ a concretisation
  mapping for $\tau$.  The \emph{concretisation} $\rho(\pc\pop\tau)$
  of $\pc\pop\tau$ is obtained as $\rho(\pc)\pop\rho(\tau)$, where
  $\rho(\pc)=\valB{\rho}{\pc}$ and $\rho(\tau)$ is defined as follows:
  \[\rho(\stateorevent_1\cdots\stateorevent_n\cdots) = \rho(\stateorevent_1)\cdots\rho(\stateorevent_n)\cdots \]
  \[\rho(\sigma) = \mapComp{\rho}{\sigma}\]
  \[\rho(\ev{\many{e}}) = \ev{\valB{\rho}{\many{e}} } \]
\end{definition}

\begin{proposition}\label{prop:concretisation}
  The concretisation of a state is a concrete state. The
  concretisation of a well-formed trace is a concrete, well-formed
  trace.
  Any concrete state or trace is a concretisation of itself.
\end{proposition}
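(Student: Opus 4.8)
The plan is to prove the three assertions in the order \emph{state}, \emph{trace}, \emph{self-concretisation}, since the trace claim reduces to the state claim applied position-by-position, and the self-concretisation claim exploits that the empty mapping is an admissible concretisation for an already-concrete object.

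First I would establish that $\mapComp{\rho}{\sigma}$ is a concrete, well-formed state whenever $\rho$ is a concretisation mapping for $\sigma$. The variables in $\dom(\rho)$ are sent to values by definition of $\rho$, so the only thing to verify is that each $x\in\dom(\sigma)\setminus\dom(\rho)$ receives a value under $\valB{\rho}{\sigma(x)}$. The key observation is that such an $x$ cannot be symbolic: from $\dom(\rho)\cap\dom(\sigma)=\symbvars{\sigma}$ and $x\notin\dom(\rho)$ we get $x\notin\symbvars{\sigma}$, so $\sigma(x)$ is an ordinary expression. Well-formedness of $\sigma$ then gives $\vars(\sigma(x))\subseteq\symbvars{\sigma}\subseteq\dom(\rho)$, and since $\rho$ maps every variable of its domain to a value, the evaluation function reduces $\sigma(x)$ to a value. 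Hence $\mapComp{\rho}{\sigma}$ maps every variable to a value, so $\symbvars{\mapComp{\rho}{\sigma}}=\emptyset$ and well-formedness holds vacuously (no value contains a variable). Turning well-formedness of $\sigma$ together with the domain condition on $\rho$ into ``totality-with-values'' of $\valB{\rho}{\cdot}$ is the technical heart of the proposition, and I expect it to be the main obstacle.

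For the trace claim, let $\pc\pop\tau$ be well-formed, $\rho$ a trace concretisation mapping, and recall $V=\bigcup_{\sigma\in\tau}\symbvars{\sigma}$. Since $\rho$ is a concretisation mapping for every state of $\tau$, each $\symbvars{\sigma}\subseteq\dom(\rho)$, so $V\subseteq\dom(\rho)$; thus all free variables occurring in the trace, which by \eqref{eq:wft2} and \eqref{eq:wft3} lie in $V$, are evaluated away by $\rho$. Each state concretises to a concrete, well-formed state by the first part; each event $\ev{\many e}$ becomes $\ev{\valB{\rho}{\many e}}$, containing only values; and $\valB{\rho}{\pc}$ is fully evaluated. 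Because every concretised state has empty symbolic-variable set, the new set $V'=\bigcup_{\sigma'\in\rho(\tau)}\symbvars{\sigma'}$ is empty, so conditions \eqref{eq:wft1}, \eqref{eq:wft2}, \eqref{eq:wft3} collapse to triviality; \eqref{eq:wft0} is exactly the first part; and \eqref{eq:wft4} is preserved because $\rho$ acts element-wise, leaving each event in place and sending the two identical states surrounding it to one and the same concrete state. One caveat I would flag explicitly: the definition of a concrete trace also demands a \emph{consistent} path condition, whereas the argument only shows $\valB{\rho}{\pc}$ is fully evaluated; it is consistent precisely when $\rho$ satisfies $\pc$.

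Finally, for self-concretisation I would exhibit the empty mapping $\emptyset$ as witness. For a concrete state $\sigma$ we have $\symbvars{\sigma}=\emptyset$, so $\dom(\emptyset)\cap\dom(\sigma)=\emptyset=\symbvars{\sigma}$ makes $\emptyset$ an admissible concretisation mapping; since every $\sigma(x)$ is already a value, $\mapComp{\emptyset}{\sigma}=\sigma$. The same witness works for a concrete trace $\sh$: $\emptyset$ fixes every concrete state as just shown, fixes every event $\ev{\many v}$ because $\valB{\emptyset}{\many v}=\many v$, and fixes the already fully evaluated path condition, giving $\emptyset(\sh)=\sh$.
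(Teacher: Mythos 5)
Your proof is correct. The paper states this proposition \emph{without any proof} --- it is treated as immediate from the definitions --- so there is no official argument to compare against; your write-up supplies exactly the missing verification, and your decomposition (state claim first, trace claim position-by-position, empty mapping as witness for self-concretisation) is the natural one. You also correctly isolate the only non-trivial step: for $x\in\dom(\sigma)\setminus\dom(\rho)$, combining well-formedness of $\sigma$ with the domain condition $\dom(\rho)\cap\dom(\sigma)=\symbvars{\sigma}$ yields $\vars(\sigma(x))\subseteq\symbvars{\sigma}\subseteq\dom(\rho)$, so $\valB{\rho}{\sigma(x)}$ reduces to a value; the rest is bookkeeping over conditions \eqref{eq:wft0}--\eqref{eq:wft4}, which either transfer directly or trivialise once every concretised state has an empty set of symbolic variables. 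Finally, your caveat about consistency is a genuine observation about the paper rather than a defect of your argument: the paper's definition of a concrete trace demands a \emph{consistent} fully evaluated path condition, yet an admissible $\rho$ that falsifies $\pc$ gives $\falseSem\in\rho(\pc)$, so the second sentence of the proposition holds literally only for concretisation mappings satisfying the path condition (as happens in the paper's Example~\ref{exmpl:trace-concr}; the composition rules later discard such traces anyway through their ``$\pc$ consistent'' premises). Flagging that mismatch explicitly, as you did, is the right call.
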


\begin{example}
  \label{exmpl:trace-concr}
  We continue to use the symbolic trace $\pc_0\pop\tau_0$ of
  Example~\ref{exmpl:trace} and concretise it with the mapping
  $\rho_1$ of Example \ref{exmpl:concr-mapping}. The resulting
  concrete trace is $\tau_3$ from Example \ref{trace-concr} with path
  condition $\{\trueSem\}$.
\end{example}

\subsection{Continuations}

To capture the local semantics of a language, below we define an
evaluation function $\valD{s}$ that evaluates a single statement $s$
in a---possibly symbolic---state $\sigma$ to a set of conditioned,
symbolic traces.  \emph{Compositional} local evaluation rules can then
be defined for each statement by adding a \emph{continuation marker}
at the end. 
Continuation markers are needed, if $s$ is a composite statement that
requires separate evaluation of its constituent parts. In particular, $s$
might not terminate, but this should not jeopardise parallel
computations when computing a sequential trace for a global system. We
avoid this issue by stopping the evaluation after a finite number of steps
by means of a continuation marker, defined as follows:

\begin{definition}[Continuation Marker]
  Let $s$ be a program statement.  The \emph{continuation marker}
  $\cont{s}$ expresses that a given trace is extended by the traces
  resulting from computing $s$.  The empty continuation, denoted
  $\cont{\zero}$, expresses that nothing remains to be
  computed.\footnote{The mnemonics of the symbol is that from an empty
    bottle nothing can be consumed.}
\end{definition}

The argument to the continuation marker is the code on which
evaluation is continued.  When atomic statements are evaluated, there
is no continuation code but only a return of control. In this case, we
end the trace with an empty continuation (see Section~\ref{sec:local}).
Local evaluation (corresponding to phase~\eqref{item:phase1} in
Section~\ref{sec:introduction}) is defined below such that for each
statement~$s$ and symbolic state $\sigma$ the result of $\valD{s}$ is
a set of conditioned, symbolic traces of the form
$\pc\pop\concatTr{\tau}{\cont{s'}}$ where $\tau$ is a \emph{finite}
trace. These traces are called \emph{continuation traces}; we denote
by $\ctraces$ the type of continuation traces.

\section{LAGC Semantics of WHILE}
\label{sec:while}

\begin{figure}[t]\centering
$\begin{array}[t]{r@{\hspace{2pt}}r@{\hspace{2pt}}l}
         s\in \mathit{Stmt}&::=&  \Simple{skip} \mid  x := e \mid
                   \Simple{if}~e~\{ ~s~\} \mid s;s \mid
                   \Simple{while}\ e\, \{\, s\, \} 
     \end{array}$
\caption{\label{fig:basiclangWhile}The syntax for statements in \Simple{WHILE}.}
\end{figure}

We define a LAGC semantics for \Simple{WHILE}, a language with basic
sequential constructs that can be seen as a kernel of imperative
programming languages \cite{apt09vscp}.  The statements of
\Simple{WHILE} consist of \Simple{skip}, assignment, conditional,
sequential composition, and \Simple{while}-loops. The syntax for
statements is given in Figure~\ref{fig:basiclangWhile}, where we
assume given standard expressions $e$. Assignment binds the value of
an expression to a variable. Conditionals and \Simple{while}-loops
depend on the value of a Boolean expression $e$. We assume the
standard semantics of this language to be known, and use it to
illustrate our trace semantics. \emph{Local rules}
(phase~\eqref{item:phase1}) unfold the traces until the next possible
scheduling point, marked by a continuation. \emph{Composition rules}
(phase~\eqref{item:phase2}) select the next trace to be unfolded at
the scheduling point; for the sequential language, there is only one
trace so the latter selection is deterministic.  Due to its
simplicity, \Simple{WHILE} merely needs state-based traces, the use of
events is covered in Section~\ref{sec:user-input}.

\subsection{Local Evaluation}
\label{sec:local}

Local evaluation rules a single statement in the context of a symbolic
state, and return a set of finite continuation traces.  Each
evaluation rule represents the execution of a single statement,
producing a set of continuations traces. The evaluation rules are
reminiscent of small-step reduction rules, but work in a denotational
setting and with a symbolic context.  We overload the symbol
$\valDnoargs$ and declare it with the type
$\valDnoargs:\mathit{Stmt}\rightarrow2^\ctraces$.

The rule for \Simple{skip} generates an empty path condition, returns
the state it was called in, and continues with the empty continuation.
The result is a set containing one singleton trace:
\evalruleN{skip}{%
  \valD{\Simple{skip}} & = & \{\emptyset\pop
  \concatTr{\langle\sigma\rangle}{\cont{\zero}}\}\ .
}

The assignment rule generates an empty path condition and a trace from
the current state $\sigma$ to a state
which updates $\sigma$  at $x$, and continues with an empty
continuation. The result is one trace of length two:
\evalruleN{assign}{%
  \valD{x := e} & = &
  \{\emptyset\pop\concatTr{\consTr{\update{\sigma}{x}{\valD{e}}}{\langle\sigma\rangle}}{\cont{\zero}}\}\ .
}

The rule for the conditional statement branches on the value of the
condition, resulting in two traces with different path conditions. The
first trace is obtained from the current state and the continuation
with the statements in the \Simple{if}-branch, and the second trace has
the empty continuation (corresponding to the empty else-branch):
\evalruleN{If}{%
  \valD{\Simple{if}~e~\{ ~s~\}} &=&\{\{\,\valD{e}=\trueSem\}\pop \concatTr{\langle\sigma\rangle}{\cont{s}},~~\{\valD{e}=\falseSem\}\pop
  \concatTr{\langle\sigma\rangle}{\cont{\zero}}\,\}\ .
}

The semantics of \Simple{while} is similar to the semantics for the
conditional statement. This gives us the opportunity to illustrate
that the semantics of a statement can easily be expressed in terms of
the semantics of another statement, without having to expose
intermediate states:
\evalruleN{While}{%
  \valD{\Simple{while}~e~\{~s~\}} & = & \valD{\Simple{if}~e~\{
    ~s;~\Simple{while}~e~\{~s~\}\}}\ .
}

The rule for sequential composition $r;s$ is obtained by first
evaluating $r$ to traces of the form
$\pc\pop\concatTr{\tau}{\cont{r'}}$ with continuation $r'$.  The
statement $s$ can simply be added to this continuation:
\evalruleN{Seq}{%
  \valD{r;s}
  =\{\pc\pop\concatTr{\tau}{\cont{r'; s}}\mid\pc\pop
  \concatTr{\tau}{\cont{r'}} \in\valD{r}\}\ .
}

A subtle point concerns the propagation of empty continuations: $r'$
might be the empty continuation $\zero$, which should be ignored. This
behavior is captured in the following rewrite rule, which is
exhaustively applied to statements occurring inside continuations:
\[
  s; s' \rightsquigarrow s'  \text{ if }s=\zero\ .
\]

This rewrite rule reflects that the empty continuation is an identity
element for sequential composition.  Similar rewrite rules will be
added to handle identity elements for other composition operators in
the sequel.

\begin{example}\label{exmpl:seq}
  Consider the sequential statement $s_{seq}=(x:=1;\ y:=x+1)$. To
  explain how its evaluation is performed, we start from an arbitrary
  symbolic state $\sigma$ (to be instantiated later by a composition
  rule).
  The equation for sequential composition yields
  \begin{equation}
    \valD{s_{seq}}  =
    \{\emptyset\pop\concatTr{\consTr{\update{\sigma}{x}{1}}{\langle\sigma\rangle}}{\cont{y:=x+1}}\}\ .
    \label{eq:simple.val1}
  \end{equation}
  To perform this evaluation, we need the result of evaluating the
  first assignment in the context of~$\sigma$:
  \[
      \valD{x := 1}  =
      \{\emptyset\pop\concatTr{\consTr{\update{\sigma}{x}{1}}{\langle\sigma\rangle}}{\cont{\zero}}\}\ .
  \]
\end{example}

In the composition rules it might be necessary to evaluate the empty
continuation, so we must define it. It is simply the empty set of
traces:
\evalruleN{empty}{%
  \valD{\zero} =\{\}\ .}

\begin{proposition}\label{prop-concretestates}
  Given a concrete state~$\sigma$ and a program~$s$ such that
  $vars(s)\subseteq\dom(\sigma)$, then $\valD{s}$ is a set of concrete
  continuation traces of the form
  $\pc \pop \concatTr{\tau}{\cont{s'}}$.
  There is exactly one continuation trace with a consistent path
  condition.
\end{proposition}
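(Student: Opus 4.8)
The plan is to prove Proposition~\ref{prop-concretestates} by structural induction on the statement $s$, following the case structure of the local evaluation rules. The induction hypothesis will assert both claims simultaneously: that $\valD{s}$ consists of concrete continuation traces (whenever the starting state $\sigma$ is concrete and $\vars(s)\subseteq\dom(\sigma)$), and that exactly one of these traces has a consistent path condition. I expect the two claims to reinforce each other in the inductive step, so proving them together is cleaner than proving them separately.

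\medskip
\noindent\textbf{Base cases.} First I would dispatch the atomic statements. For \Simple{skip}, rule~\eqref{eq:skip} produces the single trace $\emptyset\pop\concatTr{\langle\sigma\rangle}{\cont{\zero}}$; since $\sigma$ is concrete and the path condition $\emptyset$ is trivially consistent, both claims hold immediately. For assignment $x := e$, rule~\eqref{eq:assign} yields one trace of length two involving the state $\update{\sigma}{x}{\valD{e}}$. Here the key sub-argument is that $\valD{e}$ is a concrete value: because $\vars(x := e)\subseteq\dom(\sigma)$ we have $\vars(e)\subseteq\dom(\sigma)$, and since $\sigma$ is concrete, the evaluation function reduces $e$ fully to a value (as noted after the definition of concrete, well-formed states). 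Thus the updated state is concrete, the single trace is concrete, and its empty path condition is consistent.

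\medskip
\noindent\textbf{Inductive cases.} For the conditional \Simple{if}~$e$~$\{s\}$, rule~\eqref{eq:If} produces exactly two traces, with path conditions $\{\valD{e}=\trueSem\}$ and $\{\valD{e}=\falseSem\}$. Again $\valD{e}$ evaluates to a concrete Boolean value $\trueSem$ or $\falseSem$, so one of these two path conditions is consistent and the other inconsistent; this is precisely where the ``exactly one'' part of the claim is established at the branching point. The states in both traces are just $\langle\sigma\rangle$, hence concrete. The \Simple{while} case reduces immediately to the conditional case via rule~\eqref{eq:While}, which defines it in terms of an \Simple{if}-statement, so no separate argument is needed. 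For sequential composition $r;s$, I would invoke the induction hypothesis on $r$ (noting $\vars(r)\subseteq\vars(r;s)\subseteq\dom(\sigma)$): each trace of $\valD{r}$ is concrete with continuation $\cont{r'}$, and rule~\eqref{eq:Seq} merely rewrites the continuation to $\cont{r';s}$ without touching the states, events, or path conditions. Since the set of path conditions is unchanged, exactly one remains consistent. The empty-continuation case~\eqref{eq:empty} returns the empty set, which vacuously satisfies the first claim; it does not arise as a top-level $s$ with the uniqueness requirement since $\valD{\zero}$ has no consistent trace, so this case is handled only as it feeds into the sequential rule through the rewrite $s;s'\rightsquigarrow s'$.

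\medskip
\noindent\textbf{Main obstacle.} I expect the principal subtlety to lie in the sequential composition case, specifically in confirming that the ``exactly one consistent trace'' invariant is preserved. The evaluation $\valD{r;s}$ as given by rule~\eqref{eq:Seq} only unfolds $r$ up to its first continuation marker and does \emph{not} recursively evaluate the continuation $r';s$; consequently the path conditions of $\valD{r;s}$ are exactly those of $\valD{r}$, and uniqueness transfers directly. The care needed is to recognise that the proposition concerns a single local-evaluation step (bounded by the continuation marker) rather than a full execution, so the induction is genuinely well-founded on the syntactic structure of $s$ and does not require reasoning about the possibly non-terminating unfolding hidden inside the continuation. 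Making this scoping explicit is the crux of a clean argument.
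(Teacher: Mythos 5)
The paper states Proposition~\ref{prop-concretestates} without any proof---it is treated as a routine consequence of the evaluation rules---so there is no official argument to compare yours against. Your structural induction is correct and is exactly the argument the paper leaves implicit: in a concrete state $\valD{e}$ reduces to a value, which forces exactly one consistent branch in rule~\eqref{eq:If}; rule~\eqref{eq:Seq} only rewrites continuations and hence preserves both concreteness and the exactly-one-consistent count of the traces of its first component; rule~\eqref{eq:While} delegates to~\eqref{eq:If}, which never evaluates its body, so no circularity arises; and your closing remark that local evaluation stops at the first continuation marker---so the induction never touches the unfolding hidden inside continuations---is precisely the observation that makes the induction well-founded.
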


\subsection{Trace Composition}
\label{sec:trace-composition}
Local traces are composed into concrete global ones. As \Simple{WHILE}
is sequential and deterministic, we expect to obtain exactly one
trace, provided that the execution starts in a concrete state that
assigns values to all the variables of a
program. Proposition~\ref{prop-concretestates} ensures that all local
evaluation rules produce concrete traces in this case.

The task of the composition rule for \Simple{WHILE}-programs is to
repeatedly evaluate one statement at a time in a concrete state until
the next continuation, then stitch the resulting concrete traces
together.  Given a concrete trace $sh$ with final state $\sigma$ and a
continuation $\cont{s}$, we evaluate $s$ starting in $\sigma$.  The
result is a set of conditioned traces from which one trace with a
consistent path condition and a trailing continuation $\cont{s'}$ is
chosen.\footnote{For a deterministic language like \Simple{WHILE},
  there is exactly one trace, but the rule is designed to work for the
  non-deterministic extension below as well.}
The chosen trace $\tau$ is chopped with the given trace~$sh$.
Afterwards, the composition rule can be applied again to the extended
concrete trace and $\cont{s'}$.
\gcondrule{SeqExc-1}{
 \sigma = \last(sh) \quad 
\pc \pop \concatTr{\tau}{\cont{s'}} \in \valD{s} \quad
\pc \text{ consistent}
 }{
 sh, \cont{s} \to
  \chopTrSem{sh}{\tau}, \cont{s'}
}

The rule assumes that $s$ is evaluated to a concrete trace so that
$\chopTrSem{sh}{\tau}$ stays concrete. At this stage, symbolic traces
do not yet figure.  This works as long as
Proposition~\ref{prop-concretestates} ensures that $\pc\pop\tau$ is a
concrete trace, however, in general the proviso
$vars(s)\subseteq\dom(\sigma)$ does not hold. There are several
approaches to address this. For example, it is sufficient to consider
the \emph{non-initialised} variables of $s$, instead of all of
them. We avoid a lengthy definition to address this technicality by
simply assuming that $\last(sh)$ defines all variables of $s$. This
can be easily achieved by initialising all variables to default values
at program start, as done below.

\subsection{Global Trace Semantics}

Let $sh,\cont{s}\stackrel{*}{\to} sh',\cont{s'}$ denote the transitive
closure of applying rule~\eqref{eq:ruleSeqExc-1}, expressing that $sh',\cont{s'}$ can
be reached from $sh,\cont{s}$ in zero or more steps.

\begin{definition}[Program Semantics]
  \label{def:prog.semantics}
Given a program \Simple{s} and a state $\sigma$, let
 $$sh_0,\,\cont{s_0}\rightarrow sh_1,\,\cont{s_1}\rightarrow\cdots$$
 be a maximal sequence obtained by the repeated application of
 rule~\eqref{eq:ruleSeqExc-1}, starting from
 $\langle \sigma\rangle,\cont{s}$. If the sequence is finite, then it
 must have the form\footnote{Observe that $sh,\cont{\zero}$ is the end
   of the execution because the evaluation of~$\cont{\zero}$ returns
   $\{\}$ such that the composition rule~\eqref{eq:ruleSeqExc-1} is no
   longer applicable. The empty continuation is obtained once the
   whole program has been evaluated.}
  \[\langle \sigma\rangle,\cont{s} \stackrel{*}{\to} sh,\cont{\zero}\ .\]
  If the sequence is infinite, let $sh=\lim_{i\rightarrow\infty}sh_i$.
  The set of all such traces $sh$ for a program $s$ starting
  from a state $\sigma$ is denoted $\traces{s,\sigma}$.
  \label{ex:sequential.global.trace}
\end{definition}

For any statement $s$, let $I_s$ be the state, where $[x\mapsto 0]$
for all $x\in vars(s)$ (for simplicity, assume all variables are of
integer type---the generalisation is obvious). To help readability, we
sometimes omit such default values from the states in the following
examples.

\begin{example}
  We apply rule~\eqref{eq:ruleSeqExc-1} to
  $\langle I_{seq}\rangle,\cont{s_{seq}}$, the program from
  Example~\ref{exmpl:seq}. To obtain the premise, we instantiate
  equation~\eqref{eq:simple.val1} with $\sigma= I_{seq}$.
  \gcondrule{SeqExc-1ex}{
     I_{seq} = \last(\langle I_{seq}\rangle) \qquad 
    \emptyset \text{ consistent}\\
    \emptyset\pop\concatTr{\consTr{\update{ I_{seq}}{x}{1}}{\langle I_{seq}\rangle}}{\cont{y:=x+1}} \in \valB{I_{seq}}{x:=1;y:=x+1} \quad
  }{
    \langle I_{seq}\rangle,\cont{x:=1;y:=x+1} \to
    % \chopTrSem{\langle I_{seq}\rangle}{\concatTr{\consTr{ [x\mapsto 1]}{\langle I_{seq}\rangle}}},
    \consTr{I_{seq}[x\mapsto 1]}{\langle I_{seq}\rangle},\,
    \cont{y:=x+1}
  }

  For the subsequent rule application, it is necessary to evaluate the
  program $y:=x+1$ in the continuation. Again, we do this for a
  general state, while in the rule we use
  $\sigma=I_{seq}[x\mapsto 1]$ (from now on we omit $I_{seq}$):
  \begin{equation*}
    \begin{array}[h]{c}
      \valD{y := x+1}  =
      \{\emptyset\pop\concatTr{\consTr{\update{\sigma}{y}{\valD{x+1}}}{\langle\sigma\rangle}}{\cont{\zero}}\}
    \end{array}
  \end{equation*}

  \gcondrule{SeqExc-1exb}{
    [x\mapsto 1] = \last(\consTr{ [x\mapsto 1]}{\langle I_{seq}\rangle})
    \qquad \emptyset \text{ consistent}\\
    \emptyset\pop\concatTr{\consTr{\update{ [x\mapsto 1]}{y}{\valB{[x\mapsto 1]}{x+1}}}{\langle [x\mapsto 1]\rangle}}{\cont{\zero}} \in \valB{[x\mapsto 1]}{y:=x+1}
    % = \valD{y:=x+1}
  }{
    \consTr{ [x\mapsto 1]}{\langle I_{seq}\rangle}, \cont{y:=x+1} \to
    \consTr{ [x\mapsto 1,y\mapsto 2]}{\consTr{ [x\mapsto 1]}{\langle I_{seq}\rangle}}, \cont{\zero}
  }\smallskip

  Hence,
  $\traces{\Simple{x:=1; y:=x+1},I_{seq}}=\{\consTr{ [x\mapsto 1,y\mapsto
    2]}{\consTr{ [x\mapsto 1]}{\langle I_{seq}\rangle}}\}$.
\end{example}

\subsection{Discussion}
\label{sec:discussion}

As mentioned, as long as we start semantic evaluation in a
sufficiently initialised concrete state,
Proposition~\ref{prop-concretestates} ensures that only concrete
traces will be generated by local rules. Consequently, we have a
semantics that can be aptly called modular and compositional (exactly
one independent rule per language construct and a uniform composition
rule), but the overhead introduced with \emph{symbolic} traces is not
yet justified.

The advantages offered by symbolic traces and states are realised in
the following two sections. In Section~\ref{sec:dynamic-logic-simple} we
define a program logic for \Simple{WHILE} that employs symbolic traces
at the level of the \emph{calculus}. This close correspondence between
semantics and deduction system is the basis for an intuitive soundness
proof for the deductive system.

In Sections~\ref{sec:interleaving}--\ref{sec:case-studies} we extend
\Simple{WHILE} with a number of complex instructions, in particular,
for parallel programming. It will be seen that a wide range of
concurrency paradigms fit naturally into the LAGC semantic framework.

\section{A Program Logic and Sound Calculus for WHILE}
\label{sec:dynamic-logic-simple}

We provide a dynamic logic (DL) \cite{HKT00} as well as a calculus for
reasoning about the correctness of \Simple{WHILE}-programs that is
sound relative to our semantics.  In deductive verification dynamic
logic \cite{BRSST00,keybook} offers technical advantages over Hoare
logic \cite{Hoare69}: it is syntactically closed with respect to
first-order logic, more expressive, and cleanly separates first-order
(``logical'') from program variables \cite[p.~50]{keybook}.  Below we
define program formulas of the form
$\psi\rightarrow\tau\left[s\right]\phi$, where $\tau$ is a finite
symbolic trace, $s$ any \Simple{WHILE}-statement, $\psi$ a first-order
formula, and $\phi$ a formula that in turn may contain programs. The
intuitive meaning is that any terminating execution of $s$ continuing
a trace that concretises $\tau$ and started in a state that satisfies
$\psi$, must end in a state that satisfies $\phi$.
The modality $\left[s\right]$ corresponds to a \emph{continuation} in
the semantics of \Simple{WHILE}, represented symbolically. 

The unusual aspect of this setup is the presence of a symbolic trace
$\tau$ inside a formula.
It aligns with our locally abstract semantics, but it is also
justified, because---unlike a semantics---rule schemata in calculus
rules \emph{necessarily} deal with symbolic values: a verification
calculus aims at proving a property that holds for \emph{all} inputs
of a program, not merely for a single run.
Nevertheless, the presence of symbolic traces inside formulas may
appear as insufficiently syntactic or as an inappropriate intrusion of
the semantics into the calculus. However, efficient syntactic
representations of symbolic assignments are fairly common and well
understood: for example, the deductive verification system KeY uses
symbolic \emph{updates}\footnote{Updates can be viewed as a syntactic,
  efficient, lazy representation of symbolic traces: concrete values
  are not eagerly substituted and assignments are kept in single
  static (SSA) shape.} \cite{keybook} and in the B-method explicit
generalised substitutions play a comparable role \cite{Abrial96a}.  To
keep the calculus as general as possible, we do not commit to a
particular implementation of symbolic traces.

The calculus rules given below will symbolically execute a program $s$
in a sequent of the form $\psi\rightarrow\tau\left[s\right]\phi$ and
produce verification conditions of the form
$\Gamma\Rightarrow \tau\,\phi$, where $\phi$ is a first-order formula
and $\tau$ a symbolic trace.

\subsection{Dynamic Logic}
\label{sec:dynamic-logic}

Given a signature $\Sigma$ with typed function and predicate symbols
and a set $V$ of logical variables which is disjoint from the symbols
in $\Sigma$, let $\textbf{Terms}(\Sigma,V)$ denote the well-formed
terms over $\Sigma$ and $V$ (respecting type compatibility). Note that
the logical variables in $V$ are disjoint from program variables
$\mathit{Var}$. Unlike the latter, the logical variables can be bound
by quantifiers and do not change their value during program execution.

\begin{definition}[DL Formula]
  Let $\Sigma$ be a signature and $V$ a set of  logical
  variables disjoint from $\Sigma$. The language $\textbf{DL}(\Sigma,V)$ of formulas in
  dynamic logic is defined inductively as follows:
  \begin{enumerate}
  \item $B\in \textbf{DL}(\Sigma,V)$ if
    $B\in \textbf{Terms}(\Sigma,V)$ and the type of $B$ is Boolean
    % Every Boolean formula is an assertion.
  \item
    $\neg \phi_1,\, \phi_1\land \phi_2,\, \phi_1\lor \phi_2,\,
    \phi_1\rightarrow \phi_2,\, \phi_1\leftrightarrow
    \phi_2\in\textbf{DL}(\Sigma,V)$ if
    $\phi_1,\,\phi_2\in \textbf{DL}(\Sigma,V)$
  \item
    $\exists x \cdot \phi,\,\forall x \cdot
    \phi\in\textbf{DL}(\Sigma,V)$ if $x\in V$ and
    $\phi\in\textbf{DL}(\Sigma,V)$
  \item\label{item:modality} $\left[ \Simple{s}\right]\phi
%  \left[ s\right]\phi
    \in\textbf{DL}(\Sigma,V)$ if $s$ is a, possibly empty,
    \Simple{WHILE}-program and $\phi\in\textbf{DL}(\Sigma,V)$
  \end{enumerate}
  We often omit the signature and variable set of
  $\phi\in\textbf{DL}(\Sigma,V)$ and simply write
  $\phi\in\textbf{DL}$.
\end{definition}

There is a subtle point about empty programs. As explained above, the
program $s$ in the modality can be viewed as the continuation of the
current trace. The rules of the calculus defined below will
symbolically execute $s$ from left to right, until the program
remaining to be executed is empty. For this reason, we allow empty
programs, denoted $\emp$, in
clause~\eqref{item:modality}. The empty program is the identity
  element for the composition operators (i.e., $\emp;s=s;\emp=s$).
However, the continuation with the empty program is not the same as
the empty continuation; it requires a semantics that amounts to an
empty trace, which is reflected in Definition~\ref{def:sat} below.

\begin{definition}[Substitution]\label{def:subst}
  Given a language $\textbf{DL}(\Sigma,V)$ and a set of logical
  variables $V'=\{x_1,\ldots,x_n\}$ such that $V'\subseteq V$, a
  \emph{ substitution} $[x_1/t_1,\ldots,x_n/t_n]$ is a function
  $V'\rightarrow \textbf{Terms}(\Sigma,V)$ which associates with every
  $x_i\in V'$ a type-compatible term
  $t_i\in \textbf{Terms}(\Sigma,V)$.
\end{definition}

Denote by $\phi[x_1/t_1,\ldots,x_n/t_n]$ the application of a
substitution $[x_1/t_1,\ldots,x_n/t_n]$ to a logical formula
$\phi\in \textbf{DL}(\Sigma,V)$.  Observe that applying a substitution
removes occurrences of \emph{logical variables} in a formula and does
not affect programs.  The application of the substitution has a
straightforward inductive definition over $\phi$ (omitted here).

We write $\tau \models \phi$ and $\sigma\models \phi$ to
denote\footnote{To simplify presentation, we assume a fixed standard
  interpretation of the symbols in $\Sigma$. It is straightforward to
  accommodate undefined symbols and relativise satisfiability to
  first-order models.} that a formula $\phi$ is valid for a
\emph{concrete} trace $\tau$ and in a \emph{concrete} state $\sigma$,
respectively (equivalently, $\tau$ and $\sigma$ satisfy $\phi$).
Formally, satisfiability can be defined as follows:

\begin{definition}[Satisfiability of DL Formulas]\label{def:sat}
  Let $\phi\in\textbf{DL}(\Sigma,V)$ be a DL formula,
  $B\in \textbf{Terms}(\Sigma,V)$ a Boolean term, $\sigma$ a concrete
  state and $\tau$ a concrete trace.
  $$
  \begin{array}{l}
    % BOOL
    \sigma \models B \iff \valD{B}=\trueSem \\
    % NEGATION
    \sigma \models \neg \phi \iff \sigma \not\models \phi\\
    % CONJUNCTION
    \sigma \models \phi_1\land \phi_2 \iff  \sigma \models \phi_1\
    \text{and}\  \sigma \models \phi_2 \quad \text{(analogous for the remaining propositional connectives)}\\
    % % DISJUNCTION
    % \sigma \models \phi_1\lor \phi_2 \iff  \sigma \models \phi_1\
    % \text{or}\  \sigma \models \phi_2 \\
    % % IMPLICATION
    % \sigma \models \phi_1\rightarrow \phi_2 \iff  \sigma \models \phi_1\
    % \text{implies}\  \sigma \models \phi_2\\
    % % EQUIVALENCE
    % \sigma \models \phi_1\leftrightarrow \phi_2 \iff  \sigma \models
    % \phi_1\ \text{if and only if}\  \sigma \models \phi_2\\
    % EXISTENTIAL QUANTIFICATION
    \sigma \models \exists x \cdot \phi \iff \sigma \models \phi[x/t]\
    \text{for some substitution}\ [x/t]\ \text{where}\ t\in\textbf{Terms}(\Sigma,V)\text{, variable-free}\\
    % UNIVERSAL QUANTIFICATION
    \sigma \models \forall x \cdot  \phi \iff \sigma \models \phi[x/t]\
    \text{for all substitutions}\ [x/t]\ \text{where}\ t\in\textbf{Terms}(\Sigma,V)\text{, variable-free}\\
    % MODALITIES
    \sigma\models \left[ \Simple{s}\right] \phi \iff \tau\models \phi\
    \text{for all finite}\ \tau\in\traces{\Simple{s},\sigma}\\
    \sigma\models \left[\emp\right] \phi \iff \sigma\models \phi\\
    % TRACES
    \tau\models \phi \iff \tau\text{ finite, non-empty},\ \last(\tau)=\sigma,\text{ and }\sigma\models \phi 
  \end{array}$$
\end{definition}

\noindent
Observe that the third but last clause implies a partial correctness
semantics.

\subsection{Calculus}
\label{sec:dl-calculus}

As usual in deductive verification \cite{HaehnleHuismann19}, we define
a calculus operating on \emph{sequents}. These have the form
$\Gamma\Rightarrow\tau\;\phi$, where $\Gamma$ is a set of formulas
thought to be implicitly conjoined (called the \emph{antecedent}), the
symbol $\Rightarrow$ can be read as implication, $\tau$ is a
non-empty, finite symbolic trace, and $\phi$ is a DL formula. Without
loss of generality, we can assume that $\Gamma$ consists of a single
formula $\psi$.

\begin{definition}[Valid Sequent]
  A sequent is \emph{valid}, denoted
  $\models \psi \Rightarrow \tau\;\phi$, iff for \emph{all}
  concretisation mappings $\rho$ such that
  $\first(\rho(\tau))\models\psi$, we have that
  \[
    \last(\rho(\tau))\models\phi\enspace.
  \]
\end{definition}

To make the correspondence with the semantics of \Simple{WHILE} more
immediate, we represent path conditions as a separate set in the
antecedent of the sequents. The rules for programs are shown in
Figure~\ref{fig:calculus}. Observe that this rule set is incomplete as
it leaves open how a sequent of the form
$\Gamma, pc \Rightarrow \tau\,\phi$ should be derived, where $\phi$ is
a first-order formula. For generality, we do not commit to a specific
representation of symbolic traces and how they are applied to
first-order formulas. For example, if symbolic traces are represented
with updates as in KeY, then the rules in~\cite[Ch.~2]{keybook} can be
used. In the soundness proof below we assume that there is a sound
calculus able to derive $\Gamma, pc \Rightarrow \tau\,\phi$.

\begin{figure}[t]
  \centering
    $\begin{array}{c}
      \begin{array}{c}
        \textsc{\footnotesize (Assign)}\\
        \sigma'=\update{\sigma}{x}{\valD{e}}\\
        \last(\tau)=\sigma \qquad \Gamma, pc \Rightarrow \consTr{\sigma'}{\tau} \left[\Simple{s}\right]\phi 
        \\\hline
        \Gamma, pc \Rightarrow \tau \left[\Simple{x:=e; s}\right]\phi 
      \end{array}
      \qquad
      \begin{array}{c}
        \textsc{\footnotesize (Cond)}\\
        \last(\tau)=\sigma \\
        \Gamma, pc\cup\{\valD{e}=\falseSem\}  \Rightarrow \tau\left[\Simple{s'}\right]\phi \\
        \Gamma, pc\cup\{\valD{e}=\trueSem\} \Rightarrow \tau \left[\Simple{s; s'}\right]\phi 
        \\\hline
        \Gamma, pc \Rightarrow \tau \left[\Simple{if e \{ s \}; s'}\right]\phi 
      \end{array}
      \\\\
      \begin{array}{c}
        \textsc{\footnotesize (skip)}\\
        \Gamma, pc \Rightarrow \tau \left[\Simple{s}\right]\phi 
        \\\hline
        \Gamma, pc \Rightarrow \tau \left[\Simple{skip; s}\right]\phi 
      \end{array}
      \qquad
      \begin{array}{c}
        \textsc{\footnotesize (While)}\\
        \Gamma, pc \Rightarrow \tau \left[\Simple{if e \{ s; while e \{ s \} \}; s' }\right]\phi 
        \\\hline
        \Gamma, pc \Rightarrow \tau \left[\Simple{while e \{ s \}; s'}\right]\phi 
      \end{array}
      \qquad
      \begin{array}{c}
        \textsc{\footnotesize (Empty)}\\
        \Gamma, pc \Rightarrow \tau\,\phi 
        \\\hline
        \Gamma, pc \Rightarrow \tau \left[\emp\right]\phi 
      \end{array}
    \end{array}$
  \caption{Dynamic logic sequent rules for \Simple{WHILE}.}
  \label{fig:calculus}
\end{figure}

We denote by
$\vdash \psi\Rightarrow \tau \left[\Simple{s}\right] \phi$ that the DL
sequent $\psi\Rightarrow \tau \left[\Simple{s}\right] \phi$ can be
derived in the proof system. To prove that a postcondition
$\text{Post}$ holds after termination of program $s$ under a
precondition $\text{Pre}$, it is sufficient to derive the sequent
\[
  \text{Pre}\Rightarrow\langle\sigma_\ast\rangle\left[\Simple{s}\right]\text{Post}\enspace,
\]
where $\sigma_\ast(x)\mapsto\ast$ for all $x\in\mathit{Var}$.  By
construction, the only symbolic variables that occur in a well-formed
symbolic trace $\tau$ are introduced in the initial state of
$\tau$. Thus, all path conditions are expressed in terms of the
initial symbolic state $\sigma_\ast$.

\begin{theorem}[Soundness]\label{thm:soundness}
  Let $\tau$ be a well-formed symbolic trace and let $\phi,\,\psi$ be
  DL formulas.  If $\vdash \psi \Rightarrow \tau\;\phi$, then
  $\models\psi \Rightarrow \tau\;\phi$.
\end{theorem}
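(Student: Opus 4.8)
The plan is to prove soundness by induction on the structure of the derivation $\vdash \psi \Rightarrow \tau\;\phi$, that is, on the height of the proof tree built from the rules in Figure~\ref{fig:calculus}. The base case is the rule \textsc{(Empty)}: here the premise $\Gamma, pc \Rightarrow \tau\,\phi$ is a pure first-order sequent whose validity is assumed to be established by the sound background calculus, and the conclusion $\Gamma, pc \Rightarrow \tau\,[\emp]\phi$ follows because, by Definition~\ref{def:sat}, $\sigma\models[\emp]\phi$ is equivalent to $\sigma\models\phi$. Thus validity transfers directly across the \textsc{(Empty)} step. The inductive cases are the four remaining rules, each of which peels off the leading statement of the program in the modality and extends or refines the symbolic trace.

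\noindent
First I would set up the semantic bridge precisely. The key object is the valid-sequent definition: $\models\psi\Rightarrow\tau\,\phi$ holds iff for \emph{every} concretisation mapping $\rho$ with $\first(\rho(\tau))\models\psi$ we have $\last(\rho(\tau))\models\phi$. So for each rule I fix an arbitrary such $\rho$ and show, using the induction hypothesis on the premise(s), that the conclusion's semantic obligation is met. The link between the symbolic trace extension in a rule and the actual program semantics $\traces{\cdot}$ of Definition~\ref{def:prog.semantics} comes through the correspondence between the \textsc{(Assign)} rule's trace step $\consTr{\sigma'}{\tau}$ with $\sigma'=\update{\sigma}{x}{\valD{e}}$ and the local evaluation rule~\eqref{eq:assign}, and similarly between the calculus's program unfoldings and the local rules~\eqref{eq:If} and~\eqref{eq:While}. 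The decisive observation, already prepared in the text, is that the modality $[s]$ is the syntactic counterpart of a semantic continuation $\cont{s}$, so that symbolically executing one statement in the calculus mirrors one application of the composition rule~\eqref{eq:ruleSeqExc-1}.

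\noindent
For the \textsc{(Assign)} case I would argue as follows: given $\rho$ concretising $\tau$ with $\first(\rho(\tau))\models\psi$, I must show $\last(\rho(\tau))\models\phi$, but the conclusion's program is $\Simple{x:=e; s}$. The premise operates on the extended trace $\consTr{\sigma'}{\tau}$; I extend $\rho$ to a concretisation $\rho'$ of that longer trace (no new symbolic variables are introduced by assignment, so $\rho'=\rho$ suffices up to the evaluated update), invoke the induction hypothesis to get $\last(\rho'(\consTr{\sigma'}{\tau}))\models\phi$, and then verify that concretising the appended state and then taking $\last$ agrees with executing the assignment and continuing with $s$ --- this is exactly where the defining equation of $\update{\sigma}{x}{\valD{e}}$ and the commutation of evaluation $\valDnoargs$ with concretisation $\rho$ (the content of Proposition~\ref{prop:concretisation} and Definition~\ref{def:concretisation.state}) do the work. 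The \textsc{(skip)} and \textsc{(While)} cases are lighter: \textsc{(skip)} leaves the trace unchanged and uses that $\Simple{skip}$ produces no state change beyond a singleton, while \textsc{(While)} is purely syntactic, reducing to the premise by the loop-unrolling equation~\eqref{eq:While} with no semantic obligation beyond the induction hypothesis. The \textsc{(Cond)} case splits on the two path conditions $pc\cup\{\valD{e}=\trueSem\}$ and $pc\cup\{\valD{e}=\falseSem\}$; here I would observe that any $\rho$ makes $\valD{e}$ concrete and lands in exactly one branch, so that the corresponding premise's induction hypothesis applies, while the other premise is vacuously irrelevant for that $\rho$ because its path condition concretises to an inconsistent set.

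\noindent
The main obstacle I anticipate is making the trace-concretisation bookkeeping rigorous in the \textsc{(Assign)} step: I must ensure that the concretisation mapping chosen for the premise's \emph{longer} trace restricts correctly to the one fixed for the conclusion's trace, that the freshly appended symbolic state $\sigma'$ concretises so that $\last$ commutes with $\rho$ as required, and that well-formedness of the extended trace is preserved so that the induction hypothesis is even applicable. This amounts to showing that $\rho(\consTr{\sigma'}{\tau})$ has as its final state precisely $\mapComp{\rho}{\update{\sigma}{x}{\valD{e}}}$, and that this equals the state obtained by executing \Simple{x:=e} in $\mapComp{\rho}{\sigma}=\last(\rho(\tau))$; the identity hinges on $\valB{\rho}{\valD{e}}=\valB{\mapComp{\rho}{\sigma}}{e}$, i.e.\ on evaluation and concretisation commuting. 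Once this single commutation lemma is in hand the remaining cases are routine, so I would isolate and prove it first.
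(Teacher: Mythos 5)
Your proposal is correct and follows essentially the same route as the paper's proof: a rule-by-rule case analysis in which \textsc{(Empty)} is the base case resting on an assumed sound first-order calculus, \textsc{(Skip)}/\textsc{(Assign)} are discharged by relating the symbolic trace step to the program semantics $\traces{\cdot}$ via the local evaluation and composition rules (with evaluation commuting with concretisation doing the work in \textsc{(Assign)}), \textsc{(Cond)} is resolved by observing that each concretisation $\rho$ satisfies exactly one of the two path conditions, and \textsc{(While)} reduces to \textsc{(Cond)}. The only visible difference is the stated induction measure --- you induct on derivation height while the paper nominally inducts on the structure of $\phi$ --- but since the paper's proof is in fact organized by the last rule applied, the two arguments coincide in substance (and your measure handles the loop-unfolding in \textsc{(While)} more transparently).
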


\begin{proof}
  The proof is by induction over the structure of $\phi$. We assume
  the soundness of first-order (or ``program-free'') formulas; i.e.,
  if $\vdash B$ then $\models B$ for $B\in \textbf{Terms}(\Sigma,V)$.

  \noindent
  {\bf Case \textsc{Empty}}. %
  Assume $\vdash \psi\Rightarrow \tau \left[\emp\right] \phi$.  By rule
  (\textsc{Empty}), we get $\vdash\psi\Rightarrow\tau\,\phi$.
  
  For the base case, we may assume that $\phi$ is a first-order
  formula. Then, by assumption, $\models\psi\Rightarrow\tau\,\phi$,
  which means that for any concretisation mapping $\rho$, if
  $\first(\rho(\tau))\models\psi$, then
  $\last(\rho(\tau))\models\phi$. The latter is, by
  Definition~\ref{def:sat}, equivalent to
  $\last(\rho(\tau))\models\left[\emp\right]\phi$. Since $\rho$ was
  arbitrary, this yields
  $\models\psi\Rightarrow\tau\left[\emp\right]\phi$.

  For the inductive
  case, we may assume that $\phi$ is any DL formula. Then,
  $\models\psi\Rightarrow\tau\,\phi$ follows from the induction
  hypothesis (hereafter, IH), and the rest of the proof is analogous to the base case.
    
  \smallskip
  
  \noindent
  {\bf Case \textsc{Skip}}. %
  Assume $\vdash\psi\Rightarrow\tau\left[\Simple{skip;
      s}\right]\phi$. By rule \textsc{(Skip)}, we get
  $\vdash\psi\Rightarrow\tau\left[\Simple{s}\right]\phi$, and by IH we
  have $\models\psi\Rightarrow\tau\left[\Simple{s}\right]\phi$.
  Hence, for any concretisation mapping $\rho$,
  if $\first(\rho(\tau))\models\psi$,
  then $\last(\rho(\tau))\models\left[s\right]\phi$. Then
  $\tau'\models\left[\emp\right]\phi$
  for all $\tau'\in\traces{s,\last(\rho(\tau))}$. By
  equations~\eqref{eq:ruleSeqExc-1},~\eqref{eq:Seq} and~\eqref{eq:skip}, we
  have $\traces{s,\sigma}=\traces{\Simple{skip; s},\sigma}$ for any $\sigma$.
  Therefore, for all
  $\tau''\in\traces{\Simple{skip; s},\last(\rho(\tau))}$ we have
  $\tau''\models\left[\emp\right]\phi$, and it follows that
  $\models\psi\Rightarrow\tau\left[\Simple{skip; s}\right]\phi$.

  \smallskip

  \noindent
  {\bf Case \textsc{Assign}}. %
  Assume
  $\vdash \psi \Rightarrow \tau \left[\Simple{x:=e; s}\right]\phi$.
  By rule \textsc{(Assign)}, we get
  $\vdash \psi \Rightarrow \consTr{\sigma'}{\tau}
  \left[\Simple{s}\right]\phi$ where $\last(\tau)=\sigma$ and
  $\sigma'=\update{\sigma}{x}{\valD{e}}$.
  By IH we have
  $\models \psi \Rightarrow \consTr{\sigma'}{\tau}
  \left[\Simple{s}\right]\phi$. Hence, for any concretisation mapping
  $\rho$, if $\first(\rho(\consTr{\sigma'}{\tau}))\models\psi$,
  then $\last(\rho(\consTr{\sigma'}{\tau}))\models\left[s\right]\phi$,
  i.e.\ $\rho(\sigma')\models\left[s\right]\phi$.
  By equations~\eqref{eq:ruleSeqExc-1},~\eqref{eq:Seq} and~\eqref{eq:assign}, we have
  $\traces{\Simple{x:=e; s},\sigma} =
  \chopTrSem{\{\consTr{\sigma'}{\langle \sigma\rangle}}{\tau'} \mid \tau'
  \in \traces{s,\sigma'}\}$.
  Therefore,
  $\last(\rho(\tau))\models\left[\Simple{x:=e; s}\right]\phi$ and,
  since $\rho$ was arbitrary and
  $\first(\rho(\consTr{\sigma'}{\tau}))=\first(\rho(\tau))$, we have
  $\models\psi \Rightarrow \tau \left[\Simple{x:=e; s}\right]\phi$.

  \smallskip

  \noindent
  {\bf Case \textsc{Cond}}. %
  Assume
  $\vdash \psi \Rightarrow \tau
  \left[\Simple{if}~\Simple{e}~\{\Simple{s'}\};
    \Simple{s}\right]\phi$.  By rule \textsf{(Cond)}, we get
  $\vdash \psi,\,\valD{e}=\falseSem \Rightarrow
  \tau\left[\Simple{s}\right]\phi$ and
  $\vdash \psi,\,\valD{e}=\trueSem \Rightarrow \tau \left[\Simple{s';
      s}\right]\phi$, where $\last(\tau)=\sigma$.  Correspondingly, we
  have two induction hypotheses:
  $\models \psi,\,\valD{e}=\falseSem \Rightarrow
  \tau\left[\Simple{s}\right]\phi$ (hereafter, IH1) and
  $\models \psi,\,\valD{e}=\trueSem \Rightarrow \tau \left[\Simple{s';
      s}\right]\phi$ (hereafter, IH2).  Let $\rho$ be any
  concretisation mapping and assume $\first(\rho(\tau))\models\psi$.
  Depending on the value of $\valB{\rho(\sigma)}{e}$, exactly one of
  the subcases applies.
  
  \smallskip  
  
  \textbf{Subcase IH1}. %
  We can assume $\valB{\rho(\sigma)}{e}=\falseSem$ and, by IH1,
  $\models \psi,\,\valD{e}=\falseSem \Rightarrow
  \tau\left[\Simple{s}\right]\phi$, which together gives
  $\last(\rho(\tau))\models\left[s\right]\phi$. By definition,
  $\tau'\models\left[\Simple{s}\right]\phi$ for all
  $\tau'\in\traces{\Simple{s},\last(\rho(\tau))}$. By
  Equations~\eqref{eq:ruleSeqExc-1},~\eqref{eq:Seq} and~\eqref{eq:If},
  we have
  $\traces{\Simple{s},\sigma}=\traces{\Simple{if}~\Simple{e}~\{\Simple{s'}\};
      \Simple{s},\,\sigma}$, observing that the path condition is
  consistent. Therefore,
  $\tau'\models\left[\Simple{if}~\Simple{e}~\{\Simple{s'}\};
    \Simple{s}\right]\phi$ for all
  $\tau'\in\traces{\Simple{if}~\Simple{e}~\{\Simple{s'}\};
    \Simple{s},\,\last(\rho(\tau))}$ and so
  $\models\psi\Rightarrow\tau\left[\Simple{if}~\Simple{e}~\{\Simple{s'}\};
    \Simple{s}\right]\phi$.

  \smallskip  

  \textbf{Subcase IH2}. %
  We can assume $\valB{\rho(\sigma)}{e}=\trueSem$ and proceed by a
  completely analogous argument as in the other subcase.

  \smallskip
 
 \noindent
 { \bf Case \textsc{While}}. Follows directly from the case for \textsc{Cond}.
\end{proof}

\section{Semantics for a Shared-Variable Parallel Programming
  Language}
\label{sec:interleaving}

In this and the following section we show that the LAGC semantics
naturally extends to cover advanced language constructs.  We gradually
extend \Simple{WHILE} with parallel programming constructs:
% (like \cite{Andrews99,apt09vscp})
parallel execution, procedure calls, distributed memory, dynamic
process creation, and communication between procedures. To cater for
interleaved execution of parallel instructions, and to allow for the
compositional definition of semantics, the traces of the sequential
language are gradually unfolded by means of continuations.

\subsection{Local Parallelism}
\label{sec:local.parallelism}

We extend \Simple{WHILE} with a statement for parallel execution, such
that the syntax for statements from Figure~\ref{fig:basiclangWhile}
becomes:
\[
  \begin{array}{r@{\,}l@{\,}l}
    s\in \mathit{Stmt} &::=& \Simple{co}\ s \para s\ \Simple{oc}\ \ |\
                             \ldots\ .\\
  \end{array}
\]
The semantics of this statement consists in interleaving the
evaluation of the two parallel branches at the granularity of atomic
statements, as exemplified below.

\begin{example}\label{exmpl:co}
  The execution of the program
  $s_{co}=\Simple{co}\ x:=1;\ y:=x+1 \para x:=2\ \Simple{oc}$ produces
  one of three possible traces that correspond to the traces of the
  following sequential programs: $x:=1;\ y:=x+1;\ x:=2$, or
  $x:=1;\ x:=2;\ y:=x+1$, or $x:=2;\ x:=1;\ y:=x+1$.
\end{example}

\subsubsection{Local Evaluation}

The evaluation rule for the parallel execution statement \Simple{co}
$r\para s$ \Simple{oc} branches on the statement which first gets to
execute, resulting in two sets of traces where the first set contains
traces with the path condition of $r$ and the continuation of $r$ in
parallel with $s$ and the second set contains traces with the path
condition of $s$ and the continuation of $s$ in parallel with r. The
valuation rule is formalised as follows:

\evalruleN{Par}{%
  \valD{\Simple{co}\ r\para s\ \Simple{oc}} 
  & = & \{\pc_r\!\pop\!\concatTr{\tau_r}{\cont{\Simple{co}\ r'\para s\ \Simple{oc}}}\!\mid\!\pc_r\!\pop\!
  \concatTr{\tau_r}{\cont{r'}} \!\in\!\valD{r}\}\\
  & & \cup  \{\pc_s\!\pop\!\concatTr{\tau_s}{\cont{\Simple{co}\ r\para
      s'\ \Simple{oc}}}\!\mid\!\pc_s\!\pop\!
  \concatTr{\tau_s}{\cont{s'}} \!\in\!\valD{s}\}\ .
}

\noindent
As with sequential composition in rule~\eqref{eq:Seq}, we need to ensure that
empty continuations are not propagated. To this end we define the
following rewrite rule for parallel composition inside continuations
(observe that the case $s=s'=\zero$ cannot occur):
\begin{equation}  
  \Simple{co}\ s\,\para s'\ \Simple{oc} \rightsquigarrow
  \left\{\begin{array}{rl}
            s' & \text{ if }s=\zero,\,s'\neq\zero \\
            s & \text{ if }s\neq\zero,\, s'=\zero\ .
    \end{array}\right.
  \label{eq:rewrite}
\end{equation}

\subsubsection{Trace Composition}

The composition rule~\eqref{eq:ruleSeqExc-1} can be kept unchanged
(although the abstract variant in rule~\eqref{eq:ruleGlParExc-a} below can be
used as well). Its effect is that all combinations of execution
sequences of atomic statements in the parallel branches can be
generated. This behaviour corresponds to the classical interleaving
semantics (e.g., \cite{Andrews99}), as demonstrated below.

\begin{example}
  \label{exmpl:parallel}
  Consider program $s_{co}$ from Example~\ref{exmpl:co}, its evaluation is
  \[
    \valD{s_{co}} 
    \begin{array}[t]{l}
      =\{ 
      \emptyset\pop
      \concatTr{\consTr{\update{\sigma}{x}{1}}{\langle\sigma\rangle}}{\cont{\Simple{co}\ y:=x+1 ~\para~ x:=2\ \Simple{oc}}}
      \}
      \cup\\\quad
      \{ 
      \emptyset\pop
      \concatTr{\consTr{\update{\sigma}{x}{2}}{\langle\sigma\rangle}}{\cont{\Simple{co}\  x:=1;y:=x+1~\para~\zero ~ \Simple{oc}}}
      \}\\
      =\{ 
      \emptyset\pop
      \concatTr{\consTr{\update{\sigma}{x}{1}}{\langle\sigma\rangle}}{\cont{\Simple{co}\ y:=x+1 \para x:=2\ \Simple{oc}}}
      \}
      \cup\\\quad
      \{ 
      \emptyset\pop
      \concatTr{\consTr{\update{\sigma}{x}{2}}{\langle\sigma\rangle}}{\cont{ x:=1;y:=x+1}}
      \}\ ,
    \end{array}
\]

\noindent
using the following sub-evaluations:
$\valD{x:=1;y:=x+1}=\{\emptyset\pop\concatTr{\consTr{\update{\sigma}{x}{1}}{\langle\sigma\rangle}}{\cont{y:=x+1}}\}$
and
$\valD{x:=2}=\{\emptyset\pop\concatTr{\consTr{\update{\sigma}{x}{2}}{\langle\sigma\rangle}}{\cont{\zero}}\}$.
For trace composition, we first need to evaluate the continuation
$\Simple{co}\ y:=x+1 \para x:=2\ \Simple{oc}$:
  \[
    \valD{\Simple{co}\ y:=x+1 \para x:=2\ \Simple{oc}} 
    \begin{array}[t]{l}
      =\{ 
      \emptyset\pop
      \concatTr{\consTr{\update{\sigma}{y}{\valD{x+1}}}{\langle\sigma\rangle}}{\cont{\Simple{co}\ \zero ~\para~  x:=2\ \Simple{oc}}}
      \}
      \cup\\\quad
      \{ 
      \emptyset\pop
      \concatTr{\consTr{\update{\sigma}{x}{2}}{\langle\sigma\rangle}}{\cont{\Simple{co}\ y:=x+1 ~\para~  \zero\ \Simple{oc}}}\}
      \\
      =\{ 
      \emptyset\pop
      \concatTr{\consTr{\update{\sigma}{y}{\valD{x+1}}}{\langle\sigma\rangle}}{\cont{ x:=2}}\}
      \cup\\\quad
      \{ 
      \emptyset\pop
      \concatTr{\consTr{\update{\sigma}{x}{2}}{\langle\sigma\rangle}}{\cont{y:=x+1}}\}\ ,
    \end{array}
  \]
  using the following sub-evaluations:
  $\valD{y:=x+1}=\{\emptyset\pop\concatTr{\consTr{\update{\sigma}{y}{\valD{x+1}}}{\langle\sigma\rangle}}{\cont{\zero}}\}$
  and
  $\valD{x:=2}=\{\emptyset\pop\concatTr{\consTr{\update{\sigma}{x}{2}}{\langle\sigma\rangle}}{\cont{\zero}}\}$.

  To illustrate trace composition, let us consider the trace where
  statement $x:=1$ is executed first and explore the remaining
  possible traces. We start from the state $I_{co}$. The first
  application of rule~\eqref{eq:ruleSeqExc-1} results in the following
  concrete trace and continuation:
  $$\consTr{ [x\mapsto 1]}{\langle I_{co}\rangle}, \cont{\Simple{co}\
    y:=x+1 \para x:=2\ \Simple{oc}}\ .$$
  At this point, two different instances of the composition rule are
  applicable, corresponding to the two possible interleavings; i.e.,
  there is a choice between the two continuations in
  $\valB{[x\mapsto 1]}{\Simple{co}\ y:=x+1 \para x:=2\ \Simple{oc}}$
  as shown above. The first possible instance of the composition rule is:
  \gcondrule{SeqExc-1exc}{
    [x\mapsto 1] = \last(\consTr{ [x\mapsto 1]}{\langle
      I_{co}\rangle}) \qquad     \emptyset \text{ consistent}\\
  \hspace{3mm}  \emptyset\pop \concatTr{\consTr{\update{[x\mapsto
          1]}{y}{\valB{[x\mapsto 1]}{x+1}}}{\langle[x\mapsto
        1]\rangle}}{\cont{ x:=2}} \\
\hspace{-17mm} \in \valB{[x\mapsto 1]}{\Simple{co}\
      y:=x+1 \para x:=2\ \Simple{oc}}
  }{
    \consTr{ [x\mapsto 1]}{\langle I_{co}\rangle}, \cont{\Simple{co}\  y:=x+1 \para x:=2\ \Simple{oc}} \to\\
    \consTr{ [x\mapsto 1,y\mapsto 2]}{\consTr{ [x\mapsto 1]}{\langle I_{co}\rangle}}, \cont{ x:=2}
  }

  One further rule application is possible, which results in the final step:
  \[
    \begin{array}{l}
      \consTr{ [x\mapsto 1,y\mapsto 2]}{\consTr{ [x\mapsto 1]}{\langle
      I_{co}\rangle}}, \cont{ x:=2}\to\\ \consTr{[x\mapsto 2,y\mapsto
      2]}{\consTr{ [x\mapsto 1,y\mapsto 2]}{\consTr{ [x\mapsto
      1]}{\langle I_{co}\rangle}}}, \cont{\zero}\enspace .
    \end{array}
  \]

  The second possible instance of the composition rule is similar
  \iffalse
  % 
  \gcondrule{SeqExc-1exd}{
    [x\mapsto 1] = \last(\consTr{ [x\mapsto 1]}{\langle I_{co}\rangle}) \\ 
    \emptyset\pop
    \concatTr{\consTr{\update{[x\mapsto 1]}{x}{2}}{\langle[x\mapsto 1]\rangle}}{\cont{y:=x+1}}
    \in \valB{[x\mapsto 1]}{\Simple{co}\ y:=x+1 \para x:=2\ \Simple{oc}}
    % = \valD{y:=x+1}
    \\
    \emptyset \text{ consistent}
  }{
    \consTr{ [x\mapsto 1]}{\langle I_{co}\rangle}, \cont{\Simple{co}\  y:=x+1 \para x:=2\ \Simple{oc}} \to
    \consTr{ [x\mapsto  2]}{\consTr{ [x\mapsto 1]}{\langle I_{co}\rangle}}, \cont{y:=x+1}
  }
  
  One further rule application is possible, which results in the final step:
  \fi
  and yields:
  \[\begin{array}{l}
      \consTr{ [x\mapsto  2]}{\consTr{ [x\mapsto 1]}{\langle I_{co}\rangle}}, \cont{y:=x+1}
      \to\\ 
      \consTr{[x\mapsto 2,y\mapsto 3]}{\consTr{ [x\mapsto  2]}{\consTr{ [x\mapsto 1]}{\langle I_{co}\rangle}}}, \cont{\zero}\enspace .
    \end{array}
\]
  
The third possible trace is obtained analogously by starting with the
second branch. After this, only one continuation is
possible. Altogether, the following traces are obtained:
  \[
    \begin{array}{lcl}
      \traces{s_{co}, I_{co}} & =\;\{ & \consTr{[x\mapsto 2,y\mapsto 2]}{\consTr{
      [x\mapsto 1,y\mapsto 2]}{\consTr{ [x\mapsto 1]}{\langle
        I_{co}\rangle}}},\\
      & & \consTr{[x\mapsto 2,y\mapsto 3]}{\consTr{
      [x\mapsto 2]}{\consTr{ [x\mapsto 1]}{\langle I_{co}\rangle}}},\\
      & & \consTr{[x\mapsto 1,y\mapsto 2]}{\consTr{ [x\mapsto 1]}{\consTr{
      [x\mapsto 2]}{\langle I_{co}\rangle}}}\;\}\ .
    \end{array}
  \]
\end{example}

\subsubsection{Calculus}
\label{sec:calculus-localpar}

Let $ar$ and $as$ range over the atomic statements of
  \Simple{WHILE} (i.e., assignment and skip).
It is straightforward to realise a set of DL rules directly inspired
by rule~\eqref{eq:Par}. Local parallelism introduces a new composition
operator, and the rules need to cover the different cases of nested
statements.
\[
\begin{array}{c}
    % \textsc{\footnotesize (Par$_1$)}\;
    \begin{array}{c}
      \textsc{\footnotesize (Par)}\\
      \Gamma,\,pc\Rightarrow\tau\left[ar;\Simple{co}\;r'\para as;\,s'\;\Simple{oc}\right]\phi
      \\
      \Gamma,\,pc\Rightarrow\tau\left[as;\Simple{co}\;ar;\,r'\para s'\;\Simple{oc}\right]\phi 
      \\\hline
      \Gamma,\,pc\Rightarrow\tau\left[\Simple{co}\;ar;\,r'\para as;\,s'\;\Simple{oc}\right]\phi 
    \end{array}
    \quad
    % \textsc{\footnotesize (Par$_2)$}\;
    \begin{array}{c}\\
      \textsc{\footnotesize (Par-Empty$_1)$}\\
      \Gamma,\,pc\Rightarrow\tau\left[s\right]\phi 
      \\\hline
      \Gamma,\,pc\Rightarrow\tau\left[\Simple{co}\emp\!\!\para s\;\Simple{oc}\right]\phi 
    \end{array}
    \quad
    % \textsc{\footnotesize (Par$_3$)}\;
    \begin{array}{c}\\
      \textsc{\footnotesize (Par-Empty$_2$)}\\
      \Gamma,\,pc\Rightarrow\tau\left[r\right]\phi 
      \\\hline
      \Gamma,\,pc\Rightarrow\tau\left[\Simple{co}\;r \para\!\!\emp\Simple{oc}\right]\phi 
    \end{array}\\\\
       \begin{array}{@{}c@{}}
      \textsc{\footnotesize (Par-If$_1$)}\\
\last(\tau)=\sigma \\
        \Gamma, pc\cup\{\valD{e}=\falseSem\} \Rightarrow\tau\left[\Simple{co}\;\,r'\para s\;\Simple{oc}\right]\phi 
      \\
             \Gamma, pc\cup\{\valD{e}=\trueSem\}  \Rightarrow\tau\left[\Simple{co}\;r;\,r'\para s\;\Simple{oc}\right]\phi 
      \\\hline
      \Gamma,\,pc\Rightarrow\tau\left[\Simple{co}\;\Simple{if}\ e\ \{ r\};\,r'\para s\;\Simple{oc}\right]\phi 
    \end{array}
    \qquad
    \begin{array}{@{}c@{}}
      \textsc{\footnotesize (Par-If$_2$)}\\
\last(\tau)=\sigma \\
        \Gamma, pc\cup\{\valD{e}=\falseSem\} \Rightarrow\tau\left[\Simple{co}\;\,r\para s'\;\Simple{oc}\right]\phi 
      \\
             \Gamma, pc\cup\{\valD{e}=\trueSem\}  \Rightarrow\tau\left[\Simple{co}\;r\para s;s'\;\Simple{oc}\right]\phi 
      \\\hline
      \Gamma,\,pc\Rightarrow\tau\left[\Simple{co}\;r\para \Simple{if}\ e\ \{ s\};\,s'\;\Simple{oc}\right]\phi 
    \end{array}
\end{array}
\]

The base cases are covered by rules (\textsc{Par)},
(\textsc{Par-Empty$_1$)} and (\textsc{Par-Empty$_2$)}.  Rule
(\textsc{Par}), where $ar$ and $as$ match atomic statements, splits
the proof into two cases, depending on the first executed atomic
statement in the interleaved execution. The two rules
(\textsc{Par-Empty}$_1$) and (\textsc{Par-Empty}$_2$) correspond to
rewriting with empty continuations in rule~\eqref{eq:rewrite}.
The remaining rules deal with the case when the statement inside
  the parallel statement is not atomic. We illustrate these rules with
  (\textsc{Par-If}$_1$) and (\textsc{Par-If}$_2$), which resolve a
  conditional inside the parallel statement. Similar rules are required
  for inner loops, which unfold a \Simple{while}-statement in either
  branch into a conditional, and for nested parallel statements, which split
  the proof by recursively unfolding different branches.  While the
soundness of these rules, relative to the LAGC semantics, is evident,
they are impractical because they quickly lead to path
explosion. Numerous approaches to mitigate this effect have been
proposed in the literature
(e.g.,~\cite{owicki76acta,apt80toplas,ohearn07tcs,godefroid97popl}),
but here is not the place to dwell on them.

\subsection{Atomic}
\label{sec:atomic}

We extend \Simple{WHILE} with atomic blocks to control
interleaving in the execution of parallel statements, such that
the syntax for statements from Figure~\ref{fig:basiclangWhile} becomes
\[
  \begin{array}{r@{\,}l@{\,}l}
    s\in \mathit{Stmt} &::=& \Simple{co}\ s \para s\ \Simple{oc}\
                             \mid \Simple{atomic}({st})\ \mid\ \ldots\enspace,\\
  \end{array}
\]
where $st$ is a statement without \Simple{while} loops.

\subsubsection{Local Evaluation}

The \Simple{atomic} statement protects its argument against
interleaving computations from other branches of a parallel execution
operator. A (loop-free) statement $st$ can be made to execute
atomically, i.e.\ without preemption, written $\Simple{atomic}(st)$.
Atomic execution requires interleaving points to be \emph{removed}
during trace composition.
\evalruleN{Atom}{%
  \valD{\Simple{atomic}(st)} =
  & & \{ \pc_1\cup\pc_2\pop \concatTr{\chopTrSem{\tau_1}{\tau_2}}{\cont{\zero}}
  \mid \pc_1 \pop \concatTr{\tau_1}{\cont{st'}} \in\valD{st} \land st'
  \not = \zero \ \land\\
  & & \qquad\qquad \pc_2\pop \concatTr{\tau_2}{\cont{\zero}} \in\valDP{\Simple{atomic}(st')}
  \land \sigma' = \last(\tau_1)\} \ \cup\\
  & & \{\pc\pop\concatTr{\tau}{\cont{\zero}} \mid
  \pc\pop\concatTr{\tau}{\cont{\zero}}\in\valD{st}\}
}

The main idea behind rule~\eqref{eq:Atom} is to recursively unfold the
execution of its atomic argument~$st$, while removing continuation
markers. During trace composition, this will prevent the atomic code
from being interleaved.
The rule has two cases, depending on whether the evaluation of $st$
contains a non-empty continuation or not: if the continuation is not
empty, we need to evaluate it immediately, as the execution cannot be
interrupted before the end of the atomic statement.
Note the structural similarity in the definition of the first trace
set above and the composition rule~\eqref{eq:ruleSeqExc-1}. The
difference is that the consistency check and concretisation are
deferred until the actual trace composition.

\paragraph{Discussion.}
In this execution model, the semantics $\valDnoargs$ of
non-terminating atomic statements is \emph{undefined}, which is the
reason for excluding loops.
This is a design choice and not a principal limitation: It is possible
to define the semantics of non-terminating atomic programs with
suitable scheduling events that allow to process non-terminating code
piece-wise~\cite{DHJPT17}. However, that approach is more complex than
the adopted solution and breaks with the modularity we aim at with
only one rule per syntactic construct.
A different solution is to equip \Simple{atomic} with a path condition
argument (using $\pc_1$ in the recursive call) and only keep
consistent traces. When $\sigma$ is concrete, this would suffice for
terminating loops.

Another solution, presented in Section~\ref{sec:actor.future} below,
is to define a trace composition rule with run-to-completion
semantics: atomic execution of \emph{any} statement is the default,
which can be interrupted only at explicit suspension points.

Observe that the semantics of non-terminating \emph{non-atomic}
programs is well-defined: infinite traces are produced by an infinite
number of applications of the composition rule.

\subsubsection{Trace Composition}

The trace composition rule~\eqref{eq:ruleSeqExc-1} is unchanged, but
it is worthwhile to observe how it works in the presence of
$\Simple{atomic}$. When rule~\eqref{eq:ruleSeqExc-1} processes a
continuation of the form $\Simple{atomic}(st)$, it needs to evaluate
$\valD{\Simple{atomic}(st)}$ for the concrete state
$\sigma=\last(sh)$. Rule~\eqref{eq:Atom} evaluates $s$ until it
reaches a continuation $s'$, then recursively evaluates $s'$, until
$s$ is completely executed.  No interleaving can occur during the
evaluation, which reflects the intended semantics of atomic
statements.

\begin{example}
  To illustrate the evaluation of interleaved execution with
  $\Simple{atomic}$, we modify Example~\ref{exmpl:parallel} as
  follows: let
  $s_{at}=\Simple{co}\ \Simple{atomic}(x:=1;y:=x+1) \para x:=2\
  \Simple{oc}$.  The evaluation branches into either
  $\Simple{atomic}(x:=1;y:=x+1)$ and then $x:=2$, or $x:=2$ and then
  $\Simple{atomic}(x:=1;y:=x+1)$. In either case, the final value of
  $y$ is $2$, as shown in the following evaluation:
\[
  \begin{array}[t]{l@{}l}
    \valD{s_{at}} 
    = & \{ 
        \emptyset\pop
        \concatTr{\consTr{\update{\update{\sigma}{x}{1}}{y}{\valB{\update{\sigma}{x}{1}}{x+1}}} {\langle\sigma\rangle}}{\cont{ x:=2}}
        \}
        \ \cup\\
      & \{ 
        \emptyset\pop
        \concatTr{\consTr{\update{\sigma}{x}{2}}{\langle\sigma\rangle}}{\cont{ \Simple{atomic}(x:=1;y:=x+1)}}
        \}\ .
  \end{array}
\]

\noindent
Now the trace in which $y$ has the final value $3$, is no longer
produced; starting
from state $I_{at}=I_{co}$ we have
$\traces{s_{at}, I_{at}}=\{\consTr{[x\mapsto 2,y\mapsto 2]}{\consTr{ [x\mapsto
    1,y\mapsto 2]}{\consTr{ [x\mapsto 1]}{\langle I_{at}\rangle}}},\
\consTr{[x\mapsto 1,y\mapsto 2]}{\consTr{ [x\mapsto 1]}{\consTr{
      [x\mapsto 2]}{\langle I_{at}\rangle}}}\}$.

\end{example}

\subsubsection{Calculus}
\label{sec:calculus-atomic}

Rules for \Simple{atomic} are remarkably simple, because
non-interleaved execution is the \emph{default} in the calculus while
interleaving is modeled with a case distinction.  We let
\Simple{atomic(st)} be an atomic statement which matches $ar$ and $as$
in Section~\ref{sec:calculus-localpar}, such that rule (\textsc{Par})
applies, and introduce a rule which captures that an atomic statement
is equivalent to the statement itself in a sequential setting:
\[
  \begin{array}{@{}c@{}}
  \textsc{\footnotesize (Atomic)}\\
      \Gamma,\,pc\Rightarrow\tau\left[st;s\right]\phi 
      \\\hline
      \Gamma,\,pc\Rightarrow\tau\left[\Simple{atomic}(st);s\right]\phi 
  \end{array}
\]

\subsection{Local Memory}
\label{sec:localmemory}

We extend \Simple{WHILE} with local variable declarations
by introducing a syntactic category of variable declarations
$\mathit{VarDecl}$ and add blocks to the syntax for statements from
Figure~\ref{fig:basiclangWhile}:
\[
  \begin{array}[t]{r@{\hspace{2pt}}r@{\hspace{2pt}}l}
    s\in \mathit{Stmt} &::=& \{\ d \ s\ \} \mid \ldots\\
    d\in\mathit{VarDecl} &::=& \varepsilon\ |\ x; \ d 
  \end{array}
\]
For simplicity, we omit an empty list of variable declarations in a
scope $\{\ \varepsilon\ s\ \}$ and just write $\{\ s\ \}$.

\subsubsection{Local Evaluation} 

A block $\{\ d\ s\ \}$ introduces a local variable scope such that the
program variables $d$ should only be accessible for the statement
$s$. To avoid interference among variable declarations in different
scopes, possibly introducing the same variable name, the evaluation
rule renames the variables in $d$ to unused names and adds the renamed
program variables to the state. The local program variables are
correspondingly renamed in $s$. The local evaluation rules cover a
non-empty and empty list of local variable declarations, respectively.
\evalruleN{scope1}{%
  \valD{\{\ x;d\ s\ \}}  & = & 
  \{ \emptyset \pop \concatTr{\consTr{\sigma[x' \mapsto 0]}{\langle
      \sigma\rangle}}{\cont{\{\ d\ s\subst{x}{x'}\ \}}}
 \mid x' \notin \dom(\sigma)  \}}
\evalruleN{scope2}{%
\valD{\{\  s\ \}} & = & \valD{s} 
}
Here, $s\subst{x\!}{\!x'}$ denotes the textual substitution of program variable
$x$ by $x'$ in the statement $s$.

\subsubsection{Trace Composition}

Trace composition in rule~\eqref{eq:ruleSeqExc-1} is unchanged. It
will gradually extend the state with fresh variables, bound to the
default value $0$, until the local variable declarations of the scope
have been reduced to an empty list, at which point the scope is
removed and execution can continue as normal.

\subsubsection{Calculus}
\label{sec:calculus-localmem}

The rules are straightforward. We model initialisation by an
assignment to the default value, for which a rule already
exists. Actually, this requires a slight generalisation of the
language syntax, such that variable declarations can occur anywhere in
a block, not just at the beginning.
\[
  \begin{array}{@{}c@{}}
  \textsc{\footnotesize (local$_1$)}\\ 
    \Gamma,\,pc\Rightarrow\tau\left[x':=0;\, \{s'\}\right]\phi
    \quad
    x'\text{ fresh}
    \quad
    s'=s[x\leftarrow x']
    \\\hline
    \Gamma,\,pc\Rightarrow\tau\left[\{x;\,s\}\right]\phi 
  \end{array}
  \qquad \qquad
  \begin{array}{@{}c@{}}
  \textsc{\footnotesize (local$_2$)}\\
    \Gamma,\,pc\Rightarrow\tau\left[s\right]\phi
    \\\hline
    \Gamma,\,pc\Rightarrow\tau\left[\{s\}\right]\phi 
  \end{array}
\]

\subsection{Introducing Events and Symbolic Traces}
\label{sec:user-input}

Symbolic traces can contain states in which variables may be bound to
unknown values. A simple case where this become necessary is when
\Simple{WHILE} is extended with a statement to express user input,
such that the syntax for statements from
Figure~\ref{fig:basiclangWhile} becomes
\[s\in \mathit{Stmt} ::= \Simple{input}(x) \mid\ \ldots\ .\]

For the \Simple{input}-statement,
Proposition~\ref{prop-concretestates} does not hold, because we do not
know the value of the user input $x$ when locally evaluating the
statement.  To represent these unknown values in the local semantics,
we use symbolic variables introduced in
Definition~\ref{def:symbolic-variables}.  Events can be used to
introduce a point of interaction; we define an event type $\inpEv{Y}$
that captures the introduction of a symbolic variable $Y$ by an
\Simple{input}-statement. The local evaluation rule for
\Simple{input($x$)} can be formalised as follows:
\evalruleN{input}{%
  \valD{\Simple{input}(x)} = \{
  \emptyset \pop \concatTr{\consTr{\sigma[Y\mapsto * ] [\, x\mapsto Y]}{\inpEvpV{\sigma}{Y}{\{Y\}}}
  }{\cont{\zero}}\mid Y\not\in \dom(\sigma)
  \}\ .
}

\noindent
The trace is well-formed, because the variable $Y$ inside the event is
symbolic.

The composition rule accommodates
Proposition~\ref{prop-concretestates} by using a concretisation
mapping to concretise the emitted trace. Concretisation captures the
actual user input when the statement is executed in a concrete trace.
\gcondrule{GlParExc-a}{
    \sigma = \last(sh)
\quad
 pc\pop \concatTr{\tau }{\cont{s'}} \in \valB{\sigma}{s}
\quad \rho\text{ concretises }\tau\quad\rho(pc) \text{ consistent}
}{
 sh, \cont{s} \to
  \chopTrSem{sh}{\rho({\tau})}, \cont{s'}
}

The concretisation mapping $\rho$ ensures that the resulting trace
$\chopTrSem{sh}{\rho({\tau})}$ is concrete. 

The previous composition rule~\eqref{eq:ruleSeqExc-1} is a special
case of this one, when all the states are concrete (this follows from
Proposition~\ref{prop:concretisation}).

\begin{example}
$\traces{\Simple{input}(x),I_{seq}}=\{
  \chopTrSem{\langle I_{seq}\rangle}
  {\consTr{\sigma[Y\mapsto v] [\, x\mapsto v]}{\inpEvpV{\sigma}{Y}{\{Y\}}}} 
\mid v\in \mathit{Val}\} =
\{\consTr{I_{seq}[Y\mapsto v,\, x\mapsto v]} {\consTr{
    I_{seq}[Y\mapsto v]}{\consTr{\inpEv{v}}{\langle
      I_{seq}\rangle}}}\mid v\in \mathit{Val}\}$.
All possible values $v$ may occur as an input.  Variable $Y$ has a
concrete value now; $x,\,Y$ are needed to propagate the input value
correctly, for example, in subsequent code of the form ``\Simple{if}
($x>0$) \Simple{then ...}''.
\end{example}

\paragraph{Calculus}

The calculus rule directly follows the local evaluation
rule~\eqref{eq:input} by introducing a fresh variable $Y$ to model the
unknown user input. By definition we have $\sigma_\ast(Y)=\ast$, so we
do not need this assignment.
\[
  \textsc{\footnotesize (Input)}\quad
  \begin{array}{c}
    \last(\tau)=\sigma
    \quad
    \Gamma,\, pc \Rightarrow \chopTrSem{\tau}{\inpEvpV{\sigma}{Y}{\{Y\}}} \left[x:=Y;\,s\right]\phi
    \quad
    Y\text{ fresh}
    \\\hline
    \Gamma,\, pc \Rightarrow \tau \left[\Simple{input}(x);\,s\right]\phi 
  \end{array}
\]

\noindent
Interestingly, the abstract composition rule~\eqref{eq:ruleGlParExc-a}
has no effect on the calculus, which is already symbolic, but the
soundness proof needs to be adapted.

\subsection{Procedure Calls}
\label{sec:procedure}

\begin{figure}[t!]\centering
$ \begin{array}[t]{r@{\hspace{2pt}}r@{\hspace{2pt}}l}
    P\in\mathit{Prog}&::=&\overline{M}\,\{s\}\\
    M\in\mathit{Meth}&::=&m(x)\{s\} \\
         s\in\mathit{Stmt}&::=&  \Simple{skip} \mid  x := e \mid
                   \Simple{if}~e~\{ ~s~\} \mid s;s \mid
                   \Simple{while}\ e\, \{\, s\, \} \\
&& \mid \Simple{co}\ s \para s\ \Simple{oc} 
 \mid  \Simple{atomic}(s)   
\mid   \Simple{call}(m,e) 
     \end{array}$
\caption{\label{fig:calls}Program Syntax with Procedure Calls.}     
\end{figure}

Figure~\ref{fig:calls} introduces procedure\footnote{Historically, it is
  more common to use the term \emph{procedure}, whereas in
  object-oriented programming the term \emph{method} is usual, which
  is where the LAGC concept originated from. We use both terms
  interchangeably.} calls with the statement $\Simple{call} (m, e)$: A
program is a set of method declarations $\many{M}$ with main block
$\{s\}$.
We consider $\many{M}$ to be a \emph{method table} that is implicitly
provided with a program.
Each method declaration associates a unique name $m$ with the
statement in its body. To reduce technicalities, we assume without
loss of generality that a method call has one argument and no return
value on which the caller needs to
synchronise.\footnote{Synchronisation on return values is discussed in
  Section~\ref{sec:actor.future}.}

In contrast to parallel $\Simple{co}\ s \para s\ \Simple{oc}$
statements, where parallelism is explicit in the program syntax,
procedure calls introduce implicit parallel execution: there is a
context switch between the syntactic call site and the processor,
whereupon the call is executed in parallel. This decoupling has an
important consequence: according to the local evaluation principle,
method bodies are evaluated independently of the call context. This
necessitates a new composition rule that starts the execution of a
called method. We also need to ensure that a method is not executed
before it is called, requiring the introduction of suitable
\emph{events}.

\subsubsection{Local Evaluation}
\label{sec:procedure.local.evaluation}
The new events are $\textit{invEv}(m, \valD{e})$ and
$\textit{invREv}(m, \valD{e})$ for a given state $\sigma$. These
events denote the \emph{invocation} and the \emph{activation} (also
called the \emph{invocation reaction}) of a method $m$ with argument
$e$, respectively.  Recall from Section~\ref{sec:traces} that events
inserted into a trace at a state $\sigma$ are preceded and succeeded
by that state.  The local evaluation rule~\eqref{eq:Call} for a call
to $m$ with argument $e$ has an empty path condition, inserts an
\emph{invocation} event $\invEv{}{m,\valD{e}}$ into the trace, and
continues with the empty continuation. This means that the call is
\emph{non-blocking}: the code following the call could be executed
immediately, if scheduled.  The rule is formalised as follows:
\evalruleN{Call}{\valD{\Simple{call}(m,e)}
  =\{\emptyset\pop\concatTr{\invEv{\sigma}{m,\valD{e}}}{\cont{\zero}}\}\ .
}

\subsubsection{Trace Composition:  Concrete Variant}
\label{sec:trace-comp-concr}

The concurrency model so far is still very simple: method invocations
and processors are anonymous, there is no way to distinguish between
two calls to the same method: these execute identical code and all
executions can be interleaved. For this reason, it is sufficient to
represent continuation candidates for the next execution step as a
multiset $q$.  We denote with ``$\multiunion$'' the \emph{disjoint}
multiset union.  We add a rewrite rule to simplify empty continuations
in the multisets of tasks; this rule is applied exhaustively when
adding tasks in multisets.
\[
  q \multiunion \{ \cont{\zero} \} \rightsquigarrow q .
\]

The judgment representing trace extension by one step
has the form: $sh,\,q \to sh',\, q'$ where $q$, $q'$ are multisets of
continuations of the form $\cont s$.

\gcondrule{GlParExc-1}{
  \sigma = \last(sh) \quad 
  \pc \pop \concatTr{\tau}{\cont{s'}} \in \valD{s} 
  \quad
  \pc \text{ consistent}
}{
  sh, q \multiunion \{ \cont{s} \} \to
  \chopTrSem{sh}{\tau}, q \multiunion \{ \cont{s'} \}
} 

The difference between rule~\eqref{eq:ruleSeqExc-1} and
rule~\eqref{eq:ruleGlParExc-1} is that we no longer commit to one
possible continuation, because the continuation code is now located in
a method.  Therefore, the rule selects and removes one matching
continuation from the pool $q$, executes it to the next continuation
marker, exactly like rule~\eqref{eq:ruleSeqExc-1}, and then puts the
code remaining to be executed back into $q$.  We also need a new
composition rule that adds method bodies to the pool $q$.
\gcondrule{GlParExc-2}{
m(x)\{s\} \in \many{M} 
 \quad
    \sigma = \last(sh)
\quad
  \wf{\chopTrSem{sh}{\invREv{\sigma}{m,v}}}
\quad y \not\in\dom(\sigma)
}{
  sh, q \to
  \chopTrSem{sh}{\consTr{\update{\sigma}{y}{v}}{\invREv{\sigma}{m,v}}}, q \multiunion \{ \cont{s\subst{x}{y}} \}
}

\noindent
The rule handles the creation of a new execution thread.  We select a
method with body $s$ from table $\many{M}$ and create a new
continuation $\cont{s\subst{x}{y}}$, where the call parameter $x$ is
substituted with a fresh variable $y$ for disambiguation. The new
continuation is added to the pool $q$. Next, we need to record that a
new method with argument $v$ has started to execute, as a consequence
of a previous method call. The existence of this call is ensured by
the premise expressing the well-formedness of the extended trace,
which requires an invocation event of the form $\invEv{}{m,v}$ to be
present in $sh$.  We extend the current trace with the
\emph{invocation reaction} event\footnote{Recall that
  $\invREv{\sigma}{m,v}$ is a triple with two copies of $\sigma$
  around $\invREv{}{m,v}$.}  $\invREv{\sigma}{m,v}$ to mark the start
of method execution and record the extension of the current state in
which the parameter $y$ has the value $v$.  It remains to formalise
well-formedness.

\subsubsection{Well-Formedness}

A trace is well-formed if its events obey certain ordering
restrictions.  Whenever a trace $sh$ is extended with an invocation
reaction event of the form $\invREv{}{m,v}$, there must be a
corresponding invocation event $\invEv{}{m,v}$ in $sh$.  This ordering
restriction can be captured by counting the number of occurrences of
both event forms in $sh$ using the comprehension expression
$\num{sh}{\ev{\many{e}}}$. In all other cases, the trace stays
well-formed when it is extended with a new event.  We formalise
well-formedness as a predicate on traces.

\begin{definition}[Well-Formedness]
  The \emph{well-formedness} of a concrete trace \textit{sh} is
  formalised by a predicate $\wf{\textit{sh}}$, defined inductively
  over the length of \textit{sh}:
  \[
    \begin{array}[t]{r@{\hspace{2pt}}r@{\hspace{2pt}}l}
      \wf{\varepsilon}  &=& true \\
      \wf{\consTr{\sigma}{sh}} &=& \wf{sh}\\
      \wf{\consTr{\invEv{}{m,v}}{sh}}& =& \wf{sh}\\
      \wf{\consTr{\invREv{}{m,v}}{sh}}& =& \wf{sh} \wedge \num{sh}{\invEv{}{m,v}} > \num{sh}{\invREv{}{m,v}}
    \end{array}
  \]
  \label{def:wellformedness-concrete-trace}
\end{definition}

The well-formedness predicate is used in the global composition rules
to ensure that only a valid concrete trace for a given program can be
generated during the trace composition. Any trace emitted by the trace
composition rules is well-formed, because it is an invariant
established each time a trace is extended. It is sufficient to define
well-formedness on concrete traces, because in the end it has to hold
for them. It follows that well-formedness is always a simple,
decidable property.

\subsubsection{Global Trace Semantics}

\begin{definition}[Program Semantics with Procedure Calls]
  \label{def:trace-semantics-proc}
  Given a finite program $P$ with a method table $\many{M}$ and a main
  block~$s_{main}$. Let
  $sh_0,\,q_0\rightarrow sh_1,\,q_1\rightarrow\cdots$\ be a maximal
  sequence obtained by the repeated application of the composition
  rules~\eqref{eq:ruleGlParExc-1}--\eqref{eq:ruleGlParExc-2}, starting
  with $\langle I_P\rangle,\,\{\cont{s_{main}}\}$. If the sequence is
  finite, then it must have the form
  \[
    \langle I_{P}\rangle,\{\cont{s_{main}}\} \to \dots \to
    sh,\emptyset %\{n\cdot\cont{\zero}\}
\enspace.
  \]
  
  %
  \iffalse
  To reach this final state $\many{M},\,sh,\,\{\}$, in addition to the
  global rules, we need an extra rule to consume the empty
  continuations $\cont{\zero}$:
  %
  \grule{GlParExc-3}{ \many{M}, sh, q
    \multiunion \{ \cont{\zero} \} \to \many{M}, sh, q }
  \fi
  %
  If the sequence is infinite, let $sh=\lim_{i\rightarrow\infty}sh_i$.
  In either case, $sh$ is a trace of $P$. The set of all such traces
  is denoted by $\traces{P,I_P}$.
  \label{ex:sequential.global.trace.procedure}
\end{definition}

\begin{example}
  \label{ex:call.concrete}
  Consider the following program $P$:
    $$
    \begin{array}[t]{l}
      m(x)\ \{ y := x;\, x := x+1 \}\\
      \{ \Simple{call}(m, 1);\, z := 2 \}
    \end{array}
    $$

    \noindent
    The method table for this program is
    $\many{M}=\{ m(x)\ \{ y := x;\, x := x+1 \}\}$.  The evaluation of
    the \Simple{main} method is as follows:
  \[
    \begin{array}[h]{r@{\hspace{2pt}}r@{\hspace{2pt}}l}
      \valD{\Simple{call}(m,1)}
      &=&
            \{\emptyset \pop 
      \concatTr{\consTr{\consTr{\sigma}{\invEv{}{m,1}}}{\langle\sigma\rangle}}{\cont{\zero}}\}\\
      \valD{z:=2}&=&\{\emptyset\pop\concatTr{\consTr{\update{\sigma}{z}{2}}{\langle\sigma\rangle}}{\cont{\zero}}\}\\
      \valD{\Simple{call}(m,1);z:=2}&=&\{\emptyset\pop
      \concatTr{\consTr{\consTr{\sigma}{\invEv{}{m,1}}}{\langle\sigma\rangle}}{\cont{z := 2}}\}
    \end{array}
  \]

  \noindent
  To prepare the evaluation of the method body of $m(x)$ (where
  $x$ is any integer):
  \begin{equation}
    \begin{array}{r@{\hspace{2pt}}r@{\hspace{2pt}}l}
      \valD{y := x} &=&
      \{\emptyset\pop\concatTr{\consTr{\update{\sigma}{y}{\valD{x}}}{\langle\sigma\rangle}}{\cont{\zero}}\}\\
      \valD{x := x+1} &=&
      \{\emptyset\pop\concatTr{\consTr{\update{\sigma}{x}{\valD{x+1}}}{\langle\sigma\rangle}}{\cont{\zero}}\}\\
      \valD{y:=x;x:=x+1} &=&\{\emptyset\pop\concatTr{\consTr{\update{\sigma}{y}{\valD{x}}}{\langle\sigma\rangle}}{\cont{x:=x+1}}\}
    \end{array}\label{eq:proc.call.val}
  \end{equation}
  
  Let $I_P$ be the initial state of $P$.  We consider the state where
  the statement $\Simple{call}(m,1)$ in \Simple{main} has already been
  executed, and explore the possible traces.  The concrete trace and
  the continuation pool at this point are
  $sh=\consTr{\consTr{ I_{P}}{\invEv{}{m,1}}}{\langle I_{P}\rangle}$
  and
  $q=\{\cont{z:=2}\}$,
  respectively.  Let $s_m$ be the method body of $m$.
  Both composition rules~\eqref{eq:ruleGlParExc-1}
  and~\eqref{eq:ruleGlParExc-2} are applicable and yield different
  interleavings: the execution continues either with the \Simple{main}
  method or with the creation of a new execution thread. In the
  following, we assume that the execution continues with thread
  creation, i.e.\ we apply rule~\eqref{eq:ruleGlParExc-2}:
  \gcondrule{ProcCall-1exa-1}{
    m(x) \{ s_m \} \in \many{M}\quad
    I_P=\last(\sh)\quad
    \wf{\chopTrSem{sh}{\invREv{I_P}{m,1}}}\quad w\not\in\dom(I_P)
  }{
    sh, q \to
    \chopTrSem{sh}{\consTr{[w \mapsto 1]}{\invREv{I_P}{m,1}}}, q \multiunion \{ \cont{s_m\subst{x}{w}} \}
  }

  At this point, both composition rules are again applicable, but only
  rule~\eqref{eq:ruleGlParExc-1} is useful: since only one invocation
  event is present in $sh$, well-formedness will not allow more than
  one instance of $m$ to execute.
  We use some abbreviations:
  \[
    \begin{array}[h]{r@{\hspace{2pt}}r@{\hspace{2pt}}l}
      \sh'&=&
              \chopTrSem{sh}{\consTr{[w \mapsto 1]}{\invREv{I_P}{m,1}}}
      \\
          &=&\consTr{\consTr{[w \mapsto
              1]}{\consTr{I_P}{\invREv{}{m,1}}}}{\consTr{\consTr{I_P}{\invEv{}{m,1}}}{\langle
              I_P \rangle}}
      \\
      s_m'&=&\{ s_m\subst{x}{w} \} =\{ y := w;\ w := w+1 \} \\
      q'&=&q \multiunion \{\cont{s_m\subst{x}{w}}\} =
            \{\cont{z:=2}, \cont{y : = w;\ w:=w+1}\}
    \end{array}
  \]

  \noindent
  The two continuations in $q'$ indicate two possible interleavings.
  We assume execution continues with the body of~$m$:
  \gcondrule{ProcCall-1exa-2}{
    [w \mapsto 1] = \last(sh')  \qquad \emptyset \text{ consistent}\\
    \emptyset\pop\concatTr{\consTr{\update{[w \mapsto 1]}{y}{\valB{[w \mapsto 1]}{w}}}{\langle [w \mapsto 1]\rangle}}{\cont{w:=w+1}}
    \in \valB{[w \mapsto 1]}{s_m'} 
  }{
    sh', \{\cont{z:=2}\} \multiunion \{ \cont{y : = w; w:=w+1} \}\to\\
    \chopTrSem{sh'}{\consTr{[w\mapsto 1, y \mapsto 1]}{\langle [w \mapsto 1]\rangle}}, \{\cont{z:=2}\} \multiunion \{ \cont{w:=w+1} \}
  }

  \noindent
  where $\valB{[w \mapsto 1]}{s_m'}$ is taken
  from equation~\eqref{eq:proc.call.val}. We introduce the abbreviations:
  \[
    \begin{array}[h]{r@{\hspace{2pt}}r@{\hspace{2pt}}l}
      \sh''& = & \consTr{[w\mapsto 1, y \mapsto 1]}{\consTr{\consTr{[w \mapsto
               1]}{\consTr{I_P}{\invREv{}{m,1}}}}{\consTr{\consTr{I_P}{\invEv{}{m,1}}}{\langle
               I_P \rangle}}}\\
      q''& = & % q \multiunion \{ \cont{w:=w+1} \} =
             \{\cont{z:=2}\} \multiunion \{\cont{w:=w+1}\}
    \end{array}
  \]
  
  At this point, rule~\eqref{eq:ruleGlParExc-1} is again applicable
  and the two continuations in $q''$ indicate two possible
  interleavings.  We assume the execution continues with the
  \Simple{main} method body.
  \gcondrule{ProcCall-1exa-3}{
    [w \mapsto 1, y \mapsto 1] = \last(sh'') \qquad \emptyset \text{ consistent}\\ 
    \{\emptyset\pop\concatTr{\consTr{[w \mapsto 1, y \mapsto 1,  z \mapsto 2]}{\langle
        [w \mapsto 1, y \mapsto 1]\rangle}}{\cont{\zero}}\}
    \in \valB{[w \mapsto 1,y \mapsto 1]}{z:=2} 
  }{
    sh'',\, q''\to
    \chopTrSem{sh''}{\consTr{[w \mapsto 1,y \mapsto 1, z \mapsto 2]}{\langle [w \mapsto 1,y \mapsto 1]\rangle}}, \{ \cont{w:=w+1} \} %\multiunion \{\cont{\zero}\}
  }

  At this point only one continuation is possible, which results in
  the concrete trace:
  \[
   % \scriptsize
   % \tiny
    \begin{array}[h]{l}
      \consTr{\consTr{[w \mapsto
      1]}{\consTr{I_P}{\invREv{}{m,1}}}}{\consTr{\consTr{I_P}{\invEv{}{m,1}}}{\langle
      I_P \rangle}}\\
      \quad\consTr{[w \mapsto 2, y \mapsto 1,  z \mapsto 2]}{\consTr{[w
      \mapsto 1, y \mapsto 1, z \mapsto 2]}{\consTr{[w \mapsto 1, y \mapsto
      1]}{}}}\ .
    \end{array}
  \]
   
  There are two other possible interleavings that all result in the
  same final state. If, for example, the assignment in the main method
  of $P$ were changed to $y := 2$, then traces with different final
  states would be generated.  \qed
\end{example}

\subsubsection{Trace Composition: Symbolic Variant}
\label{sec:trace-comp-proc-2}

The trace composition in rule~\eqref{eq:ruleGlParExc-2} works ``eagerly'' or
``on demand'' in the sense that a concrete trace extending $\sigma$
and a continuation containing the local parameter $y$, whose value is
fixed in $\update{\sigma}{y}{v}$, are generated on the spot. In the
presence of multiple calls to the same method, this leads to multiple
evaluation of the same code, i.e.\ the method body.  A deductive
verification calculus would avoid this and evaluate each method only
once, but symbolically \cite{keybook}. We demonstrate that abstract
traces allow semantic evaluation that works similarly by defining a
local evaluation function for \emph{methods} which combines the use of
symbolic variables from the \Simple{input}-statement in
Section~\ref{sec:user-input} with the variable renaming used for local
memory in Section~\ref{sec:localmemory}.  We first introduce a symbolic variable
$Y$ to act as a placeholder for the value of the substituted call
parameter $y$ and then substitute the call
parameter $x$ with a fresh variable $y$ for disambiguation, as done in
rule~\eqref{eq:ruleGlParExc-2}:
\evalruleN{method}{%
\multicolumn{1}{l}{\valD{m(x)\{s\}} =\qquad} \\
\qquad\{
  \emptyset \pop \concatTr{\consTr{\sigma[Y\mapsto \ast, \, y\mapsto Y]}{\invREvV{\sigma}{m,Y}{\{Y\}}}
  }{\cont{s\subst{x}{y}}}\mid y,Y\not\in \dom(\sigma)
  \}\ .
}

\noindent
We conventionally write $ y,Y\not\in \dom(\sigma)$ 
%$ \many{y}\not\in \dom(\sigma)$ 
to express that no element in a list
of variables is in the domain of $\sigma$.

The concrete composition rule~\eqref{eq:ruleGlParExc-2} needs to be
adapted to a composition rule which takes the evaluation of a method
declaration and concretises the resulting symbolic trace in a
well-formed and consistent way:
\gcondrule{GlParExc-2s}{
m(x)\{s\} \in \many{M} 
\qquad
\rho\text{ concretises }\tau
\qquad
\rho(pc) \text{ consistent}\\
\sigma = \last(sh) 
\qquad 
pc\pop \concatTr{\tau }{\cont{s'}} \in \valB{\sigma}{m(x)\{s\}}
\qquad
\wf{\chopTrSem{sh}{\rho(\tau)}}
}{
  sh,\, q \to
  \chopTrSem{sh}{\rho(\tau)},\, q \multiunion \{ \cont{s'} \}
}

\noindent
The definition of well-formedness stays the same.
Definition~\ref{def:trace-semantics-proc} of trace semantics of $P$ stays
the same, except that rule~\eqref{eq:ruleGlParExc-2s} is used instead of
rule~\eqref{eq:ruleGlParExc-2}.
It is interesting to note that, while we need one more local rule (to
evaluate method declarations), the composition rule becomes simpler,
more uniform with rule~\eqref{eq:ruleGlParExc-1}, and more modular (not
referring to events) as a consequence. 

% According to Ludo, the rest is obvious, no further explanation
% needed.

\begin{example}
  \label{ex:call.symbolic}
  Consider the program~$P$ from Example~\ref{ex:call.concrete}.  Let
  $s_m$ be the method body of $m$. The evaluation of the method
  declaration with rule~\eqref{eq:method} is as follows:
  \[
    \begin{array}[h]{l}
      \valD{m(x)\{s_m\}} \\
      \qquad =
                               \{
                               \emptyset \pop
                               \concatTr{\consTr{\sigma[W\mapsto *, w\mapsto W]}{\invREvV{\sigma}{m,W}{\{W\}}}
                               }{\cont{s_m\subst{x}{w}}\mid w,W\not\in \dom(\sigma)}
                               \}\\
 \qquad    = 
           \{
           \emptyset \pop \concatTr{\consTr{\sigma[W\mapsto *, w\mapsto W]}{\invREvV{\sigma}{m,W}{\{W\}}}
           }{\cont{y:=w;\, w:=w+1}\,\mid  w,W\not\in \dom(\sigma)
           \}\ .}
    \end{array}
  \]
\end{example}

\subsubsection{Calculus}
\label{sec:calculus-proc}

There are several ways to model procedure calls in a calculus. The
expressiveness of DL permits a simple method that suffices for the
language at hand. The basic insight is that the DL formula $\phi$
occurring in a sequent $\Gamma\Rightarrow\tau\;\phi$ may contain more
than one modality. This can be used to model the process pool $q$ as a
conjunction of indexed modal formulas in a sequent of the form
$\Gamma\Rightarrow\tau\;(\left[s_1\right]_1\phi_1\land\cdots\land\left[s_n\right]_n\phi_n)$
($i\in\mathbb{N}$), which is still well-formed.
Each modality of the form $\left[\cdot\right]_i$ in the sequent
represents one process in the pool $q$, distinguished from each other
by a unique index $i$.  All calculus rules introduced so far are
assumed to be equipped with indexed modalities.  This can be done in a
completely uniform manner.

We need a way to model multi-process execution like in
rule~\eqref{eq:ruleGlParExc-1}. Since the computations of the $s_i$
might interfere with each other, there may in the general case be many
interleavings of the $s_i$ that need to be considered.  The meta rule
schema (\textsc{Interleave}) branches into different premises,
depending on which of the $s_i$ is scheduled, thus unfolding all
possible different interleavings:
\[
  \begin{array}{@{}c@{}c@{}c@{}}
  \multicolumn{2}{c}{\textsc{\footnotesize (Interleave)}}\\
    \multicolumn{1}{l}{\textit{For all } 1 \leq i \leq n:} & &
    \multicolumn{1}{l}{\textit{Such that there is a rule}}
    \\
    \bigwedge_{j\in Z_i}\wf{\tau_j} \quad
       \Gamma,\,pc,\, \!\!\bigwedge_{j\in {Z_i}}(pc\cup pc_j \rightarrow \tau_j\,
      \left[s_j\right]_i\phi_i) \Rightarrow  \tau\,
    \left[s_i\right]_i\phi_i   
    & \hspace{1em} & \bigwedge_{j\in Z_i}\wf{\tau_j} \\
    \left\{\Gamma,\, pc\cup pc_j \Rightarrow
    \tau_j\left(\left[s_1\right]_1\phi_1\land\cdots\land\left[s_j\right]_i\phi_i\land\cdots\land\left[s_n\right]_n\phi_n\right)\right\}_{j\in {Z_i}}
    & &
    \left\{\Gamma,\, pc\cup pc_j \Rightarrow \tau_j
      \left[s_j\right]_i\phi\right\}_{j\in {Z_i}} 
    \\\cline{1-1}\cline{3-3}
    \Gamma,\, pc \Rightarrow \tau\left(\left[s_1\right]_1\phi_1\land\cdots\land[s_i]_i\phi_i\land\cdots\land\left[s_n\right]_n\phi_n\right) 
    & &                                                       
       \Gamma,\,\, pc\Rightarrow \tau\,
    \left[s_i\right]_i\phi
  \end{array}
\]

The rule schema (\textsc{Interleave}) takes the form of a context
rule, replacing a conjunct in the conclusion with a formula from which
that conjunct can be derived with an appropriate path condition (thus
generalising the rules for local parallelism in
Section~\ref{sec:calculus-localpar}). The schema unfolds the premises
for each conjunct $[s_i]_i\phi_i$ in the conclusion, such that there
is a rule in the calculus which is applicable to
$\Gamma,\, pc \Rightarrow \tau\,\left[s_i\right]_i\phi$ with
premise(s)
$\Gamma,\, pc\cup pc_j\Rightarrow\tau_j\,\left[s_j\right]_i\phi$ for
$j\in Z_i$ and constraint $\wf{\tau_j}$.\footnote{The discussion of
  the efficiency of the rules for local parallelism in
  Section~\ref{sec:calculus-localpar} also applies to rule
  \textsc{(Interleave)}.}  We use the notation
$\{ \phi_j\}_{j\in {Z_i}}$ to denote a set of formulas indexed by $j$
from the index set $Z_i$. Thus in the rule schema shown on the right,
the bracketed sequent denotes the set of sequent premises to the rule
for $\Gamma,\,\,pc\Rightarrow \tau[s_i]\phi$ (for example, rule
\textsc{(Cond)} has a set containing two premises).

The second premise of \textsc{(Interleave)} is discharged by the rule
corresponding to the schema on the right. Well-formedness becomes a
separate proof obligation for every trace $\tau_j$, that can be dealt
with by a fairly straightforward translation into first-order logic
(see \cite{DinTHJ15} for the details).

This approach rests on two assumptions about the rule displayed above
on the right: first, it must follow the required syntactic form,
second, it must be sound. For the composition of more general calculi,
it is necessary to define syntactic restrictions and soundness
constraints \cite{Kamburjan20}.

It remains to design calculus rules modeling rule~\eqref{eq:Call} and
rules~\eqref{eq:method}--\eqref{eq:ruleGlParExc-2s}.  The first is
straightforward, resulting in rule (\textsc{MtdCall}).
We can combine rules~\eqref{eq:method}--\eqref{eq:ruleGlParExc-2s}
into a single rule (\textsc{MtdRun}) by wrapping the body of $m$ into
a block and declaring its formal parameter $x$ as a local
variable. This allows us to re-use the previously declared rules. We
assume given method specifications in the form of a formula $\phi$
associated with each method. In the new conjunct, the modality must
have a fresh index. The call argument $e$ must be supplied at the time
the rule is applied. One heuristics is to look for matching invocation
events in $\tau$.
\[
  \begin{array}[b]{c}\\
  \textsc{\footnotesize (MtdCall)}\\
    \last(\tau)=\sigma
    \\
    \Gamma,\, pc \Rightarrow \chopTrSem{\tau}{\invEv{\sigma}{m,e}} \left[s\right]_i\phi
    \\\hline
    \Gamma,\, pc \Rightarrow \tau\,\left[\Simple{call}(m,e);\,s\right]_i\phi 
  \end{array}
\qquad
  \begin{array}[b]{c}
  \textsc{\footnotesize (MtdRun)}\\
    \last(\tau)=\sigma \qquad $i$\text{ fresh in }\psi
    \\
    m(x)\{s\} \in \many{M} \qquad
    \wf{\chopTrSem{\tau}{\invREv{\sigma}{m,e}}}
    \\
    \Gamma,\, pc \Rightarrow \chopTrSem{\tau}{\invREv{\sigma}{m,e}} \left(\left[\{x;\,x:=e;\,s\}\right]_i\phi\land\psi\right)
    \\\hline
    \Gamma,\, pc \Rightarrow \tau\,\psi 
  \end{array}
\]

\subsection{Guarded Statements}
\label{sec:guarded.statements}

Guarded statements are a common synchronisation mechanism in
concurrent languages \cite{dijkstra75cacm,Holzmann03,ABSFMCO10}. A
statement is preceded by a Boolean guard expression that blocks
execution of the current process, until the guard is evaluated to
true. This can be used, for example, to ensure that the result of a
computation is ready before it is used, a message has arrived, etc.
Our syntax for guarded commands is inspired by \promela; the syntax
for statements from Figure~\ref{fig:basiclangWhile} becomes (where $g$
is a Boolean expression):
\[
  \begin{array}{r@{\,}l@{\,}l}
    s\in \mathit{Stmt} &::=&  :: g; s  \mid\ \ldots\ .\\
  \end{array}
\]

\subsubsection{Local Evaluation}

A local evaluation rule for a guarded statement is straightforward to
design.  If the Boolean guard $g$ evaluates to true, the execution
continues normally. If the guard $g$ evaluates to false,
the execution is blocked until $g$ evaluates to true. Blocking is
modeled by re-scheduling the entire guarded statement in the
continuation.
\evalruleN{GuardedCommand}{%
  \valD{::  g; s} &=&
  \{\{\valD{g}=\trueSem\}\pop
  \concatTr{\langle\sigma\rangle}{\cont{s}}\},\quad
  \{\valD{g}=\falseSem\}\pop
  \concatTr{\langle\sigma\rangle}{\cont{:: g; s }}\}\}
}

\subsubsection{Trace Composition}

The trace composition rule~\eqref{eq:ruleSeqExc-1} is unchanged, but
it is interesting to observe how it works in the presence of
$:: g; s$. When rule~\eqref{eq:ruleSeqExc-1} processes a continuation
with a guarded statement, it may result in a \emph{deadlocked} trace
due to the second alternative, where from some point onward no
progress is made on some non-empty continuations in the process pool.
How to \emph{detect} and, possibly, to \emph{avoid} deadlocks is a
question outside the scope of this paper.

\subsubsection{Calculus}
\label{sec:calculus-guard}

Structurally, the local rule~\eqref{eq:GuardedCommand} for guarded
statements is very similar to the rule~\eqref{eq:If} for the
conditional (in fact, the case when the guard is true, is
actually identical). Indeed, we can design a calculus rule modeled
after \textsc{(Cond)}:
\[
  \begin{array}{c}
  \textsc{\footnotesize (GrdStmt)}\\
    \last(\tau)=\sigma \qquad
    \Gamma,\, pc\cup\{\valD{g}=\trueSem\} \Rightarrow \tau \left[s\right]\phi \qquad
     \Gamma,\, pc\cup\{\valD{g}=\falseSem\}  \Rightarrow \tau\left[::g;s\right]\phi 
    \\\hline
    \Gamma,\, pc \Rightarrow \tau \left[::g;s\right]\phi 
  \end{array}
\]

\noindent
While this rule is certainly sound, it is clearly incomplete, because
it admits infinitely many possible interleavings. To achieve
completeness, one would need to establish a suitable invariant that
holds whenever the guard is evaluated and that is strong enough to
render the proof finite (for example, by excluding a branch when the
guard is false). For details on such techniques, we refer the reader
to the literature~\cite{DinTHJ15,KDHJ20}.

\section{Semantics For A Shared-Memory Multiprocessor Language }
\label{sec:multicore}

We consider a multiprocessor extension to the programming language
of Section~\ref{sec:interleaving}. This goes some way to realize a
concurrent, object-oriented setting: We will now spawn processes, each
with its own identity and its own task queue to execute, but memory is
still shared among the processes. In addition, processes may exchange
values by sending and receiving messages, which have their own
identity and thus provide the capability to impose ordering
constraints among them.  Tasks are method invocations similar to the
ones defined in the previous section. In particular, we use the second
variant of the semantics, defined in
Section~\ref{sec:trace-comp-proc-2} with the
rules~\eqref{eq:method}--\eqref{eq:ruleGlParExc-2s}.

Expressions on the right-hand side of assignments are extended with
\Simple{spawn($m$, $\,e$)}, which creates a new (virtual) process
where task $m$ is called with an actual value $e$. It returns the
identifier of the newly created process. Furthermore, two statements
\Simple{send}($e_1$,$\,e_2$), to send a value $e_1$ to a process with
identifier $e_2$, and \Simple{receive}($x$,$\,e$), to bind a value
received from $e$ to the variable $x$, are introduced such that
processes may exchange values. The syntax is shown in
Figure~\ref{fig:multisyntax}.

\begin{figure}[th]\centering
$  \begin{array}[t]{r@{\hspace{2pt}}r@{\hspace{2pt}}l}
    P\in \mathit{prog}&::=&\many{M}\{s\}\\
    M\in \mathit{Meth}&::=&m(x)\{s\}\\
         s\in\mathit{Stmt}&::=&  \Simple{skip} \ | \  x := rhs \ | \
                                \Simple{if}~e~\{ ~s~\} \ | \ s;s \ | \
                                \Simple{while}\ e\, \{\, s\, \} \ | \
                                \Simple{co}\  s \para s\ \Simple{oc}\  \\
         &&  \ | \   \Simple{atomic}(st) \ | \   \Simple{call}(m,e) \ | \    \Simple{send}(e,e) \  | \ \Simple{receive}(x,e) \ \\  
       rhs\in\mathit{Rhs} &::=& e  \ | \  \Simple{spawn}(m,e)
  \end{array}$
\caption{\label{fig:multisyntax}Syntax for the Shared-Memory Multiprocessor Language.}
\end{figure}

\subsection{LAGC Semantics}
\label{sec:multicore-semantics}

In a multiprocessor setting it is necessary to keep track of what each
process does, in particular, about the origins and destinations of
messages. This is achieved by introducing process identifiers and
tagging each event in a trace with the identifier of the process that
produced it. remark that this tagging is orthogonal to the local
semantic evaluation.

\begin{definition}[Tagged Trace, Projection]\label{def:tagged-trace}
  Let $\ev{\many e}$ be an event, $\tau,\, \tau_1,\, \tau_2$ traces,
  and $p\in\PId$ a process identifier.  A \emph{tagged trace} $\tau^p$
  is defined inductively as follows:
  $$
  \begin{array}{l}
    (\ev{\many e})^{\,p}=\evp{\,p}{\many{e}}\\
    \sigma^{\,p}=\sigma\\
    (\concatTr{\tau_1}{\tau_2})^{\,p}=\concatTr{\tau_1^{\,p}}{\tau_2^{\,p}}      
  \end{array}
  $$
\end{definition}

\subsubsection{Local Evaluation}

In the following, let $\PId$ be a set of process identifiers with
typical element $p$ and $\MId$ a set of message identifiers with
typical element $i$. We start with the evaluation rule for
\Simple{send}. The rule evaluates the arguments to \Simple{send} and
creates a trace from the current state $\sigma$ with an event $\sendEv{}{v}{p}{i}$ expressing that a value $v$ is
sent to the process $p$ by a message with the identifier $i$. The
message can have any possible identifier, so the rule provides traces for
all of them:
\evalruleN{Send}{%
  \valD{\Simple{send}(e, e')} 
  & = & \{\emptyset\pop\concatTr{\sendEv{\sigma}{{\valD{e}}}{\valD{e'}}{i}}{\cont{\zero}}\}
  \mid i\in \MId\}\ . }

The local semantics of \Simple{receive} and \Simple{spawn} is more
complex, because the values received in these statements are not known
in the local evaluation.  We employ the technique using symbolic
variables introduced in Section~\ref{sec:user-input}. In contrast to
the case for method calls (Section~\ref{sec:trace-comp-concr}), we do
not have the option to compose concrete continuations ``on demand'',
because
the return values for \Simple{receive} and \Simple{spawn} are going to
be resolved later.\footnote{It would be possible to introduce symbolic
  values for message identifiers as well in the rule for \Simple{send}
  and \Simple{receive}. However, since programs do not manipulate
  message identifiers, there is no reason to do so.}  The rules
evaluate to a trace with the event
$\receiveEv{}{v}{p}{i}$, expressing that the value $v$ is received
from a process $p$ by a message with identifier $i$, and a trace with the event
$\spawnEvs{}{m,v}{p}$, expressing that a new process with
identifier $p$ is created to execute the task $m$ with the parameter
value $v$. In each rule, the symbolic variable $Y$ represents the (as
yet) unknown received value.
\evalruleN{Receive}{%
  % \hspace{-5mm}
  \valD{\Simple{receive}(x, e)} 
  & = & 
  \{\emptyset\pop\concatTr{\consTr{\sigma[x\mapsto Y,\,Y\mapsto *]}{\receiveEvV{\sigma}{Y}{\valD{e}}{i}{\{Y\}}}}{\cont{\zero}} \\
  & & \multicolumn{1}{l}{\mid Y \not\in \dom(\sigma),\, i\in \MId\}}   
}
\evalruleN{Spawn}{%
  % \hspace{-5mm}
  \valD{x :=\Simple{spawn}(m,e)} 
  & = &
  \{\emptyset\pop\concatTr{\consTr{{\sigma}[x\mapsto Y,\,Y\mapsto *]}{\spawnEvsV{\sigma}{m,\valD{e}}{Y} {\{Y\}} }}{\cont{\zero}} \\
  & & \multicolumn{1}{l}{\mid Y\not\in\dom(\sigma)\}}
}
All previous local rules are unchanged.

\subsubsection{Trace Composition}

The trace composition rules generalize rules~\eqref{eq:ruleGlParExc-1}
and~\eqref{eq:ruleGlParExc-2s}. There are three main differences:
First, \emph{both} composition rules now work on symbolic traces,
simply because rules~\eqref{eq:Receive}--\eqref{eq:Spawn} are symbolic.
Second, to account for having several processes, we introduce a
mapping $\Omega$ from process identifiers $p$ to multisets of
continuations $q$, i.e.\ $\Omega(p)$ gives the current task list for a
given process.  As before, we denote with $\multiunion$ the
\emph{disjoint} multiset union on task lists.  The judgment
representing trace extension by one step has the form:
% $\many{M},\,sh,\,\Omega \to \many{M},\,sh',\, \Omega'$.
$sh,\,\Omega \to sh',\, \Omega'$.
Third, we tag the generated concrete trace with the process
identifier that created it.

The following rule selects a processor $p$ with non-empty task list
and from there a continuation~$s$, evaluated in $\last(sh)$. The
resulting symbolic trace $\tau$ must be concretizable such that, after
tagging, it extends $sh$ in a well-formed manner. Observe that
  $sh$ is tagged already. The remaining
continuation $\cont{s'}$ is added back to $p$'s task list.
\begin{equation}
  \inferrule
  {p\in\dom(\Omega)\qquad\Omega(p)=q\multiunion\{\cont{s}\} \\  \sigma = \last(sh) \\
    pc\pop \concatTr{\tau }{\cont{s'}} \in \valD{s}\\
    \rho\text{ concretizes }\tau \\ \rho(pc)\ \text{consistent} \\
    \wf{\chopTrSem{sh}{\rho(\tau)}^p}}     
  { sh, \Omega \to 
    \chopTrSem{sh}{\rho(\tau)^p}, \Omega[p\mapsto q\multiunion\{\cont{s'}\}]}
  \label{eq:multi.cont}
\end{equation}

The following rule is similar and follows the pattern established in
rule~\eqref{eq:ruleGlParExc-2s}. It starts to evaluate a method body
on a processor $p$ and adds the remaining continuation $\cont{s'}$ to
its task list.  If $p\not\in\dom(\Omega)$, we use the notational
convention $\Omega(p)=\emptyset$ and avoid a special operation to
create a process. The following rule thus either creates a task inside
an existing process or spawns a new process if necessary, similar to
rule~\eqref{eq:ruleGlParExc-2s}. Well-formedness ensures that the
correct number of tasks is created, relying on the fact that
$\valB{\sigma}{m(x)\{s\}}$ starts with an invocation reaction event.
\gcondrule{multi.exec}{
  m(x)\{s\} \in \many{M} 
  \qquad
  \sigma = \last(sh)
  \qquad
  pc\pop \concatTr{\tau }{\cont{s'}} \in \valB{\sigma}{m(x)\{s\}}\\
  \rho\text{ concretizes }\tau\qquad
  \rho(pc) \text{ consistent}
  \qquad
  \wf{\chopTrSem{sh}{\rho(\tau)^p}}
}{
  sh, \Omega \to
  \chopTrSem{sh}{\rho(\tau)^p} , \Omega[p\mapsto\Omega(p)\multiunion \{ \cont{s'} \}]
}

\iffalse

\begin{equation}
\inferrule
{\spawnEvp{}{p'}{m,v}{p}\in sh \\ p\not\in\dom(\Omega)}
{ \many{M},  sh, \Omega \to \many{M},sh, \Omega[p\mapsto\emptyset]}
\label{eq:multi.spawn}
\end{equation}

\TODO{The last two rules could be merged, given suitable notational
  conventions for empty maps.}

\fi

\subsubsection{Well-formedness}
We need to establish the basic properties of sending and receiving as well
as to ensure that each process has the right number of tasks.
\begin{definition}[Well-formedness]
  \label{def:well-formedness-multi}
  The following rules are \emph{added} to the rules in
  Definition~\ref{def:wellformedness-concrete-trace}, except the final one,
  which \emph{replaces} the final rule in
  Definition~\ref{def:wellformedness-concrete-trace}.
  The first rule ensures no two messages will be sent with the same
  message identifier, the second one states that each message is
  received only once. Similarly, the third rule guarantees that each
  spawn allocates a new process identifier.  The final rule
  % extends (and replaces) the final rule in
  % Definition~\ref{def:wellformedness-concrete-trace} and
  reflects the fact that the code executed on a processor $p$ can be
  the reaction to either a call or a spawn event, i.e.\
  $\textit{spawnEv}^{\,p}$ acts as a particular invocation event.
  \[
    \begin{array}[t]{r@{\hspace{2pt}}r@{\hspace{2pt}}l}
      % \wf{\epsilon}  &=& true \\
      % \wf{\consTr{\sigma}{sh}} &=& \wf{sh}\\
      % \wf{\consTr{\receiveEv{}{e}{s}{i}}{sh}}  &=&  \wf{sh} \land \sendEv{}{e}{s}{i} \in sh\land \receiveEv{}{e}{s}{i} \not\in sh \\ % TODO
      \wf{\consTr{\sendEvp{}{\,p}{e}{p'}{i}}{sh}}  &=&  \wf{sh} \land \not\exists p'', e', p'''.\, \sendEvp{}{\,p''}{e'}{p'''}{i} \in sh \\
      \wf{\consTr{\receiveEvp{}{\,p}{e}{p'}{i}}{sh}}  &=&  \wf{sh}  \land \not\exists p'', e', p'''.\, \receiveEvp{}{\,p''}{e'}{p'''}{i} \in sh  \qquad\qquad~\\
      \wf{\consTr{\spawnEvp{}{\,p}{m,v}{p'}}{sh}}&=&\wf{sh} \land \not\exists p'', m',v'.\,\spawnEvp{}{\,p''}{m',v'}{p'}\in sh\\
      \wf{\consTr{\invREvp{}{\,p}{m,v}}{sh}} &=&
                                                      \wf{sh}\, \land \\
       \multicolumn{3}{@{}r}{\num{sh}{\invEvp{}{\,p}{m,v}}+ {\displaystyle \sum_{p'\in \PId}\num{sh}{\spawnEvp{}{\,p'}{m,v}{p}}} > \num{sh}{\invREvp{}{\,p}{m,v}}}
    \end{array}
  \]
\end{definition}

Message identifiers are used to avoid sending or receiving the same
message (i.e. two messages with the same identifier) twice. This rule
set is incomplete as it does not specify when a message is
received. This aspect is discussed in
Section~\ref{sec:communication-patterns} and will also use message
identifiers.  For example, the basic correctness criterion that states
that ``all received messages have been sent'' is captured by the
well-formedness condition in Definition~\ref{def:AC}; this criterion
should be used in addition to the \textit{wf} predicate defined above.

\subsubsection{Global Trace Semantics}
\begin{definition}[Program Semantics with Multiprocessors]
  Given a program $P$ with a method table $\many{M}$ and
  a main
  block~$s_{main}$, let
  $sh_0,\,\Omega_0\rightarrow sh_1,\,\Omega_1\rightarrow\cdots$ be a
  maximal sequence obtained by the repeated application of the composition
  rules~\eqref{eq:multi.cont}--\eqref{eq:rulemulti.exec}, starting
  with
  \begin{equation}
    \consTr{\spawnEvp{}{0}{main,0}{0}}{\langle I_P\rangle},\,\Omega[0\mapsto\{\cont{s_{main}}\}]\enspace.
    \label{eq:spawn-global-multi}
  \end{equation}
  If the sequence is finite, then it must have the form
  \[\consTr{\spawnEvp{}{0}{main,0}{0}}{\langle I_P\rangle},\,\Omega[0\mapsto\{\cont{s_{main}}\}]
    \to \dots \to sh,\Omega^{\zero}\enspace,\]
  where $\Omega^{\zero}(p)=\emptyset$ %\{n\cdot\lambda(\zero)\}$ for some $n>0$
  for all $p\in\dom(\Omega)$.  If the sequence is infinite, let
  $sh=\lim_{i\rightarrow\infty}sh_i$.  In either case, $sh$ is a trace
  of $P$. The set of all such traces is denoted with
  $\traces{P, I_P}$.
  \label{ex:sequential.global.trace.multi}
\end{definition}

The spawn event at the start of a trace represents the creation of an
initial process that runs the main method. The well-formedness of
traces ensures that subsequently spawned processes will not
erroneously be assigned index $0$, which is reserved for the code
executed in the main method.

\subsection{Communication Patterns}
\label{sec:communication-patterns}

We now unleash the power of traces with events and show that several
well-known communication patterns can be defined simply by adding
well-formedness constraints to
Definition~\ref{def:well-formedness-multi}. Imposing such patterns in
the LAGC semantics is modular in the sense that all other rules are
unaffected by the enforced patterns. The well-formedness constraints
defined in this section are to be added conjunctively to the
well-formedness predicate $\mathit{wf}$ over concrete traces $sh$
(Definition~\ref{def:well-formedness-multi}) to achieve the desired
semantic properties. We recall for the definitions below that $e$ is
an expression, $p,p'\in\PId$ process identifiers, and $i\in \MId$ a
message identifier.

We first consider the basic tenet of \emph{asynchronous communication},
stipulating that received messages must have been sent.
\begin{definition}[Asynchronous
  Communication Constraint]\label{def:AC} Well-formed asynchronous
  communication is captured by the constraint
  $\mathit{wf}_\mathrm{ac}$, defined as follows :
  \[
    \wfT{ac}{\consTr{\receiveEvp{}{\,p}{e}{p'}{i}}{sh} } =
    \wf{sh} \land {\sendEvp{}{\,p'}{e}{p}{i}}\in sh
  \]
\end{definition}

Next, we consider the \emph{FIFO} (first-in, first-out) ordering
principle, which stipulates that messages between the same processors
must be received in the same order as they were sent; i.e.\ messages
may not overtake each other.  Let the notation
$\evp{\,P}{\many v}<_{sh}\anotherevp$ express that
$\evp{\,P}{\many v}$ appears before $\anotherevp$ in a
trace $sh$.
\begin{definition}[FIFO Communication Constraint]\label{def:wf-fifo}
  Well-formed FIFO communication is captured by the constraint
  $\mathit{wf}_\mathrm{fifo}$, defined as follows :
% , we conjoin a FIFO constraint to the right-hand side
%   of $\wfT{AC}{\cdot}$; we replace $\wfT{AC}{\cdot}$ with: 
  % 
\begin{multline*}
\wfT{fifo}{\consTr{\receiveEvp{}{\,p}{e}{p'}{i}}{sh} } =
   \wfT{ac}{\consTr{\receiveEvp{}{\,p}{e}{p'}{i}}{sh} } ~\land \\
    \bigl(\forall {\sendEvp{}{\,p'}{e'}{p}{i'}}\in sh.\;
    \sendEvp{}{\,p'}{e'}{p}{i'}<_{sh}\sendEvp{}{\,p'}{e}{p}{i}
    \\\implies \receiveEvp{}{\,p}{e'}{p'}{i'}\in sh\bigr)\ .
    % \end{array}
\end{multline*}
\end{definition}

A \emph{channel} is an ordered pair $(p,p')$ of processor identifiers
(not necessarily distinct) such that $p$ sends messages to $p'$ and
$p'$ receives messages sent by $p$. The messages in a channel
after a trace $sh$ can be given as a set of message identifiers,
defined by the following function:
  
\begin{definition}[Channel]
  \label{def:channel}
  Given a concrete
  trace $sh$ and a channel $ch=(p,p')$, we define by $\textit{inChannel}(sh,ch)$ the messages in $ch$
  that have been sent, but not yet received within $sh$:
  \[
    \textit{inChannel}(sh,\,(p,p'))\,=\,\{i\mid\exists
    e.\sendEvp{}{\,p}{e}{p'}{i}\in
    sh\land\receiveEvp{}{\,p'}{e}{p}{i}\not\in sh\}\ .
  \]
\end{definition}

We can detect that a fixed bound $N>0$, the \emph{capacity} of a FIFO
channel, has been reached by the following condition, which blocks a
send event if the channel is full, i.e.\ the channel already contains
$N{-}1$ messages.
\begin{definition}[Bounded FIFO Communication Constraint]
  \label{def:bfifo}
  % The following condition blocks a send event if the channel is full,
  % i.e.\ the channel already contains $N{-}1$ messages.
  Let $N$ be an integer.  Well-formed $N$-bound FIFO, ensured by the
  predicate $\mathit{wf}_{\mathrm{bd}(N)}$, is captured by extending
  the constraint $\mathit{wf}_{\mathrm{fifo}}$ for standard (unbound)
  FIFO by a check on the bound of the channel as
  follows. Well-formedness of a trace ending by a reception event is
  given by $\mathit{wf}_{\mathrm{fifo}}$, and an additional rule
  checks well-formedness of message sending:
  % The constraint is
  % conjoined in addition to $\wfT{fifo}{\cdot}$:
  % 
  \[
    \wfT{bd\textit{(N)}}{\consTr{\sendEvp{}{\,p}{e}{p'}{i}}{sh}} =
   \wf{sh} \,\land\,
    \left|\,\textit{inChannel}(sh,\,(p,p'))\,\right|<N\ .
  \]
\end{definition}

\emph{Causally ordered} (CO) messages are harder to characterize than
the previous communication patterns, because in the most simple
definition of CO found in~\cite{charron-bost95synchronous}, the
receiving events must occur in the right order provided that sending
events are causally ordered. Our event structure does not record
causal ordering. This could certainly be realized, but it would add
considerable complexity to the evaluation rules. Instead, we use the
information that is already provided by the events in a given trace,
based on an alternative characterization of
CO~\cite{charron-bost95synchronous}: ``a computation is CO if and only
if no message is bypassed by a chain of other messages''.  We first
define a chain predicate such that $\textit{Chain}(e,\,e',\,sh)$ is
true if there is a chain of messages inside $\sh$ that asserts the
causal ordering of $e$ and $e'$ induced by messages.

%\discussion{coverage of all traces is crucial} This coverage of all
%possible traces should anyway be a soundness condition for the
%semantics, else it does not cover all the possible executions (this
%probably requires a longer discussion).

\begin{definition}[Communication Chain]
  The predicate \textit{Chain} holds for two events $e$ and $e'$ in a
  given trace $sh$, if there is a chain of messages asserting the causal ordering  between $e$ and
  $e'$:
  \[
    \textit{Chain}(e,\,e',\,sh)\, =\,
    \begin{array}{l}
      \exists n, e_1,\ldots,e_n,i_1,\ldots,i_n,p_1,\ldots, p_{n+1}.\\ 
      \big(\forall 1\leq k\leq n.\,
      \sendEvp{}{\,p_k}{e_k}{p_{k+1}}{i_k}\in sh\land  \receiveEvp{}{\,p_{k+1}}{e_k}{p_k}{i_k}\in sh\,\big)\,\land\\
      \big(\forall 1 < k\leq n.\,\receiveEvp{}{\,p_{k}}{e_{k-1}}{p_{k-1}}{i_{k-1}}
      <_{sh}
      \sendEvp{}{\,p_{k}}{e_{k}}{p_{k+1}}{i_k}\big)\\
      \text{such that } e <_{sh}
      \sendEvp{}{\,p_1}{e_1}{p_2}{i_1} \land
      \receiveEvp{}{\,p_{n+1}}{e_{n}}{p_{n}}{i_{n}} <_{sh}
      e'\ .
    \end{array}
  \]
\end{definition}

We now define a well-formedness constraint that checks the absence of
a message sent in the past (according to the definition of
\textit{Chain}) but not yet received. The following definition
generalises Definition~\ref{def:wf-fifo} to messages causally ordered (according to the \textit{Chain} predicate) instead of messages originating from the same process:
  
\begin{definition}[CO Communication Constraint]
 Well-formed CO communication is captured by the constraint
 $\mathit{wf}_\mathrm{co}$, defined as follows :
%  The following constraint enforces CO messages (replacing $\wfT{AC}{\cdot}$):
  % 
  \begin{multline*}
    \wfT{co}{\consTr{\receiveEvp{}{\,p}{e}{p'}{i}}{sh}} =
    {\wf{sh}\land\sendEvp{}{\,p'}{e}{p}{i}}\in sh\, \land\\
    \forall
    e'',p'',i''.\,\textit{Chain}(\sendEvp{}{\,p''}{e''}{p}{i''},\sendEvp{}{\,p'}{e}{p}{i},sh)
    \\\implies \receiveEvp{}{\,p}{e''}{p''}{i''}\in\sh\ .
  \end{multline*}
\end{definition}

% \paragraph{Synchronous Calls}

Finally, we have a look at \emph{synchronous call} patterns.  Several
definitions can be found in~\cite{charron-bost95synchronous}.  We
adopt a definition that constrains well-formed traces rather strongly:
any send event is immediately followed by the corresponding receive
event.
%
% Synchronous communication can be specified by replacing $\wfT{HB}$
% with:
% 
%
\begin{definition}[Strict Synchronous
  Communication Constraint]\label{def:sync}
   Well-formed strict synchronous communication is captured by the constraint
 $\mathit{wf}_\mathrm{sync}$, defined as follows :
  \[
    \wfT{sync}{\consTr{\receiveEvp{}{\,p}{e}{p'}{i}}{sh}}%\chopTrSem{sh}{\receiveEvp{}{\,p}{e}{p'}{i}}}
    =
    \big(\wf{sh}\land sh=\chopTrSem{sh'}{\sendEvp{\sigma}{\,p'}{e}{p}{i}} \big)
  \]
\end{definition}

% Note that as the state $\sigma$ is unchanged by the event we have the
% same state on both sides.
%
The above constraint is realized, for example, by rendez-vous channels
in \promela\ \cite{Holzmann03}, see Section~\ref{sec:promela}.

One can define more liberal notions of synchronous communication that
accept the presence of independent events between the sending and the
reception of the message. The crown
criterion~\cite{charron-bost95synchronous}, for example, can be used
for this purpose. Its definition is global over traces and requires to
define \textit{wf} in a non-inductive manner (as an \emph{invariant}
holding for the trace at any time). We omit a detailed elaboration,
because this would distract from the main point of this paper.

%%% Local Variables:
%%% mode: latex
%%% TeX-master: "LAGC"
%%% End:

\section{Case Studies}
\label{sec:case-studies}

\subsection{\promela}
\label{sec:promela}

\begin{figure}[th]\centering
  \hspace*{-2em}
  \begin{minipage}[t]{.44\linewidth}
    \begin{lstlisting}[language=promela,basicstyle=\codesize\sffamily,keywordstyle=\bf\ttfamily,numbers=left,escapechar=\&,name=prom,firstnumber=auto]
chan request = [0] of {byte};&\label{line:chan}&
       
active proctype Client0() {
  request!0;
}  
active proctype Client1() {
  request!1;
}
    \end{lstlisting}        
  \end{minipage}
  \qquad
  \begin{minipage}[t]{.44\linewidth}
    \begin{lstlisting}[language=promela,basicstyle=\codesize\sffamily,keywordstyle=\bf\ttfamily,numbers=left,escapechar=\&,name=prom,firstnumber=auto]
active proctype Server() {
  byte n;
  
  do&\label{line:g1}&
  :: request?n;&\label{line:g3}&
     printf(&"&client %d\n&"&, n)&\label{line:print}&
  od&\label{line:g2}&
}
    \end{lstlisting}        
  \end{minipage}
  \caption{A simple \promela\ program.}
  \label{fig:promela}
\end{figure}

\promela\ \cite{Holzmann03} is a concurrent modeling language
that has been used to model and analyse numerous industrial systems. It
comes with an industry-strength model checker \cite{Ben-Ari08}. 
We do not give a full LAGC-style semantics of 
\promela, but we discuss its main features and illustrate
that they can be formalised using minor variations of the concepts
discussed above.
We also do not give the DL calculus for \promela, which mostly follows
the calculus developed in
Sections~\ref{sec:dynamic-logic-simple}--\ref{sec:interleaving}, since
\promela\ is designed for model checking.
We assume the reader is familiar with \promela.  A simple example of a
\promela\ program is shown in Figure~\ref{fig:promela}.

\paragraph{Types, Variables, Expressions}

All \promela\ datatypes are mapped into finite integer types,
including Booleans, process and channel identifiers, enumerations,
arrays, etc. Strings occur only as literals in print statements.
Variables declared outside a process are global and can be accessed by
any process, such as the \textsf{request} channel in Figure~\ref{fig:promela},
line~\ref{line:chan}.
Like in \Simple{WHILE}, \promela\ expressions are side effect-free and
their evaluation is standard.

\paragraph{Processes}

Methods and method calls do not exist in \promela, so \Simple{call} is
not present.  All \promela\ processes are declared and started in the
beginning by a---possibly implicit---initial process, so
\Simple{spawn} does not occur in the \promela\ syntax either. In
Figure~\ref{fig:promela}, three processes are declared and started
upfront.
It is easy to create an initial judgment similar to
rule~\eqref{eq:spawn-global-multi} from the process declarations of a
\promela\ program $P$:
\[
  \langle I_P\rangle,\,\Omega_P\enspace,
\]
where $\Omega_P(p)=\{\cont{\{s_p\}}\}$ for each process $p\in P$ with
program code $s_p$. The code is wrapped in a scope to handle local
variable declarations as in Section~\ref{sec:localmemory}.

There is no need for a method table, because there are no method
calls. For the same reason, the range of $\Omega$ is a singleton.
\promela\ processes execute in parallel and interleave on global
variables.
The granularity of interleaving in \promela\ is the same as in
Section~\ref{sec:local.parallelism}, so the continuations in the local
rules are unchanged.
Composition rule~\eqref{eq:multi.cont} is adequate for the \promela\
semantics, while rule~\eqref{eq:rulemulti.exec} is not needed, because
there are no method calls and the program code of all processes is
part of the initial judgment. In consequence, neither spawn, nor
invocation, nor invocation reaction events occur in \promela\ traces.

\paragraph{Statements}

There is a print statement with no effect on states, we decide not to observe the effect on the trace, we can thus give it
the semantics of \Simple{skip} in rule~\eqref{eq:skip}. Assignments are
exactly as in \Simple{WHILE}, discussed in Section~\ref{sec:while}.
Instead of \Simple{if}- and \Simple{while}-statements, \promela \ has
selection and repetition statements over guarded commands, which may
occur only there.  The semantics of guarded statements is as in
Section~\ref{sec:guarded.statements}. The semantics of a selection
\lstinline[language=promela,basicstyle=\codesize\sffamily,keywordstyle=\bf\ttfamily,mathescape=true]{if :: g$_1$; s$_1$ $\cdots$ :: g$_n$; s$_n$ fi }
is a straightforward generalisation of rules~\eqref{eq:If}
and~\eqref{eq:GuardedCommand} with $n+1$ premisses; one for each
guarded statement and one premiss if no guard is true. The path
condition of the $i$-th guarded statement is
\lstinline[language=promela,basicstyle=\codesize\sffamily,keywordstyle=\bf\ttfamily,mathescape=true]{g$_i$}. When more than
one guard is true, any one of them can be taken.

More generally, in \promela, any statement $s$ can serve as a guard,
so a guard can have a side effect. This can be modeled simply by
putting $s$ into the continuation of its premise. All statements
except \Simple{send} and \Simple{receive} (see below) are executable,
their path condition simply becomes $\trueSem$. This can be assumed
even for \Simple{send} and \Simple{receive}, because their execution
is modeled by well-formedness constraints (see below), not by
guards.

A repetition loops infinitely over its body and executes any of the
statements whose guard is true; in the loop on the
lines~\ref{line:g1}--\ref{line:g2} of Figure~\ref{fig:promela},
line~\ref{line:g3} has an empty guard, which defaults to true. A
repetition can only be exited by \lstinline[language=promela,basicstyle=\codesize\sffamily,keywordstyle=\bf\ttfamily]{break}
or \lstinline[language=promela,basicstyle=\codesize\sffamily,keywordstyle=\bf\ttfamily]{goto} statements (see below).  The
local evaluation of repetitions can be reduced to selection in a
similar manner as \Simple{while} is reduced to \Simple{if} in
rule~\eqref{eq:While}.

\paragraph{Jumps}

\promela\ features a \lstinline[language=promela,basicstyle=\codesize\sffamily,keywordstyle=\bf\ttfamily]{goto} statement,
whose argument is a label within the same process. Such unconditional
jumps without a (process) context switch are easy to characterise
semantically via continuations:
\evalruleN{goto}{%
  \valD{\Simple{goto}\;l} & = & \{\emptyset\pop \concatTr{\langle\sigma\rangle}{\cont{s}}\}
  \qquad \text{where $s$ is the code following $l$\ .}
}

The \lstinline[language=promela,basicstyle=\codesize\sffamily,keywordstyle=\bf\ttfamily]{break} statement is equivalent to a
\lstinline[language=promela,basicstyle=\codesize\sffamily,keywordstyle=\bf\ttfamily]{goto} with an implicit label that points
to the code after the repetition it is contained in.

\paragraph{Atomic}

\promela\ has an \Simple{atomic} statement\footnote{Although \promela\
  permits loops to occur inside \texttt{\bfseries atomic}, these are
  actually discouraged; if loops occur, they must always terminate.},
which works as in Section~\ref{sec:atomic}, except it may contain a
guard that can cause blocking. In this case, interleaving \emph{is}
possible.
In case the guard $g$ evaluates to true, statement $s$ is simply
atomically executed according
to rule~\eqref{eq:Atom}.\footnote{Rule~\eqref{eq:Atom} for unguarded atomic
  blocks ensures that the derived trace always ends with an empty
  continuation.}  Otherwise, it puts the whole guarded atomic block
into a continuation, so that the guard can be re-evaluated later.
\evalruleN{GuardedAtomic}{%
  \valP{O,F}{\Simple{atomic}\{:: g;  s\}} &=&
  \{\{\valD{g}=\falseSem\}\pop
  \concatTr{\langle\sigma\rangle}{\cont{\Simple{atomic}\{:: g;  s\}}}\}
  \\  
  &&\hspace{1mm} \cup
\{ \{\valD{g}=\trueSem\}\!\cup\!\pc \pop \concatTr{\tau}{\cont{\zero}}
\mid \pc\!\pop\! \concatTr{\tau}{\cont{\zero}} \!\in\!\valP{O,F}{\Simple{atomic}(s)}\}
}

\paragraph{Channels}

In \promela, the \Simple{send} and \Simple{receive} commands have the
syntax seen in Figure~\ref{fig:promela}: if $c$ is a channel identifier
and $\many{e}$ a list of expressions, then the current process uses
$c\,!\,\many{e}$ to send the value of $\many{e}$ to $c$. Dually, with
$c\,?\,\many{e}$ it receives $\many{e}$ from $c$.\footnote{\promela\
  allows either a variable, which is then set to the received value or
  an expression that is matched against it.}
In contrast to Definition~\ref{def:channel}
(Section~\ref{sec:multicore}), \promela\ channels are explicitly
declared, and the address of a send (the origin of a receive)
statement is not the recipient (sender) process, but a channel. A
message can be read by anyone who has access to the channel identifier
it was sent to. In the case of globally declared channels
(line~\ref{line:chan}), this is any process. For locally declared
channels the receiving process must have received or declared the
channel identifier.

The semantics of \promela's send and receive can be modeled in analogy
to Section~\ref{sec:multicore-semantics}, but one uses send and receive
events of the form $\sendEvp{}{\,p}{\many{e}}{c}{i}$ and
$\receiveEvp{}{\,p}{\many{e}}{c}{i}$, respectively, where the
address is a channel.\footnote{In case one of the $e$ is not a
  variable, but a match expression, the assignment in the local
  evaluation rule~\eqref{eq:Send} has the variables \emph{occurring}
  in $e$ in its domain.}

\paragraph{Well-formedness}

Only the first two rules of Definition~\ref{def:well-formedness-multi} (with
obvious adaptations) are needed, because there are no method calls or
dynamically created processes.
On the other hand, the option of local channels requires to model
the visibility of channels. This can be done by well-formedness
constraints of the following form:
\[
  \wfT{ch}{\consTr{\receiveEvp{}{\,p}{e}{c}{i}}{sh} } =
  \wf{sh} \land
  \left(\text{isLocal}(p,c)\lor\text{isGlobal}(c)\lor{\receiveEvp{}{\,p}{c}{c'}{i}}\in
    sh\right)\ .
\]

The predicate $\text{isLocal}(p,c)$ holds if the process $p$ declares a local
channel named $c$, $\text{isGlobal}(c)$ holds if channel $c$ is
declared globally. These predicates can be easily checked by
inspecting the code of a \promela\ program. The constraint expresses
that each channel used in a receive statement in  process $p$ must be
visible at that point. This is the case if it is either global, or
locally declared in $p$, or it was received earlier. There is a
similar constraint for send statements.

There are two channel variants in \promela: \emph{rendez-vous
  channels} impose strictly synchronous communication. For those
channels the well-formedness constraint in Definition~\ref{def:sync} is
used.
The other channel type are \emph{buffered channels} with a capacity
$N>0$. These are characterised by the bound FIFO pattern in
Definition~\ref{def:bfifo}.

The standard \promela\ semantics stipulates that received messages are
consumed, i.e.\ they can be read only once, which is ensured by the
following constraint:
\[
  \wf{\consTr{\receiveEvp{}{\,p}{e}{c}{i}}{sh}}\,=\,
  \wf{sh}\land\not\exists\, e', p'.\, \receiveEvp{}{\,p'}{e'}{c}{i} \in
  sh\ .
\]

It is possible to change certain features of \promela\ channels, for
example, reception can be made non-consumptive, out-of-order reception
can be permitted, send and receive can be made non-blocking.  All of
these can be characterised by suitable well-formedness constraints,
but we refrain from spelling out details.

\paragraph{Synchronisation}

Channels, like guarded commands, are often used for synchronisation
purposes in \promela. The receive statement on a buffered channel is
only executable if it contains a message; one can only send to a
channel that is not full. Non-executable statements block a process
until they become executable. In contrast to guarded commands, no
specific evaluation rule is required to model blocking senders and
receivers, because the correct communication order is already
guaranteed by the well-formedness constraints.

This consideration suggests that blocking guards might as well be
modeled by suitable ``waiting'' events. Indeed, this is possible,
but less natural in that case.

% \paragraph{Calculus}

% It would be possible to define a DL calculus for \promela, as done in
% Sections~\ref{sec:dynamic-logic-simple}--\ref{sec:interleaving}, but
% this makes no sense. All \promela\ datatypes are finite, and there is
% no dynamic allocation of memory, channels, or processes. Hence, any
% given \promela\ program has a finite number of possible states. In
% consequence, model checking is the more appropriate verification
% technique for \promela, which it was actually designed
% for~\cite{Holzmann03}.

\subsection{Actors}
\label{sec:actor.class}

We define a pure, object-based actor language based on the language in
Section~\ref{sec:multicore}; Figure~\ref{fig:actor.class} shows the
syntax. There is a one-to-one mapping between objects and actors. Each
actor has its own process with local memory and executes calls to its
methods with run-to-completion semantics (in consequence,
non-terminating method calls render an actor unresponsive). Each actor
can only access its own local memory. The state of another actor can
only be modified via a method call.

\begin{figure}[th]\centering
$  \begin{array}[t]{r@{\hspace{2pt}}r@{\hspace{2pt}}l}
    P \in \mathit{Prog}&::=&\many{CD}\ \scope\\
    CD \in \mathit{ClassDecl} & ::=& \Simple{class}\ C\ \{ \many{\fd}\ \many{M} \}\ % |\ CD\ CD
    \\
    M \in \mathit{MethDecl} & ::=& m(\overline{x})\ sc \ % |\ \mathit{methDecl}\ \mathit{methDecl}
    \\
    d\in\mathit{VarDecl} &::=& \varepsilon\ |\ x; \ d
    \\    
    \scope \in \mathit{Scope} &::=& \{d \ \Simple{atomic}(\stmt)\} \\    
    s\in\mathit{Stmt}&::=&  \ \Simple{skip} \ | \  x := \mathit{rhs} 
\ | \ \Simple{if}~e~\{ ~\stmt~\} 
% \ | \ \Simple{while}\ e\, \{\, \stmt\, \} 
% \ | \   \Simple{this}.m(\many{e})
 \ | \ x!m(\many{e})  
\ |\ \stmt;\stmt \\
    \mathit{rhs} \in\mathit{Rhs} &::=& e  \ | \  \Simple{new} \ C\,(\many{e})
   \end{array}$
   \caption{Syntax of the Actor Language in Section~\ref{sec:actor.class}\label{fig:actor.class}}     
\end{figure}

In this actor language, a program $P$ consists of a set of class
declarations $CD$ with a main block $\scope$.  Each class has a name
$C$, which also becomes the type of the class, with a sequence of
field declarations $\fd$ and method declarations~$M$.  Field
declarations have the same syntax as local variable declarations $d$.
For simplicity, we let each method declaration associate a \emph{unique}
name~$m$ to a list of arguments $\many{x}$ and a method body $\scope$.

A method body starts with local variable declarations followed by an
\Simple{atomic} statement wrapping a sequence of statements
$s$. Recall from Section~\ref{sec:atomic} that we excluded loops
inside \Simple{atomic}, and thus exclude the \Simple{while} statement
from the syntax.  This is no fundamental restriction, because actors
may call each other and permit unbounded recursion.

The effect of \Simple{atomic} is to enforce a run-to-completion
semantics. Rather than changing the semantics of the programs
discussed in Sections~\ref{sec:while}--\ref{sec:multicore}, we here
make atomicity explicit.  In most actor
languages~\cite{ActiveObjects17}, the \Simple{atomic} keyword is not
used because actors are single threaded, and thus ``locally atomic''
by default.

We extend the statements of \Simple{WHILE}, given in
Figure~\ref{fig:basiclangWhile},
% we also extend the list of statements
 with 
 % synchronous self-calls $\Simple{this}.m(\many{e})$ and
asynchronous method calls
$x!m(\many{e})$ on a caller object $x$.
The semantics of asynchronous calls combines the semantics of sending
a message and a method call. These calls are not blocking, so in the
absence of futures or promises, there is no direct way to return a
value to the caller (return values via futures are modeled in
Section~\ref{sec:actor.future} below).\footnote{It is possible to pass
  the caller's identity as a call argument, which the callee could
  then use to return a value to the caller's state via a separate
  callback. But this does not give the caller a handle to access the
  result. }
The right-hand side of assignments includes expressions and
$\Simple{new} \ C\,(\many{e})$. The latter creates a new actor object
of class $C$ with constructor arguments $\many{e}$.

In the following, let $\OId$ be a set of object identifiers with
typical element~$o$.  We represent a program $P$ in terms of a global
lookup table $\Gi$: A set of triples each consisting of a class name
with its field and method declarations:
\begin{equation*}
  \Gi= \{
  \langle C, \many{\fd}, \many{m(\many{x})\ \scope} \rangle
  \mid \fd \in \fields{C},\, m\in \methods{C},\,C\in P
  \}\ .
  \label{eq:actor.G}
\end{equation*}

In the above expression we use selector functions with obvious
definitions: $\fields{C}$ returns the field names of a class $C$,
$\methods{C}$ returns the methods declared in a class $C$. We also
define $\clookup{X}$, where $X$ is either an object identifier or a
variable of type $\OId$, that returns the class of $X$, as well as
$\lookup{m,\Gi}$ that returns the declaration of a method with name
$m$ in $\Gi$.

\subsubsection{Local Evaluation}

When following the principle of local evaluation, we face the problem
that a statement $s$, which is locally evaluated, cannot possibly know
the object it is running on. We address this issue in the same manner
as we dealt with unknown values before, i.e.\ by introducing a
parameter $O$ that is instantiated during trace composition. Since
this parameter must be instantiated consistently for all statements an
object $O$ executes, it becomes a parameter of the semantic evaluation
function, denoted $\valP{O}{s}$, where~$O$ is a symbolic variable.
The evaluation function $\valP{O}{s}$ produces the traces when $s$ is
running on object $O$; in particular, we let
$\valP{O}{\Simple{this}}=O$.

The evaluation rule for \Simple{new} below introduces an event
$\newEvs{\sigma}{o, \many{v}}$ to capture that a new object $o$ with
arguments~$\many{v}$ is created.
% by object $O$.
Similar to Rule~\eqref{eq:Spawn} that spawns a task, the local
evaluation Rule~\eqref{eq:NewClass} for object creation creates a
fresh symbolic variable $X$ to represent the unknown object identity
that is returned.  Hence, Rule~\eqref{eq:NewClass} has an empty path
condition, extends the trace by the \emph{new object creation} event
followed by an updated state, where~$x$ is mapped to symbolic
variable~$X$ and object fields are mapped to the constructor
arguments.  We suppose that $\Gi$ is pre-populated with enough
variables of each class. Consequently, Rule~\eqref{eq:NewClass} can
pick a symbolic variable $X$ that is fresh in $\sigma$ and such that
$\clookup{X} = C$.

\evalruleN{NewClass}{%
\valP{O}{x :=\Simple{new} \ C\,(\many{e})} 
&=&\\
&&\hspace{-20mm}\{\emptyset\pop\concatTr{\consTr{{\sigma}[x\mapsto X,
    X\mapsto *, \many{X.\fd \mapsto
      \valP{O}{e}}]}{\newEvsV{\sigma}{X,\valP{O}{\many{e}}}{\{X\}}} } {\cont{\zero}}
\\
&&\hspace{-18mm}\mid
X\not\in\dom(\sigma),\ \clookup{X} = C,\ \fd \in \fields{C}\}}

Rule~\eqref{eq:Asynch} for non-blocking asynchronous method calls is
similar to Rule~\eqref{eq:Call}, but the invocation event
$\invEv{\sigma}{\many{v},x,m,i}$ also includes the callee object $x$
and a message identifier $i\in\MId$. Thus, asynchronous method calls
can be viewed as a combination of procedure calls and message sending.
\evalruleN{Asynch}{%
  \valP{O}{x!m(\many{e})} 
  & = & \{\emptyset\pop\concatTr{\invEv{\sigma}{\valP{O}{\many{e}},\valP{O}{x},m,i}}{\cont{\zero}}\}
\mid i\in \MId\} }

The following rule evaluates the body of a method and corresponds to
Rule~\eqref{eq:method} but adapted to a list of arguments instead of a
single one.  The invocation reaction event inserted by the rule
includes symbolic variables which represent the method's actual
% information about the method
parameters, which are unknown in the local context. Since there are no
return values, it does not need to know the caller.
\evalruleN{method-class}{%
  \valP{O}{m(\many{x})\ \scope} &=& \{
  \emptyset \pop \concatTr{\consTr{\sigma[\many{z}\mapsto \many{Z}, \many{Z}\mapsto * ]}{\invREvV{\sigma}{\many{Z},m,i}{\many{Z}}}
  }{\cont{sc\subst{\many{x}}{\many{z}}}}\\
  &&\ \ \mid \many{z},\many{Z}\not\in
  \dom(\sigma),\ i\in \MId
  \}
}

\subsubsection{Trace Composition}

As explained above, the local evaluation is parameterised with the
executing object $O$. During trace composition, this parameter is
instantiated by a concrete object identifier $o$. To associate events
with the object they originate from, traces are tagged with that
object. This works exactly like the process identifier tags in
Definition~\ref{def:tagged-trace} of Section~\ref{sec:multicore}.

We use a mapping $\Omega$ into multisets of possible continuations
like in Section~\ref{sec:multicore}, with the difference that its
domain is object identifiers instead of process identifiers. We use
the same notation for multiset operations as before.

Trace composition in Rule~\eqref{eq:actor.multi.cont} follows
Rule~\eqref{eq:multi.cont} closely , with two small modifications:
instead of process identifiers, objects identifiers are used and the
evaluation of the continuation $s$ is performed on the object $o$,
where it is scheduled. The first two premises of the rule capture the
scheduling decision to continue the trace by executing task $s$ on
object $o$.
\begin{equation}
  \inferrule
  {o\in\dom(\Omega) \qquad
  \Omega[o]=q\multiunion\{\cont{s}\} \\  
    \sigma = \last(sh) \\
    pc\pop \concatTr{\tau }{\cont{s'}} \in \valP{o}{s}
    \\
    \rho\text{ concretizes }\tau \\ 
    \rho(pc) \text{ consistent} \\ 
    \wf{\chopTrSem{sh}{\rho(\tau)}^o}}     
  {sh, \Omega \to
    \chopTrSem{sh}{\rho(\tau)^o}, \Omega[o\mapsto q\multiunion\{\cont{s'}\}]}
  \label{eq:actor.multi.cont}
\end{equation}

Likewise, Rule~\eqref{eq:actor.multi.invREv} \emph{exactly} follows
the pattern established with Rule~\eqref{eq:rulemulti.exec}.  The rule
picks a method $m$ and an object $o$, on which it is supposed to be
executed. If that object is not yet in the domain of $\Omega$, then we
use the notational convention $\Omega(o)=\emptyset$. This corresponds
to object creation.  Either way, the method declaration is evaluated
on $o$. Observe that Rule~\eqref{eq:method-class} issues an
\emph{invocation reaction} event.  Well-formedness ensures that an
\emph{invocation} event of the form $\invEv{}{\many{v},o,m,i}$ is
present in $sh$ that matches the message identifier, method
parameters, and callee.
\begin{equation}
\inferrule
{
  \lookup{m, \Gi} = m(\many{x})\ \scope 
  \qquad
  \sigma = \last(sh)
  \qquad
  pc\pop \concatTr{\tau }{\cont{s'}} \in \valP{o}{m(\many{x})\ \scope} \\
  \rho\text{ concretizes }\tau \qquad 
  \rho(pc) \text{ consistent}
  \qquad
  \wf{\chopTrSem{sh}{\rho(\tau)^o}} \qquad 
}{
  sh, \Omega \to
  \chopTrSem{sh}{\rho(\tau)}^o,\, 
  \Omega[o\mapsto \Omega[o] \multiunion \{\cont{s'}\}]
}
\label{eq:actor.multi.invREv}
\end{equation}

\subsubsection{Well-formedness}
\label{def:wellformedness-class} 

Well-formedness must essentially ensure two aspects:

\begin{enumerate*}
\item the uniqueness of events, for example, there is at most one
  $\newEvs{}{o,\_}$ per object, i.e.\ an object cannot be created
  twice;
\item the event sequence related to a call to a method $m$ on object
  $o$ with parameters $\many{v}$ has the following form:
\end{enumerate*}
\[
  \cdots\quad \newEvp{}{o''}{o,\many{v}}\quad\cdots\quad
  \invEvp{}{o'}{\many{v'},o,m,i}\quad\cdots\quad\invREvp{}{o}{\many{v'},m,i}\quad\cdots
\] 
where $o'$ and $o''$ can be the same object. This is achieved with the
equations in Figure~\ref{fig:well-formedness-actor} (we do not repeat the
first two lines of Definition~\ref{def:wellformedness-concrete-trace}, which
are always assumed to be part of well-formedness).

\begin{figure}[th]\centering
   $ \begin{array}[t]{r@{\hspace{2pt}}r@{\hspace{2pt}}l}
      % \wf{\epsilon}  &=& true \\
      % \wf{\consTr{\sigma}{sh}} &=& \wf{sh}\\
      \wf{\consTr{\newEvp{}{o'}{o,\many{v}}}{sh}}  &=&  \wf{sh} \land \not\exists o'',\many{v'}.\, \newEvp{}{o''}{o,\many{v'}} \in sh
      \\
      \wf{\consTr{\invEvp{}{o'}{\many{v},o,m,i}}{sh}}  &=&
                                                           \wf{sh} \land \ 
                                                           \exists o'', \many{v}'
                                                           . \newEvp{}{o''}{o,\many{v}'}\in sh \ \land
      \\   &&\not\exists o''',\many{v}'',o'''',m'.\, \invEvp{}{o'''}{\many{v}'',o'''',m',i}\in sh	
      \\
      \wf{\consTr{\invREvp{}{o'}{\many{v},m,i}}{sh}}  &=&  \wf{sh} \land \ \exists o.\invEvp{}{o}{\many{v},o',m,i} \in sh 
                                                          \ \land \\
                     && \not\exists o'',\many{v}',m'.\, \invREvp{}{o''}{\many{v'},m',i}\in sh
    \end{array}$
  \caption{Well-formedness for Actor language.\label{fig:well-formedness-actor}}
\end{figure}

\subsubsection{Global Trace Semantics}
\label{sec:global-actors}
\begin{definition}[Program Semantics for Actors]
  \label{def:global-trace-semantics}
  Given a finite program $P$ with a lookup table $\Gi$ and a main block
  $\mathit{main}$, i.e.,
  $ \Omega_{\mathit{init}}( o_{\mathit{main}}) =
  \{\cont{\mathit{main}}\}$.
  Let $\sigma_\varepsilon$ denote the empty state, i.e.\
  $\dom(\sigma_\varepsilon)=\emptyset$.
  Let $sh_0,\,\Omega_0\rightarrow\,sh_1,\,\Omega_1\rightarrow\cdots$\
  be a maximal sequence obtained by the repeated application of
  the composition rules (Rules~\eqref{eq:actor.multi.cont}--\eqref{eq:actor.multi.invREv}),
  starting with\footnote{Note that we denote by $\varepsilon$ an empty list of method arguments.}
  \begin{equation*}
    \consTr{\newEvp{}{o_{\mathit{main}}}{o_{\mathit{main}},\varepsilon}}{\langle\sigma_\varepsilon\rangle},\,\Omega_{\mathit{init}}\enspace.
  \end{equation*}
  If the sequence is finite, then it must have the form
  \[\consTr{\newEvp{}{o_{\mathit{main}}}{o_{\mathit{main}},\varepsilon}}{\langle\sigma_\varepsilon\rangle},\,\Omega_{\mathit{init}}
    \to \dots \to sh,\Omega^{\zero}\enspace,\]
  where $\Omega^{\zero}(o)=\emptyset$ %\{n\cdot\cont{\zero}\}$ for some $n>0$
  for all $o\in\dom(\Omega)$.
  If the sequence is infinite, let $sh=\lim_{i\rightarrow\infty}sh_i$.
  In either case, $sh$ is a trace of $P$. The set of all such traces
  is denoted with $\traces{P, \sigma_\varepsilon}$.
  \label{ex:actor.global.trace.multi}
\end{definition}

The new event at the start of a trace represents creation, by the
system, of an initial object that runs the main method. The
well-formedness of events ensures that subsequently created objects
will not erroneously be the initial object, which is reserved for the
code executed in the main method.

%%% Local Variables:
%%% mode: latex
%%% TeX-master: "LAGC"
%%% End:

\subsection{Active Objects}
\label{sec:actor.future}

Active object languages \cite{ActiveObjects17} have a mechanism
like futures or promises that provides a reference to the value
computed by an asynchronous method call. This makes it possible for a
task to free its processor resource while waiting for a result to be
finished (so-called cooperative multi-tasking). 
We modify the syntax from Section~\ref{sec:actor.class} and introduce
futures, a \Simple{return} statement, and \Simple{get}-expressions to
retrieve the value stored in a future; the resulting syntax is given
in Figure~\ref{fig:actor.future}. Note that the latter can block if
this value is not yet available.

\begin{figure}[th]\centering
$  \begin{array}[t]{r@{\hspace{2pt}}r@{\hspace{2pt}}l}
        P \in \mathit{Prog}&::=&\many{CD}\ sc\\
    CD \in \mathit{ClassDecl} & ::=& \Simple{class}\ C\ \{ \many{\fd}\ \many{M} \}\ % |\ CD\ CD
    \\
    % F \in \mathit{FieldDecl} & ::=& \mathit{var} \ % |\ F\ F
    % \\
    M \in \mathit{MethDecl} & ::=& m(\overline{x})\ sc \ % |\ \mathit{methDecl}\ \mathit{methDecl}
    \\
    d\in\mathit{VarDecl} &::=& \varepsilon\ |\ x; \ d
    \\
    sc \in \mathit{Scope} &::=& \{d \ s;\ \Simple{return}\ e\} \\ 
    g \in \guard &::= & e? \mid e\\
    s\in\mathit{Stmt}&::=&  \ \Simple{skip} \ | \  x := \mathit{rhs} \ | \ \Simple{if}~e~\{ ~\stmt~\} \ | \ \Simple{while}\ e\, \{\, \stmt\, \} 
 \  |\ \Simple{this}.m(\many{e}) \  |\ \Simple{await}\ g  \ |\ \stmt;\stmt \\  
    \mathit{rhs} \in\mathit{Rhs} &::=& e  \ | \  \Simple{new} \ C\,(\many{e})\  | \ x!m(\many{e}) | \ e.\Simple{get}\   % |\  \Simple{destiny}
    \\
  \end{array}$
\caption{Syntax of Active Objects with Futures.\label{fig:actor.future}}     
\end{figure}

We need to capture cooperative multi-tasking and the blocking of future
access into the semantics. This cannot be done with \Simple{atomic},
as in the previous section, because the suspending statements might occur
nested inside loops and recursive calls. It is more appropriate to
switch to a semantics where a task \emph{runs to completion} by
default, unless it suspends or blocks.
As a consequence we will specify a semantics with finer-grained,
explicitly controlled interleaving between tasks. For this reason, we
now consider both \Simple{while}-loops and synchronous self-calls
$\Simple{this}.m(\many{e})$ in the syntax.

Task suspension takes place in $\Simple{await}\,g$ statements, where
the guard $g$ suspends the execution of a task; it works similarly to
the guarded statements of Section~\ref{sec:guarded.statements}.  An
\Simple{await} releases its object's process, such that other tasks
may be executed. The execution of the statements following the guard
can resume when the guard evaluates to true.  Even when the guard
evaluates to true, the execution may be suspended.  Guard expressions
either have the form $x?$, which synchronize with the future
referenced by $x$ receiving a value (the future thereby gets
\emph{resolved}),
% becoming true when the value of future $x$ is available,
or they are Boolean expressions $e$.

The second difference to the Actor language of
Section~\ref{sec:actor.class} is that asynchronous method calls return
values to their associated future.\footnote{Obviously, synchronous
  calls can also be equipped with return values in a straightforward
  manner.} The \Simple{return} statement terminates the execution of
an asynchronous method and returns its argument. The syntax enforces
that the \Simple{return} statement is at the end of a method body.
Asynchronous method calls now appear on the right hand side of an
assignment, the variable on the left is a \emph{future} associated
with the call.  The future's value can be retrieved with \Simple{get},
whose argument expression must evaluate to a future. To avoid
blocking, one can precede $x.\Simple{get}$ with $\Simple{await}\,x?$.
Futures are first-class values and may be passed as method arguments
to other objects.

\subsubsection{Local Evaluation}
\label{sec:local-evaluation-actor-futures}

For similar reasons as for objects in Section~\ref{sec:actor.class},
we need to provide the future towards whose value a local statement is
contributing, as a parameter of the evaluation function. This future,
sometimes called the \emph{destiny} of the executing code
\cite{deboer07esop}, cannot possibly be known locally. Hence, the form
of the semantic evaluation function becomes $\valP{O, F}{s}$, where
$F$ is a symbolic variable of type future with values in $\FId$; the
evaluation function captures that $s$ is running on object $O$ and has
destiny~$F$. To keep track of the destiny, we store it together with
the continuations; in the configurations, continuations now take the
form $\contF{f}{s}$.

% \begin{comment}
% The evaluation rule for unguarded atomic blocks
% remains Rule~\eqref{eq:Atom}, which is a building block of the following
% rule for guarded atomic blocks: in case the guard $g$ evaluates to
% true, statement $s$ is simply atomically executed according
% to Rule~\eqref{eq:Atom}.\footnote{Rule~\eqref{eq:Atom} for unguarded atomic
%   blocks ensures that the derived trace always ends with an empty
%   continuation.}  Otherwise, it puts the whole guarded atomic block
% into a continuation, so that the guard can be re-evaluated later.
% %
% \evalruleN{GuardedAtomic}{%
%   \valP{O,F}{\Simple{atomic}\{:: g;  s\}} &=&
%   \{\{\valD{g}=\falseSem\}\pop
%   \concatTr{\langle\sigma\rangle}{\cont{\Simple{atomic}\{:: g;  s\}}}\}
%   \\  
%   &&\hspace{1mm} \cup
% \{ \{\valD{g}=\trueSem\}\!\cup\!\pc \pop \concatTr{\tau}{\cont{\zero}}
% \mid \pc\!\pop\! \concatTr{\tau}{\cont{\zero}} \!\in\!\valP{O,F}{\Simple{atomic}(s)}\}
% }
% \end{comment}

The rule for assignment with an asynchronous call on the right-hand
side first emits an invocation event $\invEv{}{\many{e},o,m,f}$,
similar to Rule~\eqref{eq:Asynch}. The difference is that now the
future $F'$, that is the destiny of the call, must be recorded both in
the left-hand-side variable on the left and in the invocation
event. That future can also be used to identify the call, thus
replacing the message identifier. Since the value of $F'$ cannot be
known locally, it is modeled as a fresh symbolic variable, similarly
as in Rules~\eqref{eq:Receive}--\eqref{eq:Spawn}.
\evalruleN{Async-Assign}{%
\multicolumn{1}{l}{\valP{O,F}{x:=e!m(\many{e'})}=}\\ 
\hspace{7mm}\{\emptyset\pop\concatTr{\consTr{{\sigma}[x\mapsto F',F'\mapsto
    *]}{\invEvV{\sigma}{\valP{O,F}{\many{e'}},\valP{O,F}{e},m,F'}{\{F'\}}
  }}{\cont{\zero}}
\\ 
\multicolumn{1}{l}{\hspace{8mm}
\mid
F'\not\in\dom(\sigma)\}
}}

To ensure that only futures with an available value are retrieved, we
introduce new events $\compEv{\sigma}{f, v}$ and
$\compREv{\sigma}{f,v}$.  These events denote the \emph{completion} (the
value of future $f$ is available) and \emph{completion reaction} (the
value of $f$ is retrieved) of an asynchronous method.  The evaluation
rule for returning the completed result simply inserts a
\emph{completion} event for the current future.
\evalruleN{Return}{%
\valP{O,F}{\Simple{return}~e} 
&=&\{\emptyset\pop\concatTr{\compEv{\sigma}{F,
    \valP{O,F}{e}}}{\cont{\zero}} \}
}

To retrieve the returned value stored in a future, Rule~\eqref{eq:Get}
inserts a completion reaction event and extends the current state with
a symbolic variable~$V$ that holds the as yet unknown return
value.
\evalruleN{Get}{%
\valP{O,F}{x:=e.\Simple{get}} 
&=&\{\emptyset\pop\concatTr{\consTr{{\sigma}[x\mapsto V,V\mapsto
    \ast]}{\compREvV{\sigma}{\valP{O,F}{e},V}{\{V\}}}}{\cont{\zero}}\\
&&\hspace{2mm} \mid
 V\not\in\dom(\sigma)\}
}

The rule for evaluating a method body is similar to
Rule~\eqref{eq:method-class}, except that
% the \emph{invocation reaction} event is unifies
the future variable~$f$ of the freshly running process is unified with
the variable $f$ of the invocation reaction event. It also must
contain the (as yet unknown) caller $Y$. Well-formedness of the trace
containing $\invREv{}{\many{X}, Y, m,f}$ will ensure that the $f$ of
the invocation reaction is matched with the same $f$ for a matching
invocation event (see the definition of $\wf{sh}$ below).
\evalruleN{method-future}{%
  \multicolumn{1}{l}{\valP{O,F}{m(\many{x})\ \scope} =} \\
 \hspace{7mm} \{ \emptyset \pop
  \concatTr{\consTr{\sigma[\many{x}'\mapsto\many{X}, \many{X}\mapsto *, Y\mapsto * ]}{\invREvV{\sigma}{\many{X}, Y, m,F}{\many{X}\cup\{Y\}} }
  }{\cont{\scope\subst{\many{x}}{\many{x}'}}}\\ 
\multicolumn{1}{l}{ \hspace{8mm}\mid \many{x}', \many{X}, Y \not\in\dom(\sigma)\}}}
%   \multicolumn{3}{c}{\{ \emptyset \pop
%   \concatTr{\consTr{\sigma[\many{x}'\mapsto\many{X}, \many{X}\mapsto *, Y\mapsto * ]}{\invREvV{\sigma}{\many{X}, Y, m,F}{\many{X}\cup\{Y\}} }
%   }{\cont{\scope\subst{\many{x}}{\many{x}'}}}
% \mid \many{x}', \many{X}, Y \not\in\dom(\sigma)\}} }

For \Simple{while}-loops, we can use Rule~\eqref{eq:While}.
Rule~\eqref{eq:Synch} handles synchronous self-calls (no other
synchronous calls are considered here) by inlining the method body. It
obtains the method declaration of~$m$ with body $sc$ using the
auxiliary function $\lookup$.and turns the formal parameters
$\many{x}$ into local variable declarations bound to the argument
values $\many{e}$. The entire statement sequence is wrapped into a
scope for name disambiguation.
\evalruleN{Synch}{%
  \valP{O,F}{\Simple{this}.m(\many{e})} 
  &=& 
  \valP{O,F}{\{\many{x};\ \many{x}:=\valP{O,F}{\many{e}};\ \scope\}},\ \text{where} \lookup{m, \Gi} = m(\many{x}) \ \scope}%
    
Suspended tasks are introduced by \Simple{await} statements. We only
need to specify how to progress after an \Simple{await}---the trace
composition semantics will deal with actual task suspension by
checking for the presence of an \Simple{await} guard. There are two
syntactically distinct cases. The first corresponds to a guarded
command in Rule~\eqref{eq:GuardedCommand}, the second requires that
the guarded future is resolved and thus introduces a completion
reaction event.  This event must match a previous completion event
involving the same future, which will be ensured by well-formedness in
the global rules.
\evalruleN{Suspend}{%
  \valP{O,F}{\Simple{await}\,e} &=& 
\{ \{\valD{e}=\trueSem\}\pop \concatTr{\langle\sigma\rangle}{ \cont{\zero}}\}  \\[1ex]
 \valP{O,F}{\Simple{await}\,e?} &=&\{\emptyset\pop\concatTr{{\compREvV{\sigma}{\valP{O,F}{e},V}{\{V\}}} }{\cont{\zero}}
\mid
V\not\in\dom(\sigma)\} 	
}

\subsubsection{Trace Composition}
\label{sec:global-evaluation-active-objects}

We extend the continuations in the configuration with a future
identifier that corresponds to the future to be resolved by the
considered task (thus, continuations take the form $\contF{f}{s}$).
With active objects, there is no data-race among concurrent tasks,
because each data item belongs to one single active object and there
is at most a single active task for each active object.  One way to
realize this in the semantics is to assign a single process to each
active object and make sure that among all the tasks of this process,
only one is \emph{not} waiting for its turn.  In other words, for each
object identifier $o$, all continuations, except possibly one, are of
the form $\suspendedcontF{f}{g}{s}$.

We use the symbol $Q$ to denote a multiset of \emph{suspended} or empty
continuations of the form $\suspendedcontF{f}{g}{s}$ or $\contF{f}{\zero}$. The current global
state of a program trace is represented by a mapping $\Sigma$ from
object identifiers to multisets of continuations. For all object
identifiers $o$, $\Sigma[o]$ is either empty or of the form
$Q\multiunion\{\contF{f}{s}\}$, which reflects that there is at most
one active task.
If $s$ is of the form $\Simple{await}\,g;s'$, then no task is
currently executing on object $o$ and any task can be activated;
otherwise, $s$ must be the only continuation that can be executed in
object $o$. It is called the \emph{active} task. The rewrite rule for
empty trace simplification is extended to the new continuations:
$Q+\contF{f}{\zero} \rightsquigarrow Q$.  The trace composition rule
closely follows Rule~\eqref{eq:actor.multi.cont}:
\begin{equation}
  \inferrule
  { %o\in\dom(\Omega) \\
    \Sigma[o]=Q\multiunion\{\contF{f}{s}\} \\  \sigma = \last(sh) \\
    pc\pop \concatTr{\tau }{\cont{s'}} \in \valP{o,f}{s}\\
    \rho\text{ concretizes }\tau \\ 
    \rho(pc) \text{ consistent} \\ 
    \wf{\chopTrSem{sh}{\rho(\tau)}^{o}}
  }     
  {sh, \Sigma \to
    \chopTrSem{sh}{\rho(\tau)^{o}}, \Sigma[o\mapsto Q\multiunion\{\contF{f}{s'}\}]}
  \label{eq:actor.progress}
\end{equation}

The main difference is that the global state has a different signature
and the nature of $Q$ enforces the selection of the current
task. Observe that this rule either evaluates the current active task
or activates a new task in $o$ when there is no active task.

Rule~\eqref{eq:actor.future.invREv2} follows
Rule~\eqref{eq:actor.multi.invREv} except that the evaluation function
is now tagged with a concrete future identity $f$. The correctness of
the concretisation is ensured by the well-formedness premise that
requires an invocation event of the form $\invEv{}{\many{v},o,m,f}$ to
be present in $sh$.  Note that we only run a new method if there is no
active task ($\Sigma[o] = Q$).  Here, the destiny of the created
thread is the future that was created during the evaluation of the
invocation event $\invEv{}{\many{v},o,m,f}$.  As no other task is
active, the method can start running immediately.\footnote{Running the
  method immediately allows us, for example, to encode a FIFO service
  policy by composing with the communication ordering rules of
  Section~\ref{sec:communication-patterns}}
\begin{equation}
  \inferrule {
	\Sigma[o] = Q \\
    \lookup{m, \Gi} = m(\many{x})\ \scope 
    \\
    \sigma = \last(sh)
    \\
    f\in\FId \\ o\in \OId\\
    pc\pop \concatTr{\tau }{\cont{s'}} \in \valP{o,f}{m(\many{x})\ \scope} 
    \\\\
    % \tau = \consTr{\sigma[X\mapsto *, \many{Y}\mapsto * ]}{\invREv{\sigma}{\many{Y}, X, m,f}}
    \rho\text{ concretizes }\tau \\ 
    \rho(pc) \text{ consistent}
    \\
    \wf{\chopTrSem{sh}{\rho(\tau)^{o}}}  
  }{
    sh, \Sigma \to
    \chopTrSem{sh}{\rho(\tau)}^o,\, 
    \Sigma[o\mapsto Q \multiunion \{\contF{f}{s'}\}]
  }
\label{eq:actor.future.invREv2}
\end{equation}

This last rule can be triggered when $Q$ is empty, i.e. when a newly
created object handles its first invocation or when an object has no
task currently running.

\begin{proposition}
  The property that $\Sigma[o]$ contains at most one active task for
  any $o\in\OId$ is an invariant preserved by applications of
  Rules~\eqref{eq:actor.progress}--\eqref{eq:actor.future.invREv2}.
\end{proposition}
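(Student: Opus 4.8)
The plan is to proceed by induction on the length $n$ of the reduction sequence $sh_0,\Sigma_0 \to \cdots \to sh_n,\Sigma_n$, taking as the inductive invariant the precise statement that, for every $o\in\OId$, the multiset $\Sigma[o]$ contains at most one \emph{active} continuation, where a continuation $\contF{f}{s}$ is active iff $s$ is neither $\zero$ nor of the form $\Simple{await}\,g; s''$ (equivalently, $\Sigma[o]$ is empty or decomposes as $Q \multiunion \{\contF{f}{s}\}$ with $Q$ a multiset of suspended or empty continuations). For the base case, the initial configuration carries a single continuation, running the main block and hence active, on the initial object and is empty on every other object, so it trivially satisfies the bound per object.

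For the inductive step, I would first observe that both rules modify $\Sigma$ only at the single object identifier $o$ selected in their premises, leaving $\Sigma[o']$ unchanged for every $o'\neq o$; by the induction hypothesis those entries already satisfy the bound, so it suffices to re-establish the invariant at $o$. I would then split on which rule fires. The essential point, which I would stress, is that in both rules the multiset written $Q$ is required \emph{by definition} to consist solely of suspended or empty continuations. In Rule~\eqref{eq:actor.progress}, the decomposition $\Sigma[o] = Q \multiunion \{\contF{f}{s}\}$ with $Q$ of this form forces the rule to act on the unique active continuation whenever one exists (since leaving an active continuation inside $Q$ would violate its typing), and otherwise to activate any one of the suspended tasks. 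In Rule~\eqref{eq:actor.future.invREv2}, the premise $\Sigma[o]=Q$ likewise requires that $o$ carries no active task at all. Thus the typing of $Q$ is exactly what encodes the cooperative scheduling discipline, and I expect this bookkeeping to be the main obstacle: one must verify that the rules, as written, genuinely cannot fire on a suspended task while an active one is present, and that activating a suspended task is permitted only when no task is active.

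Granting that, the counting argument is short. Rule~\eqref{eq:actor.progress} removes one continuation $\contF{f}{s}$ and inserts exactly one continuation $\contF{f}{s'}$, where $pc\pop\concatTr{\tau}{\cont{s'}}\in\valP{o,f}{s}$; since local evaluation yields a single trailing continuation, the result is again $Q \multiunion \{\contF{f}{s'}\}$ with $Q$ unchanged and hence still all suspended or empty. Whether $s'$ is active, has become suspended (for instance when the next statement is an $\Simple{await}$), or is empty and removed by the rewrite $Q + \contF{f}{\zero} \rightsquigarrow Q$, the single new continuation contributes at most one active task, so the bound is preserved. Rule~\eqref{eq:actor.future.invREv2} fires only when $\Sigma[o]=Q$ has no active task and adds one continuation $\contF{f}{s'}$ obtained from evaluating the method body (whose first local step is the invocation-reaction event of Rule~\eqref{eq:method-future}); the result $Q \multiunion \{\contF{f}{s'}\}$ again has at most one active task. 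Object creation is the special instance where $\Sigma[o]=\emptyset$ by the notational convention, which is vacuously suspended/empty, so the same argument applies. This closes the induction and establishes the invariant.
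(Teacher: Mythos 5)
Your proof is correct and takes essentially the same approach as the paper's: both arguments hinge on the fact that $Q$ is by definition a multiset of suspended or empty continuations, so Rule~\eqref{eq:actor.progress} is forced to select the unique active task when one exists, Rule~\eqref{eq:actor.future.invREv2} can only fire when no task is active, and each rule application adds back exactly one continuation. Your version merely makes the induction, the base case, and the per-object locality of the updates explicit where the paper argues informally.
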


\begin{proof}
  When an object $o$ is created, $\Sigma[o]$ is empty. Every time one
  of the
  Rules~\eqref{eq:actor.progress}--\eqref{eq:actor.future.invREv2} is
  applied to $o$, a continuation is added to $\Sigma[o]$. As long as
  $\Sigma[o]$ contains only suspended continuations, either rule can
  be applied to $o$. Consequently, a suspended task on $o$ can be
  activated or a new method invocation on $o$ can start to be
  executed. If there is one active task $\contF{f}{s}$ in $\Sigma[o]$,
  only Rule~\eqref{eq:actor.progress} can be applied to $o$. In this
  case, execution on $o$ is forced by Rule~\eqref{eq:actor.progress}
  to continue with this specific $\cont{s}$, i.e., the active task
  on $o$. 
\end{proof}

The proposition guarantees that each object has at most one active
task at any time. This captures the semantics of sequential execution
on an object between two suspension points.

\subsubsection{Well-Formedness}

Compared to Section~\ref{def:wellformedness-class}, well-formedness
must reflect the life cycle of asynchronous method calls with
completion and completion reaction events to ensure that return values
are not delivered and retrieved too early:
\[
  \begin{array}{c}
    \cdots\quad \newEvp{}{o''}{o,\many{v}}\quad \cdots \quad
    \invEvp{}{o'}{\many{v'},o,m,f} \quad \cdots \quad \invREvp{}{o}{\many{v'},o',m,f}
    \quad \cdots\\
    \cdots \quad \compEvp{}{o}{f,v}\quad \cdots\quad
    \compREvp{}{o'''}{f,v}\quad \cdots
  \end{array}
\]
where $o'$ and $o''$ can be the same object, and $o'''$ can be any
object except $o$. Up to tagging, the first four events are unique in
each trace, but not so the final completion reaction event. This is,
because any object can retrieve the value stored in a future as long
as it possesses the future's identifier.

The fourth equation in Figure~\ref{fig:well-formedness-future} %
is straightforward, because
\begin{enumerate*}[label=(\roman*)]
\item a \Simple{return} can only be encountered after an invocation
  reaction event, which is guaranteed to be unique by the third
  equation, and
\item Rule~\eqref{eq:actor.progress} makes sure that it is evaluated
  on the matching future and object.
\end{enumerate*}

% makes sure that a method will not return until the method is actually
% executed and if so, it only returns once.
%
The final equation ensures the value of a future can only be fetched
after the future is resolved.

\begin{figure}[t]\centering
    $\begin{array}[t]{r@{\hspace{2pt}}r@{\hspace{2pt}}l}
      % \wf{\epsilon}  &=& true \\
      % \wf{\consTr{\sigma}{sh}} &=& \wf{sh}\\
      \wf{\consTr{\newEvp{}{o'}{o,\many{v}}}{sh}}  &=&  \wf{sh} \land \not\exists o'',\many{v'}.\, \newEvp{}{o''}{o,\many{v'}} \in sh
      \\
      \wf{\consTr{\invEvp{}{o'}{\many{v},o,m,f}}{sh}}  &=& \wf{sh} \land \ 
															\exists o'', \many{v}'. \newEvp{}{o''}{o,\many{v}'}\in sh \ \land\\   &&\not\exists o''',\many{v}'',o'''',m'.\, \invEvp{}{o'''}{\many{v}'',o'''',m',f}\in sh	      
      \\
      \wf{\consTr{\invREvp{}{o'}{\many{v},o,m,f}}{sh}}  &=&  \wf{sh} \land \ \invEvp{}{o}{\many{v},o',m,f} \in sh 
                                                            \ \land \\
                     && \not\exists o'',\many{v'},o''',m'.\, \invREvp{}{o''}{\many{v'},o''',m',f}\in sh
      \\
      \wf{\consTr{\compEvp{}{o}{f,v}}{sh}}  &=&  \wf{sh} %\land \not\exists o', v', \compEvp{}{o'}{f,v'} \in sh\; \land \\
                  % &&  \exists \many{v'},o'',m'.\, \invREvp{}{o}{\many{v'},o'',m',f}\in sh
	\\
      \wf{\consTr{\compREvp{}{o}{f,v}}{sh}}  &=&  \wf{sh} \land \exists o',  \compEvp{}{o'}{f,v} \in sh 
    \end{array}$
  \caption{Well-Formedness for Actors with
    Futures.\label{fig:well-formedness-future}}
\end{figure}

\subsubsection{Global Trace Semantics}
\label{sec:global-active-objects}

\begin{definition}[Program Semantics for Active Objects]
	\label{def:global-trace-semantics-active-objects}
	Given a program $P$ with a method table $\Gi$ and a main block
	$\mathit{main}$, i.e.,
	$ \Sigma_{\mathit{init}}( o_{\mathit{main}}) =
	\{\contF{f_{\mathit{init}}}{\mathit{main}}\}$.
	Let $\sigma_\varepsilon$ denote the empty state, i.e.\
	$\dom(\sigma_\varepsilon)=\emptyset$.
	Let
        $sh_0,\,\Sigma_0\rightarrow\,sh_1,\,\Sigma_1\rightarrow\cdots$\
        be a maximal sequence obtained by the repeated application of
        the composition rules
        (Rules~\eqref{eq:actor.progress}--\eqref{eq:actor.future.invREv2}),
        starting with
	\begin{equation*}
		\consTr{\newEvp{}{o_{\mathit{main}}}{o_{\mathit{main}},\varepsilon}}{\langle\sigma_\varepsilon\rangle},\,\Sigma_{\mathit{init}}\enspace.
	\end{equation*}
If the sequence is finite, then it must have the form
	\[\consTr{\newEvp{}{o_{\mathit{main}}}{o_{\mathit{main}},\varepsilon}}{\langle\sigma_\varepsilon\rangle},\,\Sigma_{\mathit{init}}
	\to \dots \to sh,\Sigma^{\zero}\enspace,\]
	where $\Sigma^{\zero}(o)=\emptyset$ %\{\contF{f_1}{\zero},\ldots,\contF{f_n}{\zero}\}$ for some $f_i$ and $n>0$
	for all $o\in\dom(\Sigma)$.
	If the sequence is infinite, let $sh=\lim_{i\rightarrow\infty}sh_i$.
	In either case, $sh$ is a trace of $P$. The set of all such traces
	is denoted with $\traces{P,\sigma_\varepsilon}$.
\end{definition}

The new event at the start of a trace represents the creation, by the
system, of an initial object that runs the main
method. The well-formedness of events ensures that subsequently created
objects will not erroneously be the initial object, which is reserved
for code executed in the main method.

\begin{example}
  \label{ex:active.objects}
  Consider the following program $P$:	
 
  \smallskip
  \centering
  
  \begin{minipage}[t]{0.45\linewidth}
    \begin{simple}[name=abs]
class $C$ {
$\label{ex:abs.m}\quad m(n)$ {
$\label{ex:abs.add}\quad\quad n := n+1$;
$\label{ex:abs.return}\quad\quad$return $n$;
  }
}
    \end{simple}
  \end{minipage}
  \qquad
  \begin{minipage}[t]{0.45\linewidth}
    \begin{simple}[name=abs,firstnumber=last]
{ // main block
$\label{ex:abs.declare}\quad a$; $x$; $f$; $y$; 
$\label{ex:abs.assign}\quad a := 1$;
$\label{ex:abs.new}\quad x :=$ new $C$();
$\label{ex:abs.inv}\quad f := x!m(a)$;
$\label{ex:abs.await}\quad$await $f?$;
$\label{ex:abs.get}\quad y := f$.get;
}
    \end{simple}
  \end{minipage}
  
\end{example}

Let the global lookup table $\Gi$ for this program be
$\{ \langle C, \varepsilon, m(n) \{n := n+1; \Simple{return}~n;\} \rangle
\} $. In the following, we use $s_i$ to indicate the sequence of
statements of a method body starting from line~$i$.

We first show the abstract local traces.  The evaluation of the $main$
method is as follows,
where $s_{\it main}$ represents the body in
$main$, and
$s_{\ref{ex:abs.assign}}$ corresponds to $s_{\it main}[a\leftarrow
a', x \leftarrow x', f \leftarrow f', y \leftarrow y']$. 
\[
\begin{array}[h]{@{}r@{\hspace{2pt}}r@{\hspace{2pt}}l}
  	\valP{O,F}{\{\ a; x; f; y; s_{\it main}\ \}}  & = & 
\{ \emptyset \pop \concatTr{\consTr{\sigma[a' \mapsto 0, x' \mapsto 0, f' \mapsto 0, y' \mapsto 0]}{\langle
		\sigma\rangle}}{\cont{\{\ s_9\ \}}}\\
&&\ \ \mid a', x', f', y' \notin \dom(\sigma)  \}		\\
\valP{O,F}{\{\  s_{\ref{ex:abs.assign}}\ \}} & = & \valP{O,F}{s_{\ref{ex:abs.assign}}} \\
  \valP{O,F}{a' := 1} & = & \{\emptyset\pop\concatTr{\consTr{\update{\sigma}{a'}{1}}{\langle\sigma\rangle}}{\cont{\zero}}\} \\
\valP{O,F}{x' :=\Simple{new} \ C()} 
&=&\{\emptyset\pop\concatTr{\consTr{{\sigma}[x'\mapsto X,
		X\mapsto *]}{\newEvsV{\sigma}{X,\varepsilon}{\{X\}}}}{\cont{\zero}} 
\\
&&\hspace{2mm}\mid
X\not\in\dom(\sigma),\ \clookup{X} = C\} \\ 
\valP{O,F}{f':=x'!m(a')} 
&=&\{\emptyset\pop\concatTr{\consTr{{\sigma}[f'\!\mapsto\! F',F'\!\mapsto\!*]}{\invEvV{\sigma}{\valP{O,F}{a'},\valP{O,F}{x'},m,F'}{\{F'\}}}}{\cont{\zero}} \\ 
&&\hspace{2mm} \mid
F'\not\in\dom(\sigma)\} \\
 \valP{O,F}{\Simple{await}\,f'?} &=&\{\emptyset\pop\concatTr{\consTr{{\sigma}[V\mapsto
		\ast]}{\compREvV{\sigma}{\valP{O,F}{f'},V}{\{V\}}}}{\cont{\zero}}
\mid
V\not\in\dom(\sigma)\} 	\\	
\valP{O,F}{y':=f'.\Simple{get}} 
&=&\{\emptyset\pop\concatTr{\consTr{{\sigma}[y'\mapsto V,V\mapsto
		\ast]}{\compREvV{\sigma}{\valP{O,F}{f'},V}{\{V\}}}}{\cont{\zero}}\\
&&\ \ \mid V\not\in\dom(\sigma)\}
\end{array}
\]	
The evaluation of the body of method $m$ is shown below, where $sc$
represents $\{ n := n+1; \Simple{return}~n; \}$ and
$s_{\ref{ex:abs.add}}$ the content of the block
$sc\subst{n}{n'}$.
\[
\begin{array}[h]{@{}r@{\hspace{2pt}}r@{\hspace{2pt}}l}
  \valP{O,F}{m(n)\ \scope} &=& \{ \emptyset \pop
                               \invREvV{\sigma}{N, X, m,F'}{\{N, X, F'\}}\\
\multicolumn{3}{r}{\hspace{105pt}\concatTr{\consTr{\sigma[n'\!\mapsto\!N,  N\!\mapsto\! *, X\!\mapsto\! *, F'\! \mapsto\! *]\!}{}}{\!\cont{\scope\subst{n}{n'}}}
\!\mid\! n',\! N,\! X,\! F'\!\!\not\in\! \dom(\sigma) \}}\\
%\end{array}
% \]
%\vspace{-5ex}
% \[
%\begin{array}[h]{@{}r@{\hspace{2pt}}r@{\hspace{2pt}}l}
 % 
  \valP{O,F}{\{\  s_{\ref{ex:abs.add}}\ \}} & = & \valP{O,F}{s_{\ref{ex:abs.add}}} \\
  \valP{O,F}{n'\!\!:=\!n'\!+\!1} & = & \{\emptyset\pop\concatTr{\consTr{\update{\sigma}{n'}{\valP{O,F}{n'+1}}}{\langle\sigma\rangle}}{\cont{\zero}}\} \\
  \valP{O,F}{\Simple{return}~n'} &=&\{\emptyset\pop\concatTr{\compEv{\sigma}{F,\valP{O,F}{n'}}}{\cont{\zero}} \} \\              
\end{array}
\]

We now continue with the computation of a global concrete trace.  Let
$I_P$ be the initial state of~$P$.  We consider the state where the
statements in lines~\ref{ex:abs.declare}--\ref{ex:abs.assign} have
already been executed.  At this point, the concrete trace is
$
  sh=\consTr{\sigma}{\consTr{[a' \mapsto 0, x' \mapsto 0, f' \mapsto 0, y' \mapsto 0]}{\langle
      I_P\rangle}}
$
with $\sigma = [a' \mapsto 1, x' \mapsto 0, f' \mapsto 0, y' \mapsto 0]$, 
and the mapping $\Sigma$ is
$[\Omain \mapsto \{\contF{\Fmain}{s_{\ref{ex:abs.new}}}\}]$.
Only Rule~\eqref{eq:actor.progress} is applicable:

\gcondrule{ObjCreation-main}{
  \Sigma[\Omain]=\{\contF{\Fmain}{s_{\ref{ex:abs.new}}}\} \quad
  \sigma = \last(sh) 
  \\
  \emptyset\pop\concatTr{\consTr{{\sigma}[x'\mapsto X, X\mapsto *]}{\newEvsV{\sigma}{X,\_}{\{X\}}}}{\cont{s_{\ref{ex:abs.inv}}}}\in \valBP{\sigma}{\Omain,\Fmain}{s_{\ref{ex:abs.new}}}\\
  \rho=[X \mapsto o_m]  \quad
  \emptyset \text{ consistent}\quad
  \wf{sh'}
}{
  sh, \Sigma \to
  sh', 
  \Sigma[\Omain\mapsto \{\contF{\Fmain}{s_{\ref{ex:abs.inv}}}\}]
}
where $sh'$ is the concrete trace
\[
  \begin{array}{rl}
    &\chopTrSem{sh}{\rho(\consTr{{\sigma}[x'\mapsto X, X\mapsto *]}{\newEvsV{\sigma}{X,\varepsilon}{\{X\}}})}^{\Omain}\\
    =&\chopTrSem{sh}{\rho(\consTr{{\sigma}[x'\mapsto X, X\mapsto *]}{\consTr{\consTr{\sigma[X\mapsto *]}{\newEvsV{\sigma}{X,\varepsilon}{\{X\}}}}{\langle\sigma\rangle}})}^{\Omain}\\
    =&\langle I_P\rangle \curvearrowright \ldots \curvearrowright \sigma \curvearrowright \consTr{\sigma'}{\consTr{\sigma}{\newEvp{}{\Omain}{o_m,\varepsilon}}}   
  \end{array}
  \]
  with $\sigma' = \sigma[x'\mapsto o_m] = [a' \mapsto 1, x'\mapsto o_m, f' \mapsto 0, y' \mapsto 0]$.\footnote{We simplify the states in the concrete
    trace by removing the symbolic variables once they are instantiated.}
We must continue with $s_{\ref{ex:abs.inv}}$ by applying Rule~\eqref{eq:actor.progress}:
\gcondrule{Invocation-main}{
  \Sigma[\Omain]=\{\contF{\Fmain}{s_{\ref{ex:abs.inv}}}\} \quad
  \sigma' = \last(sh') \\
 \rho = [F' \mapsto f_m]  \quad
  \emptyset \text{ consistent} \quad %\\
  % \wf{\chopTrSem{sh}{\rho(\consTr{{\sigma'}[f\mapsto F',F'\mapsto *]}{\invEv{\sigma'}{1,o_m,m,F'}})}^{\Omain}}
  \wf{sh''}\\
  \emptyset\pop\concatTr{\consTr{{\sigma'}[f'\mapsto F',F'\mapsto
      *]}{\invEvV{\sigma'}{1,o_m,m,F'}{\{F'\}}}}{\contF{}{\Simple{await}\
      f'? ; s_{\ref{ex:abs.get}}}}\\
\multicolumn{1}{l}{\quad \in \valBP{\sigma'}{\Omain,\Fmain}{s_{\ref{ex:abs.inv}}}}
}{
  sh', \Sigma \to
  % 
  % \chopTrSem{sh'}{\rho(\consTr{{\sigma'}[f\mapsto F',F'\mapsto *]}{\invEv{\sigma'}{1,o_m,m,F'}})}^{\Omain},
  sh'', 
  \Sigma[\Omain\mapsto \{\contF{\Fmain}{\Simple{await}\ f'? ; s_{\ref{ex:abs.get}}}\}]
}
where $sh''$ is the concrete trace
\[
  \begin{array}{rl}
    &\chopTrSem{sh'}{\rho(\consTr{{\sigma'}[f'\mapsto F',F'\mapsto *]}{\invEvV{\sigma'}{1,o_m,m,F'}{\{F'\}}})}^{\Omain}\\
    =&\langle I_P\rangle \curvearrowright \ldots \curvearrowright \sigma' \curvearrowright  \consTr{\sigma''}{\consTr{\sigma'}{\invEvp{}{\Omain}{1,o_m,m,f_m}}}\\
  \end{array}
\]
with $\sigma'' = {\sigma'}[f'\mapsto f_m] = [a' \mapsto 1, x'\mapsto o_m, f' \mapsto f_m, y' \mapsto 0]$.

At this point, one could in principle choose to continue with the
\Simple{await} statement in $main$ with Rule~\eqref{eq:actor.progress}
or to start executing method $m$ on object~$o_m$ with
Rule~\eqref{eq:actor.future.invREv2}.  However,
Rule~\eqref{eq:actor.progress} is not applicable, because it does not
result in a well-formed trace before method~$m$ returns.  Therefore,
we proceed with the second option:

\gcondrule{Body-m}{
  \Sigma[o_m] = \emptyset \quad
  \lookup{m, \Gi} = m(n)\ \scope 
    \quad
    \sigma'' = \last(sh'')
    \quad
    f_m\in\FId \\
\hspace*{-6cm}    \emptyset \pop \invREvV{\sigma''}{N, X', m,F''}{\{N, X', F''\}}\\
    \qquad\concatTr{\consTr{\sigma''[n'\mapsto N, N\mapsto *, X'\mapsto *, F'' \mapsto *]}{}}{\cont{\{\  s_{\ref{ex:abs.add}}\ \}}} \in \valBP{\sigma''}{o_m,f_m}{m(n)\ \scope} 
    \\
    \rho=[N \mapsto 1, X' \mapsto \Omain, F'' \mapsto f_m] \quad
    \emptyset \text{ consistent}
    \quad
    \wf{sh_m}
  }{
    sh'', \Sigma \to
    sh_m,
    \Sigma[o_m\mapsto \{\contF{f_m}{\{\  s_{\ref{ex:abs.add}}\ \}}\}]
}
where the mapping $\Sigma$ is updated to $[\Omain\mapsto
\{\contF{\Fmain}{\Simple{await}\ f? ; s_{\ref{ex:abs.get}}}\}, o_m \mapsto
\{\contF{f_m}{\{\  s_{\ref{ex:abs.add}}\ \}}\}]$, and $sh_m$ is the concrete trace
\[
  \begin{array}{rl}
    &\chopTrSem{sh''}{\rho(\consTr{\sigma''[n'\mapsto N, N\mapsto *, X'\mapsto *, F'' \mapsto *]}{\invREvV{\sigma''}{N, X', m,F''}{\{N, X',m, F''\}}})^{o_m}}\\
      =&\langle I_P\rangle \curvearrowright \ldots \curvearrowright \sigma'' \curvearrowright \consTr{\sigma_m}{\consTr{\sigma''}{\invREvp{}{o_m}{1, \Omain, m,f_m}}}
  \end{array}
\]
with $\sigma_m = \sigma''[n'\mapsto 1]$.

We have to continue with
$\{\ s_{\ref{ex:abs.add}}\ \}$ from the method $m$.  Executing
$n' := n'+1$ in~$s_{\ref{ex:abs.add}}$ updates $\sigma_m$ to
$\sigma_m' = \sigma_m[n' \mapsto \valBP{\sigma_m}{o_m,f_m}{n'+1}] = [a' \mapsto 1, x'\mapsto o_m, f' \mapsto f_m, y' \mapsto 0, n' \mapsto 2]$, which extends
the concrete trace to $sh_m'$ as follows:
\[
  \langle I_P\rangle \curvearrowright \ldots \curvearrowright \sigma_m \curvearrowright \sigma_m'\ .
\]
We proceed with the \Simple{return} statement in method $m$ by
applying Rule~\eqref{eq:actor.progress} once again:

\gcondrule{Return-m}{
  \Sigma[o_m] = \{\contF{f_m}{\Simple{return}\;n'}\} 
    \quad
    \sigma_m' = \last(sh_m')
    \\
    \emptyset\pop\concatTr{\compEv{\sigma_m'}{f_m,2}}{\cont{\zero}} \in \valBP{\sigma_m'}{o_m,f_m}{\Simple{return}\;n'} 
    \quad
    \rho=[] \quad
    \emptyset \text{ consistent}
    % \\
    % \wf{\chopTrSem{sh_m'}{\rho(\compEv{\sigma_m'}{f_m,2})^{o_m}}}  
    \quad
    \wf{sh_m''}
  }{
  sh_m', \Sigma \to
  % \chopTrSem{sh_m'}{\rho(\compEv{\sigma_m'}{f_m,2})^{o_m}},
  sh_m'',
  \Sigma[o_m\mapsto \emptyset% \{\cont{\zero}\}
]
}
where $\Sigma[o_m\mapsto \emptyset]$ is the result of simplifying $\Sigma[o_m\mapsto \emptyset + \contF{f_m}{\zero}]$. The resulting concrete trace $sh_m''$ is:
\[
  \begin{array}{rl}
    &\chopTrSem{sh_m'}{\rho(\compEv{\sigma_m'}{f_m,2})^{o_m}}\\
    =&  \langle I_P\rangle \curvearrowright \ldots \sigma_m'
       \curvearrowright \ \compEvp{}{o_m}{f_m,2} \curvearrowright
       \sigma_m'\ .
  \end{array}
\]
% with $\sigma_m' = {\sigma_m}[n'\mapsto 2] = [a' \mapsto 1][x'\mapsto o_m][f'\mapsto f_m][n'\mapsto 2]$.

The mapping $\Sigma$ is now updated to
$[\Omain\mapsto \{\contF{\Fmain}{\Simple{await}\ f'? ;
  s_{\ref{ex:abs.get}}}\}, o_m \mapsto \emptyset] % \{\cont{\zero}\}]
$.  Executing
the \Simple{await} statement in $main$ at this point produces a
well-formed trace by Rule~\eqref{eq:actor.progress}:

\gcondrule{Await-main}{ \Sigma[\Omain] = \{\contF{\Fmain}{\Simple{await}\ f'?
    ; s_{\ref{ex:abs.get}}}\} \quad \sigma_m' = \last(sh_m'') \\
  \emptyset\pop\concatTr{\consTr{{\sigma_m'}[V\mapsto\ast]}{\compREvV{\sigma_m'}{f_m,V}{\{V\}}}}{\cont{s_{\ref{ex:abs.get}}}}
  \in \valBP{\sigma_m'}{\Omain,\Fmain}{\Simple{await}\ f'? ;
    s_{\ref{ex:abs.get}}}
  \\
  \rho=[V \mapsto 2] \quad \emptyset \text{ consistent} \quad
    \wf{sh'''}
}{
  sh_m'', \Sigma \to
  sh''', 
  \Sigma[\Omain\mapsto \{\contF{\Fmain}{s_{\ref{ex:abs.get}}}\}]
}
with the concrete trace $sh'''$:
\[
  \begin{array}{rl}
    &\chopTrSem{sh_m''}{\rho(\consTr{{\sigma_m'}[V\mapsto\ast]}{\compREvV{\sigma_m'}{f_m,V}{\{V\}}})^{\Omain}}\\
    =&  \langle I_P\rangle \curvearrowright \ldots \curvearrowright \sigma_m' \curvearrowright \compREvp{}{\Omain}{f_m,2} \curvearrowright \sigma_m' \curvearrowright  \sigma_m'\ .
  \end{array}
\]
The final state $\sigma_m'$ is actually a simplification of
$\sigma_m'[V\mapsto 2]$ by removing the concretised symbolic variable
$V$ and thus identical to $\sigma_m'$.

Finally, we execute the \Simple{get} statement, again by Rule~\eqref{eq:actor.progress}:

\gcondrule{Get-main}{ \Sigma[\Omain] =
  \{\contF{\Fmain}{s_{\ref{ex:abs.get}}}\} \quad \sigma_m' = \last(sh''')
  % \quad
  % \Fmain\in\FId
  \\
  \emptyset\pop\concatTr{\consTr{{\sigma_m'}[y'\mapsto V',V'\mapsto
      \ast]}{\compREvV{\sigma_m'}{\valBP{\sigma_m'}{\Omain,\Fmain}{f},V'}{\{V'\}}}}{\cont{\zero}}
  \in \valBP{\sigma_m'}{\Omain,\Fmain}{s_{\ref{ex:abs.get}}}
  \\
  \rho=[V' \mapsto 2] \quad \emptyset \text{ consistent} \quad
  \wf{sh_P} }{ sh''', \Sigma \to sh_P, \Sigma[\Omain\mapsto \emptyset
%  \{\cont{\zero}\}
] } which results the mapping
$\Sigma = [\Omain\mapsto \emptyset %\{\cont{\zero}\}
, o_m \mapsto \emptyset %\{\cont{\zero}\}
]$, and produces the concrete trace~$sh_P$:
\[
  \begin{array}{rl}
    &\chopTrSem{sh'''}{\rho(\consTr{{\sigma_m'}[y'\mapsto V',V'\mapsto \ast]}{\compREvV{\sigma_m'}{    \valBP{\sigma_m'}{\Omain,\Fmain}{f}\},V'}{\{V'\}}})^{\Omain}}\\
    =&\consTr{\sigma'}{\consTr{\sigma}{\consTr{\newEvp{}{\Omain}{o_m,\_}}{\consTr{\sigma}{\consTr{[a' \mapsto 0, x' \mapsto 0, f' \mapsto 0, y' \mapsto 0]}{\langle I_P\rangle}}}}}\\
    &\curvearrowright \consTr{\sigma''}{\consTr{\sigma'}{\invEvp{}{\Omain}{1,o_m,m,f_m}}}%
    \curvearrowright \consTr {\sigma_m'}{\consTr{\sigma_m}{\consTr{\sigma''}{\invREvp{}{o_m}{1, \Omain, m,f_m}}}}\\
    &\curvearrowright \consTr{\sigma_m'}{\compEvp{}{o_m}{f_m,2}}%
      \curvearrowright    \consTr{\sigma_m'}{\consTr{\sigma_m'}{\compREvp{}{\Omain}{f_m,2}}}\\
    &\curvearrowright \consTr{\sigma_P}{\consTr{\sigma_m'}{\compREvp{}{\Omain}{f_m,2}}}
  \end{array}
\]
with $\sigma_P= [a' \mapsto 1, x'\mapsto o_m, f'\mapsto f_m, y'\mapsto 2, n' \mapsto
2]$. 
	
%%% Local Variables:
%%% mode: latex
%%% TeX-master: "LAGC"
%%% End:

\subsubsection{ABS}
\label{sec:abs}

ABS is an actor-based executable modeling language
\cite{ABSFMCO10,Haehnle13,ABS-LangSpec} that falls in the class of
Active Object languages \cite{ActiveObjects17}. ABS is closely related
to the language in Figure~\ref{fig:actor.future}. Before we make this
relation precise, we mention the main features of ABS that have
\emph{not} been discussed in this paper:
\begin{description}
\item[Functional Expressions] ABS lets the user declare algebraic
  datatypes and provides a pure, functional language with pattern
  matching over these. An evaluation semantics for such a language is
  standard \cite{Mitchell96}. It is easily incorporated into states
  and their evaluation.
\item[Interfaces] ABS supports multiple implementations of
  interfaces, but no inheritance or dynamic dispatch. It is sufficient
  to equip the lookup table $\Gi$ with suitable selectors. 
\item[Modules, Traits] ABS has a simple module system. Modules do not
  have an operational semantics, but manage the name space. One can
  remove them by replacing relative with absolute names. ABS supports
  code reuse via traits: sets of method declarations that can be added
  to a class via a \emph{uses} clause. Like modules, traits can be
  assumed to have been resolved.
\item[Error Handling] ABS can throw and catch exceptions. The
  semantics of the corresponding statements is a combination of the
  case distinction rule~\eqref{eq:If} and the local jump
  rule~\eqref{eq:goto}. The design of the corresponding local
  evaluation rules is left as an exercise.
\end{description}

Otherwise, ABS is identical to the Active Object language discussed in
the present section. Specifically, ABS tasks are \emph{atomic by
  default}. Their execution is only suspended explicitly, either at
the end of a method, or with a \Simple{suspend} statement (equivalent to
``$\Simple{await}\,\Simple{true}$''), or by $\Simple{await}\,g$.

% ---------------------------------------------------------
%\advancedActorstrue
\ifadvancedActors

\subsubsection{Advanced Actor features}

Here we discuss two actors features that do not belong to ABS but LAGC is capable to encode.

\paragraph{Dataflow synchronisation on futures}
Dataflow explicit futures (see Godot paper for example) is a feature that changes the synchronisation pattern of futures. The principle is that a get on a future can only be resolved with a value that is not a future. If a future is resolved with another future, then we synchronise on the resolution of the other future before considering the future as resolved. This synchronisation requires a specific type system but allows to write easily programs that delegate the resolution of a future to another actor.

The change is as follows. We first suppose that we have an operator $\isfut(e)$ that returns true if and only if $e$ is a future. The well-formedness criteria for the completion reaction event is changed as follows:

\begin{multline*}
      \wf{\consTr{\compREvp{}{o}{v,v'}}{sh}}  =  \wf{sh} \land % \\
		\big( \neg \isfut(v) \land v=v'\\
		\lor (\exists o',  \compEvp{}{o'}{v,v''} \in sh  \land \wf{\consTr{\compREvp{}{o}{v'',v'}}{sh}} \big)
\end{multline*}

Here we benefit greatly from the decoupling introduced by our semantics: val states when to synchronise and what is the effect of synchronisation in terms of events, well-formedness states what it means for a computation to be completed, and thus when the synchronisation is resolved.

\paragraph{FIFO service of request}
In many actor and active object languages, there is a mailbox associated to each actor and it is guaranteed that messages are taken into account in the same order as they are received by the actor. One solution is to use the same approach as in Definition~\ref{def:wf-fifo}, instead of defining FIFO sending/reception of messages from a single source you can ensure that, for any active object, the order  in which methods are executed is the same as the order of invocation events targetting the given active object.

This corresponds to the following equation:
\begin{multline*}
      \wf{\consTr{\invREvp{}{o'}{\many{v},o,m,f}}{sh}}  =  \wf{sh} \land \ \invEvp{}{o}{\many{v},o',m,f} \in sh \land \\
\forall \invEvp{}{o''}{\many{v'},o',m',f'} \in sh .\,  
	\invEvp{}{o''}{\many{v'},o',m',f'} <_{sh} \invEvp{}{o}{\many{v},o',m,f} \\
		\implies \invREvp{}{o'}{\many{v'},o'',m',f'} \in sh
\end{multline*}
Strictly speaking, this ordering guarantees a ``mailbox behaviour'' where each message is pushed synchronously on a mailbox (there is one mailbox per actor) and asynchronously fetched from the mailbox. This ordering corresponds to what is implemented in most actor languages with FIFO service policy.

% Note : we discussed an alternative version with more structure and a queue in the state but its realisation is more complex and we did not fully formalise it
%An alternative is to modify the structure of the state in the composition rule so that the mailbox of the active object is encoded in the state:  $\Sigma[o]$ was either empty or of the form $Q\multiunion\{\cont{s}\}$. It becomes either a queue of invocations $Mb$ or of the form $(Mb,Q\multiunion\{\cont{s}\})$.
%
%One then needs two rules for pushing and pulling from the queue, coordinated

Note that these features are almost enough to have a semantics for the ASP language (or the ProActive library). The last thing to notice is that in these languages, futures are not explicitly typed and hus one should insert $\Simple{get}$ statements each time a variable that might contain a future is accessed. This method has already applied for the static analysis of ProActive programs. 

\fi

% ----------------------------------------------------------

\section{Related Work}
\label{sec:related-work}

We position our work within denotational semantics. The discussion
focuses on three different strands of work on trace semantics:
Semantics based on execution traces, on communication traces, and on
hybrid traces combining execution states and communication events.  To
first motivate the use of traces semantics, we start by drawing some
lines from state-transformer semantics.  We then consider related work
on execution traces and communication traces. We finally discuss
related work on hybrid traces, combining these directions.

\paragraph{State-transformer semantics.}
% From states to execution traces
State-transformer semantics explain a program statement as a
transition from one program state to another
\cite{Hoare69,apt09vscp}. This style of semantics elegantly hides
intermediate states in the transition, thereby providing an
abstraction from the inner, intermediate states visited in the actual
execution of statements. State-transformer semantics is fully abstract
for sequential while-programs, resulting in a compositional semantics
with respect to, for example, partial correctness behavior. However, for
parallel shared-variable languages the final outcome of a program may
depend on the scheduling of the different atomic sections, leading to
non-deterministic behavior. For a parallel statement, the state
transformer that results from the parallel composition of component
statements cannot be determined from the transformer for these
statements alone~\cite{Mosses06,brookes96ic}. Hence, state-transformer
semantics for parallel languages are not compositional; it is
necessary to capture the intermediate states at which interleaving may
occur. Mosses observes \cite{Mosses06} that by capturing these
intermediate states as so-called \emph{resumptions} \cite{hennessy79mfcs},
the resulting denotational semantics corresponds much more closely to
operational semantics \cite{Plotkin04}.

% From NoC paper:
Early work on verification of concurrent systems extended
state-transformer semantics with additional side-conditions, and is as
such non-compositional; for example, interference freedom tests were
used for shared variable concurrency \cite{owicki76acta} and
cooperation tests for synchronous message passing
\cite{apt80toplas}. Compositional approaches were introduced for
shared variables in the form of rely-guarantee \cite{Jones81a} and for
synchronous message passing in the form of assumption-commitment
\cite{misra81tse}. Extending these principles for compositional
verification, object invariants can be used to achieve modularity
(for example, \cite{jacobs05sefm}).

\paragraph{Execution traces}
The use of execution traces to describe program behavior can be
motivated by the need for additional structure in the semantics of
statements, such that parallel statements can be described
compositionally from the semantics of their components.  Brookes
developed a trace semantics for a shared-variable parallel language
based on \emph{transition traces} \cite{brookes96ic}. In his work, a
command is described by a sequence of state transformers reflecting
the different atomic blocks in the execution of the command, but
abstracting from the inner states of these blocks. Transition traces
are thus more abstract than resumptions that are discussed above.  The
semantics of parallel commands then corresponds to the union of these
segments, obtained by stitching together the state transitions
corresponding to the different atomic sections in different
ways. Brookes shows that this semantics, which can be realised at
different levels of granularity, is fully abstract.  In fact, we
considered this solution in our work, but decided against it because
it constructs infinitely large mathematical objects in order to
include all possible states in which the next atomic block can
start. Instead, we opted for continuation markers, to be resolved in
the composition, as well as symbolic traces that are concretised on
demand during the composition phase.  This ensures that all semantic
elements are finite, which makes the LAGC semantics easy and effective
to compute.

\paragraph{Communication traces}
Communication traces first appeared in the object-oriented setting
\cite{Dahl77} and then for CSP \cite{hoare85}. Soundararajan developed
an axiomatic proof system for CSP using histories and projections
\cite{soundararajan84toplas}, which was compositional and removed the
need for cooperation tests. Zwiers developed the first sound and
complete proof system using communication traces
\cite{zwiers89comp}. Jeffrey and Rathke introduced a semantics for
object-oriented programs based on communication traces, and showed
full abstraction for testing equivalence
\cite{jeffrey05tcs,jeffrey05esop}. Reasoning about asynchronous method
calls and cooperative scheduling using histories was first done for
Creol \cite{dovland05swste} and later adapted to Dynamic Logic
\cite{ahrendt12scp}.  Din introduced a proof system based on four
communication events, significantly simplifying the proof rules
\cite{dinJLA12}, and extended the approach to futures
\cite{DinOwe14,DinOwe15}. This four-event proof system, which forms
the basis for KeY-ABS \cite{DinBH15}, was the starting point for the
communication events used in our paper. This way, communication events
can always be local to one semantic object, and their ordering is
captured by well-formedness predicates. Compared to this line of work,
we introduce continuation markers and locally symbolic traces. This
allows us to constructively derive global traces, in contrast to, for
example, Din's work which, building on Soundararajan's work, simply
eliminates global traces which do not project correctly to the local
trace sets of the components.

\paragraph{Hybrid traces.}
Brookes' \emph{action traces} \cite{Brookes02} bear some similarity to
our work.  He aims at denotational semantics using collecting
semantics, explicitly represents divergence, synchronises
communication using events, and captures parallel execution of two
components by means of a so-called \emph{mutex fairmerge} which
ensures that both components get the chance to be executed.  Action
traces were used to develop a semantics for concurrent separation
logic \cite{Brookes07}, where scheduling is based on access to shared
resources with associated invariants and data races are
exposed. Brookes' work elegantly develops a trace semantics for
low-level programming mechanisms with lock resources. However, it does
not cover the dynamic spawning of processes, procedure calls, method
invocations, and similar topics covered in our work, which led to the
locally abstract, globally concrete formulation of hybrid trace
semantics.  In previous work \cite{DHJPT17} we used \emph{scheduling
  events} and well-formedness properties over scheduling events to
capture all legal interleavings among concurrent objects at a
granularity similar to that of action traces.  In contrast, parallel
execution in the present paper is based on a more fine-grained
interleaving of processes, exploiting continuations, by means of
different composition rules (for example,~\eqref{eq:actor.progress}
and~\eqref{eq:actor.future.invREv2}) to capture global and local
interleaving of processes. The approach of Brookes~\cite{Brookes02,DHJPT17} is
easily expressible within our framework.
We obtain the equivalent of the fairmerge operator by collecting all
traces that can be constructed by the composition rules from a given
concrete initial state.

In trace semantics, the composition of traces from parallel executions
can be formalised in different ways; e.g., as continuation semantics,
erasure semantics or collecting semantics. We have opted for the first
approach and captured the interleaved execution by means of local
continuations, such that the global trace is obtained by gradually
unfolding traces that correspond to local execution by means of
continuations.
Kamburjan has explored erasure semantics in a hybrid trace semantics
developed as a foundation for behavioral program logic
\cite{Kamburjan20,Kamburjan19}; this work uses
% builds on \cite{DHJPT17} with locally abstract and globally
% concrete traces but differs from our work by
an explicit representation of the heap which is exploited for
composition by introducing a fresh symbolic heap as a local
placeholder for execution in other objects at the scheduling points.
The introduction of fresh symbolic states for merging local behaviors
can equivalently be replaced by a set collecting all possible concrete
states after the scheduling point, such that the composition of local
behaviors can be done by selecting the appropriate, compatible
concrete traces rather than by instantiating a symbolic trace. This
approach has been studied in the context of hybrid trace semantics in
\cite{KDHJ20}.

Recent work \cite{ABHKMS17} on similar communication structures for
ASP/ProActive uses parameterised labelled transition systems with
queues to model interaction with futures in a fine-grained,
operational manner. In contrast, our work with traces allows futures
to be abstracted into communication events and well-formedness
conditions. We are unaware of previous work on programming language
semantics that captures different communication patterns as
well-formedness constraints over trace semantics; these constraints
are in fact \emph{orthogonal} to the building blocks of the trace
semantics and allow to study language behavior ranging over
architectures that support different forms of communication within a
single semantic framework.

More recently, \cite{Xia2020} \cite{Xia2020} defined a data
structure called interaction trees that enable a denotational and
somehow functional definition of the semantics for imperative programs
where interaction trees encode the interaction with the
environment. An interaction tree constitutes a form of trace that
expresses the effects of the program but abstract away from
states. The authors provide co-inductive principles for reasoning on
diverging programs, which is more precise than our current handling of
infinite traces.  Their semantics can be considered as more
compositional than ours, because no concretisation is needed, but the
approach is only valid for sequential programs.

The pure trace-based proof system of ABS \cite{DinOwe14,DinOwe15}
requires strong hiding of local state: the state of other objects can
only be accessed through method calls, so shared state is internal and
controlled by cooperative scheduling. Consequently, specifications can
be purely local. More expressive specifications require significantly
more complex proof systems, for example modifies clauses in Boogie
\cite{jacobs05sefm}, fractional permissions \cite{HeuleLMS13} in
Chalice~\cite{Leino09}, or dynamic frames in KeY \cite{Mostowski20}.
To specify fully abstract interface behavior these systems need to
simulate histories in an ad hoc manner, see
\cite[Figure~1]{jacobs05sefm}.  A combination of permission-based
separation logic~\cite{Amighi15} and histories has recently been
proposed for modular reasoning about multi-threaded
concurrency~\cite{Zaharieva-Stojanovski14}. The motivation for our
work stems from our aim to devise compositional proof systems to
verify protocol-like behavior for object-oriented languages
\cite{DHJPT17} by combining communication histories for ABS with
the trace modality formulas of Nakata et al.\
\cite{NakataUustalu15,BubelCHN15}.

\section{Future Work and Conclusion}
\label{sec:future-work-concl}

\subsection{Future Work}
\label{sec:future-work}

The original motivation for the present work was to create a modular
semantics that aligns well with the kind of program logics used in
deductive verification~\cite{HaehnleHuismann19}. In
Sections~\ref{sec:dynamic-logic-simple} and \ref{sec:interleaving} we
reached this goal by providing a program logic and calculus for the
shared variable language and by giving a compact soundness proof
(Thm.~\ref{thm:soundness}) for the sequential fragment. This result
can and should be extended to a state-of-art calculus for the language
in Section~\ref{sec:actor.future} \cite{KDHJ20}.

An obvious direction for future work is to fully mechanize the LAGC
semantics in a proof assistant \cite{Isabelle02,Coq04}. In fact, all
definitions and theorems from Sections~\ref{sec:basics},
\ref{sec:while}, and~\ref{sec:interleaving} have been mechanized and
proven\footnote{See \url{https://gitlab.com/Niklas_Heidler/mechanization-of-lagc-semantics-in-isabelle} and \cite{Heidler21}.} in Isabelle/HOL. The
mechanization is \emph{executable} in the sense that all traces in the
examples of the mentioned sections can be generated automatically from
the HOL theories. We plan to extend the Isabelle/HOL mechanization to
cover also Sections~\ref{sec:multicore}, \ref{sec:actor.class}, and
\ref{sec:actor.future}.

In Sections~\ref{sec:interleaving}--\ref{sec:case-studies} we showed
that a wide variety of parallel programming concepts can be formalised
in LAGC style with relative ease. It would be interesting to explore
how far this can be carried. On the one hand, one could try to
formalise the semantics of a major programming language, such as Java
or C. On the other hand, one could try to apply the LAGC framework to
weakly consistent memory models \cite{AdveHill93} or to the target the
language of concurrent separation logic \cite{Brookes07}. A further
possible extension concerns programs with non-terminating, atomic
segments, see Section~\ref{sec:atomic}.
A mechanized, executable LAGC semantics opens the possibility of
\emph{early prototyping} new semantics of concurrent and distributed
languages with relative ease.

We rely heavily on well-formedness of traces, but we did not discuss
how to define properties on possibly infinite sets of infinite traces
with events, for example, notions related to fairness. This is a
complementary problem to trace generation, and not the focus of the
present paper.

Our approach makes composition of local rules easy, and extension of
the language straightforward. Global trace composition on the contrary
relies on well-formedness criteria for the whole trace; even if this
criteria is defined modularly, it is not compositional in the sense
that a trace can become invalid by extension of the criteria or
parallel composition. This is often unavoidable because the semantics
of many concurrent models like CCS and $\pi$-calculus are by nature
very much sensitive to the execution context. However, depending on
the language and the form of concurrency, different compromises could
be found, defining a greater part of the concurrent semantics in a
symbolic and compositional way. For example our semantics of futures
is similar to message passing but one could probably specify new
composition tools that better take into account the single-assignment
property ensured by futures.

\subsection{Conclusion}
\label{sec:conclusion}

The semantics of concurrent programming languages is inherently a
technically demanding subject. In the best case, a formal semantics
can illuminate the design space created by the choice of different
concurrency concepts, it shows the consequences of these choices, and
it makes different version comparable. This is only possible in a
semantic framework that enforces uniform and modular definitions, and
it is what we strove to achieve with LAGC: the central design decision
is to strictly separate local, sequential computations and their
parallel composition. Also, we decided to render local evaluations
abstract. In this way, one and the same \emph{schematic} semantic
evaluation rule can be re-used for any initial state, executing
processor, and destiny of the result. While abstract local rules are
not a theoretical necessity, they drastically simplify the complexity
of definitions.

A central technical problem to address in a locally-globally divided
setup is to ensure that enough context information is available when
composing concurrent behavior. Instead of computing \emph{all}
possible states, in which an atomic segment can continue, we pass the
remaining code to be executed as a continuation. Again, this
constitutes a considerable technical simplification compared to the
former option. For example, in Section~\ref{sec:actor.future} we
characterised the behavior of active objects concisely with a suitable
definition of a continuation pool.

But mere continuation is not sufficient: one needs to orchestrate
different local evaluations within a global trace, for example, to
ensure a method is called before it is being executed. This is
achieved by suitable events emitted during local
evaluation. Orchestration of local computations by events leads to a
further separation of concerns: many concurrency models can be
characterised in a declarative manner by well-formedness of events
inside traces, as shown in Section~\ref{sec:communication-patterns}.

We believe the achieved separation of concerns, locally abstract
evaluation---trace composition with a continuation
pool---orchestration of computations by well-formedness, provides a
flexible and usable semantic framework to formalise and compare
concurrent languages in. It is close modern deductive verification
calculi and could even be fully mechanised.

\subsection*{Acknowledgment}

We would like to thank the following people for carefully reading
drafts of this paper and for spotting several omissions, errors,
typos, and inaccuracies: Lukas Gr\"atz, Dilian Gurov, Niklas Heidler,
Eduard Kamburjan, and Marco Scaletta.

\bibliographystyle{plain}
%\bibliography{../refs,../reiner}

\end{document}

%%% Local Variables:
%%% mode: latex
%%% TeX-master: "LAGC"
%%% End: